\documentclass{article} 
\usepackage{iclr2027_conference,times}

\usepackage{graphicx}
\usepackage{fancyhdr}
\usepackage{hyperref}
\usepackage{url}
\usepackage{subcaption}
\usepackage{amsmath}
\usepackage{amsthm}
\usepackage{amsfonts}
\usepackage{algorithm}
\usepackage{algorithmic}
\usepackage{booktabs}

\usepackage{amsmath,amsfonts,bm}

\def\eqref#1{equation~\ref{#1}}

\def\1{\bm{1}}

\DeclareMathAlphabet{\mathsfit}{\encodingdefault}{\sfdefault}{m}{sl}
\SetMathAlphabet{\mathsfit}{bold}{\encodingdefault}{\sfdefault}{bx}{n}

\fancypagestyle{labpaperfirst}{
  \fancyhf{}
  \fancyhead[L]{%
    \includegraphics[width=4.2in]{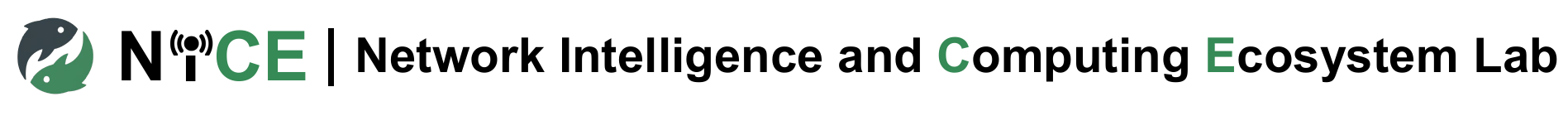}
  }
  \renewcommand{\headrulewidth}{0pt}
}

\fancypagestyle{labpaper}{
    \fancyhf{}
    \renewcommand{\headrulewidth}{0pt}
    \renewcommand{\footrulewidth}{0pt}
}

\newtheorem{assumption}{Assumption}
\newtheorem{proposition}{Proposition}

\theoremstyle{remark}
\newtheorem{remark}{Remark}

\title{
TwinJEPA: Action-Preferred Predictive Representations
for Goal-Conditioned Control
}

\author{
Feiran You$^{1}$, \quad Hongyang Du$^{1,*}$
\\[4pt]
The University of Hong Kong\\[4pt]
}

\iclrfinalcopy

\begin{document}

\maketitle
\thispagestyle{labpaperfirst}

\begingroup
\renewcommand{\thefootnote}{*}
\footnotetext{Corresponding author.}
\endgroup
\thispagestyle{labpaperfirst}

\pagestyle{labpaper}

\begin{abstract}

Joint-Embedding Predictive Architectures (JEPAs) have recently emerged as a
promising paradigm for representation learning by predicting future latent
states without reconstructing observations. Recent work has adapted JEPA-style latent prediction to offline zero-shot control through action-conditioned temporal prediction, enabling representations to capture long-horizon dynamics from fixed behavioral data. However, transition-level predictive objectives largely supervise actions in isolation and provide limited information about which actions are preferable when similar states admit different goal-conditioned outcomes. We introduce \textbf{TwinJEPA}, a framework for learning action-preferred predictive representations by augmenting JEPA-based control with offline-mined action-preference supervision. TwinJEPA identifies approximately matched states across offline trajectories and constructs preference pairs via goal-conditioned reward relabeling. It then learns two complementary objectives: reward-gap regression, which preserves the magnitude of outcome differences, and preference classification, which captures the relative ordering of alternative actions. Both objectives are used only during training and incur no additional inference-time cost. We evaluate TwinJEPA across long-horizon navigation and continuous-control
benchmarks, with both state-based and pixel-based observations. TwinJEPA yields positive benchmark-level mean differences across all matched state-based evaluations, while analyses across domains and observation modalities indicate that larger gains tend to arise when local action alternatives provide more informative outcome contrasts. The results suggest that local action-comparison supervision can complement temporal prediction by encouraging JEPA-based representations to retain both long-horizon temporal structure and fine-grained distinctions among locally observed actions for offline zero-shot control.

\end{abstract}
\section{Introduction}

Joint-Embedding Predictive Architectures (JEPAs) have emerged as a promising
paradigm for representation learning by predicting target representations
in latent space rather than reconstructing detailed
observations~\cite{assran2023self}.
By learning through latent prediction, JEPAs encourage representations to
retain structured and predictive information while suppressing observation
details less relevant to the prediction problem~\cite{chen2026got}.
This principle is particularly appealing for sequential decision-making,
where actions naturally condition how the system state evolves over time.
Action-conditioned latent prediction can therefore capture behavioral
dynamics and long-horizon temporal structure from sequential experience.
Recent methods such as TD-JEPA instantiate this idea for offline control
through action-conditioned temporal prediction~\cite{bagatella2025td}.

When applied to decision-making, however, temporal prediction alone does not
fully specify what information a representation should preserve about
alternative actions.
A transition-wise predictive objective learns how an observed action relates
to its subsequent outcome, but does not explicitly characterize how it
differs from plausible alternatives in a similar local state.
This distinction is important because control depends not only on modeling
how observed behaviors evolve over time, but also on preserving information
that separates actions leading to different downstream outcomes.
Two actions can originate from nearly identical states while producing
substantially different consequences, yet a standard temporal-prediction
objective supervises them independently.
As a result, a representation can capture long-horizon behavioral structure
while leaving fine-grained distinctions among locally available actions only
indirectly constrained~\cite{ding2025understanding}.

\begin{figure*}[t]
    \centering
    \includegraphics[width=\textwidth]{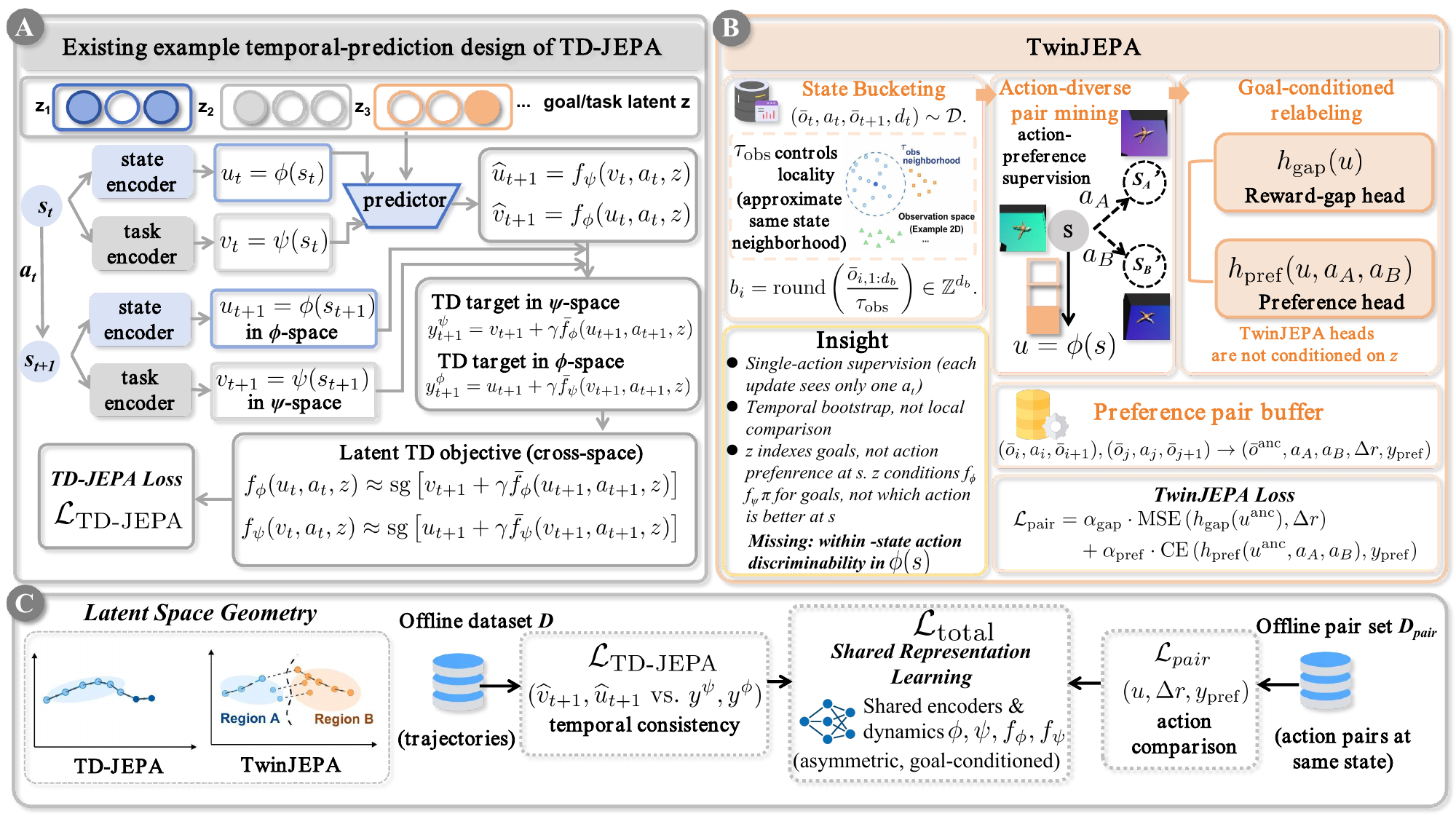}
    \caption{
    \textbf{Overview and key insight of TwinJEPA.}
    \textbf{A:} A transition-wise temporal-prediction objective
    such as TD-JEPA learns goal-conditioned temporal structure from
    action-conditioned transitions, but does not explicitly compare alternative
    actions in similar states.
    \textbf{B:} TwinJEPA mines action-diverse transition pairs from local state
    neighborhoods and derives reward-gap and preference targets through
    goal-conditioned relabeling.
    \textbf{C:} Temporal prediction and pairwise supervision jointly train the
    shared representation to preserve trajectory-level temporal structure while
    encoding local distinctions between actions associated with different
    goal-conditioned outcomes.
    The auxiliary heads are used only during training and introduce no
    additional inference-time computation.
}
    \label{fig:large-figure}
\end{figure*}

To address this gap, we propose \textbf{TwinJEPA}, a JEPA-based framework
for learning action-preferred predictive representations by augmenting
temporal prediction with local action-comparison supervision mined from
offline behavioral data.
TwinJEPA strengthens fine-grained distinctions among alternative actions in
similar local states while preserving the long-horizon structure captured by
the predictive backbone.
It groups transitions into approximate state neighborhoods and retains pairs
with sufficiently different actions.
Goal-conditioned reward relabeling then identifies an outcome contrast for
each pair, yielding two complementary supervisory signals:
reward-gap regression captures the magnitude of the outcome difference,
whereas preference classification captures its ordering.
These signals are applied through lightweight auxiliary heads, while the
temporal-prediction backbone, policy-learning procedure, and task-inference
mechanism remain unchanged.
TwinJEPA therefore introduces comparative information into the learned
representation without human preference labels, online interaction, or
inference-time computation, as illustrated in
Figure~\ref{fig:large-figure}.

TwinJEPA is complementary to several related lines of work.
JEPA-based control methods such as TD-JEPA primarily supervise
representations through transition-wise prediction along observed
trajectories~\cite{bagatella2025td}, whereas TwinJEPA adds explicit
comparisons between alternative actions in locally similar states.
Offline and zero-shot representation-learning methods likewise seek reusable
control representations, but do not necessarily impose such local
action-comparison structure~\cite{rao2026skyjepa,yang2026reward}.
TwinJEPA also differs from preference-based RL: rather than learning a reward
model or directly optimizing a policy from preference feedback, it derives
pairwise targets automatically from offline data and uses them solely as
auxiliary representation-learning supervision.
Additional discussion is provided in Appendix~\ref{sec:related-work}.

We evaluate TwinJEPA under controlled matched protocols across long-horizon
navigation and continuous-control benchmarks, including OGBench AntMaze
~\cite{park2025ogbench}, D4RL Maze2D, and DMC-based Walker, Cheetah, and
Quadruped~\cite{tassa2018deepmind}.
Matched-seed analyses assess effect consistency and uncertainty.
Across state-based evaluations, TwinJEPA yields positive mean differences on
all benchmarks, with substantial variation in magnitude and seed-level
consistency.
Task-level analyses show that these gains are structured rather than uniform:
AntMaze and Maze2D exhibit goal-specific changes, Walker and Cheetah improve
more on selected control objectives, and Quadruped shows positive mean
differences across all four tasks and five matched seeds.
We also evaluate the same action-preference objective with pixel observations
on ExORL to assess its applicability across modalities.
Component ablations show that reward-gap regression and preference
classification contribute differently across domains, with no component or
combination uniformly dominating across settings.
Our contributions are summarized as follows:
\begin{itemize}

\item
We introduce \textbf{TwinJEPA}, a framework for learning action-preferred predictive representations by augmenting JEPA-based
temporal supervision with explicit local action comparisons.
The resulting objective preserves long-horizon temporal structure while
strengthening fine-grained distinctions among alternative actions in locally
similar states, without changing the downstream control or inference
procedure.

\item
We develop a fully offline approach for deriving comparative supervision from
existing behavioral data.
TwinJEPA mines action-diverse transitions from local state neighborhoods and
uses goal-conditioned relabeling to construct two distinct targets,
reward-gap magnitude and action-preference ordering, without human preference
annotations or additional environment interaction.
We further characterize these objectives theoretically through reward-gap
decodability and local action ranking on the mined comparison distribution.

\item
We provide controlled matched-seed evaluations across navigation and
continuous-control benchmarks with both state-based and pixel-based
observations.
TwinJEPA yields positive mean differences across all matched state-based
evaluations, while task-level, seed-level, and component analyses reveal
substantial cross-domain structure in the resulting effects and in the
contributions of the two comparative signals.

\end{itemize}
\section{Preliminaries}
\label{sec:preliminaries}

\subsection{Offline Goal-Conditioned and Zero-Shot Control}
\label{sec:offline-gc}

We consider offline goal-conditioned control from a fixed dataset
$
\mathcal{D}
=
\left\{
(o_t,a_t,o_{t+1},\delta_t)
\right\}_{t=1}^{|\mathcal D|},
$
where $o_t$ and $a_t$ denote the observation and action, respectively, and
$\delta_t\in\{0,1\}$ indicates episode termination.
No additional environment interaction is available during representation or
policy learning.
The goal is to learn reusable behaviors conditioned on a target task at
deployment~\cite{bortkiewicz2025accelerating}. Let $h(o_t)$ denote the learned state representation and $z$ a latent
goal or task variable, with goal-conditioned policy
$
\pi_\eta(a_t \mid h(o_t),z).
$
In the zero-shot setting, the downstream task is unspecified during
representation learning. Task latents are instead sampled during offline
training,
$
z\sim p(z),
a_t\sim
\pi_\eta(\cdot\mid h(o_t),z),
$
yielding a family of latent-conditioned behaviors~\cite{sobal2026learning}.
The corresponding control objective is
\begin{equation}
\label{eq:zero-shot-objective}
\eta^\star
=
\arg\max_{\eta}
\mathbb{E}_{z\sim p(z)}
\left[
\sum_{t=0}^{T}
\gamma^t r_z(o_t,a_t)
\right],
\end{equation}
where $r_z$ denotes the reward associated with task latent $z$.
At deployment, a task representation $z^\star$ is inferred from
task-specific information and supplied to the pretrained policy,
$
a_t
\sim
\pi_{\eta^\star}
\left(
\cdot\mid h(o_t),z^\star
\right),
$
without additional policy optimization or online fine-tuning. TwinJEPA focuses on learning $h$ while retaining the policy and zero-shot
task-inference mechanisms of the underlying temporal-prediction framework
and modifying only the supervision used for representation learning.

\subsection{JEPA-Based Temporal Predictive Control}
\label{sec:temporal-predictive}

JEPAs learn representations by predicting future embeddings rather than
reconstructing observations
~\cite{lecun2022path,klindt2026does,destrade2025value}.
This latent-prediction principle is attractive for control because it
emphasizes predictive structure without reconstructing every pixel- or
state-level detail
~\cite{wang2026embodied,chen2025vl,assran2023self}.
TD-JEPA applies this principle to offline goal-conditioned control through
asymmetric cross-space temporal prediction~\cite{bagatella2025td}.
Two encoders produce
$
u_t=\phi_{\theta_\phi}(o_t),
\qquad
v_t=\psi_{\theta_\psi}(o_t),
$
and action- and task-conditioned predictors estimate future representations
in the opposite latent spaces,
$
\widehat v_{t+1}
=
f_\phi(u_t,a_t,z),
\widehat u_{t+1}
=
f_\psi(v_t,a_t,z).
$
Momentum networks provide bootstrapped targets
$y_t^\phi$ and $y_t^\psi$, yielding
\begin{equation}
\mathcal L_{\rm TD}
=
\left\|
\widehat v_{t+1}
-
\operatorname{sg}(y_t^\phi)
\right\|_2^2
+
\left\|
\widehat u_{t+1}
-
\operatorname{sg}(y_t^\psi)
\right\|_2^2,
\label{eq:tdjepa-loss}
\end{equation}
where $\operatorname{sg}(\cdot)$ denotes stop-gradient. TD-JEPA also applies
representation regularization $\mathcal L_{\rm orth}$.
Since TwinJEPA leaves both terms unchanged, we define
$
\mathcal L_{\rm base}
=
\mathcal L_{\rm TD}
+
\mathcal L_{\rm orth}
$
as the representation-learning backbone.
Regularizer details are deferred to the appendix. Importantly, $\mathcal L_{\rm TD}$ is transition-wise.
For two transitions
$(o_i,a_i,o_i')$ and $(o_j,a_j,o_j')$ with locally similar observations but
different actions, the objective trains each prediction against its future
target, without directly comparing the goal-conditioned outcomes of
$a_i$ and $a_j$.
Thus, temporal prediction constrains each observed action individually while
leaving the relative structure among local action alternatives only
indirectly specified.
TwinJEPA adds this missing comparative signal without replacing the
temporal-prediction backbone.
Section~\ref{sec:method} describes how local action pairs are mined and how
reward-gap and preference targets are added to the representation objective.
\section{Method}
\label{sec:method}

TwinJEPA augments the temporal-prediction backbone in
Section~\ref{sec:temporal-predictive} with offline-mined local
action-comparison supervision, while leaving the temporal objective,
representation regularization, policy, and task-inference procedure unchanged.
It consists of (i) mining and goal-conditioned relabeling of action-diverse
transition pairs and (ii) lightweight auxiliary heads for comparative
supervision.
Both components are used only during training and add no inference-time cost.

\subsection{Offline Local Action-Comparison Mining}
\label{sec:cf-mining}

To complement the transition-wise objective in Eq.~\ref{eq:tdjepa-loss},
TwinJEPA constructs a fixed pair buffer
$\mathcal{D}_{\mathrm{pair}}$ containing different actions observed in locally
similar states.
The buffer is constructed once from the offline dataset before representation
learning.
\paragraph{Local state neighborhoods.}
Exact state revisits are rare in continuous-control datasets, so we approximate
local state similarity using a fixed matching representation
$q(\bar{o})$.
For the state-based experiments in this paper,
$q(\bar{o})=\bar{o}_{1:d_b}$ selects a task-relevant subset of the normalized
observation; for example, $d_b=2$ corresponds to the XY position in AntMaze,
whereas the full proprioceptive state can be used in continuous-control
domains.
Given a neighborhood resolution $\tau_{\mathrm{obs}}>0$, transition $i$ is
assigned to
\begin{equation}
\label{eq:method-bucket}
b_i
=
\operatorname{round}
\left(
\frac{q(\bar{o}_i)}{\tau_{\mathrm{obs}}}
\right).
\end{equation}
Transitions sharing a bucket index are treated as approximate local matches.
The locality parameter $\tau_{\mathrm{obs}}$ controls the trade-off between
state similarity and pair availability: smaller values yield tighter
neighborhoods but fewer comparisons, while larger values increase pair
density at the cost of greater state variation.
We therefore treat the matching representation and neighborhood resolution
as pair-construction hyperparameters; benchmark-specific settings and
$\tau_{\mathrm{obs}}$ sensitivity results are reported in
Appendix~\ref{app:tol-sensitivity}. Within each neighborhood, we retain action-diverse pairs satisfying
$
\|a_i-a_j\|_2 \geq \epsilon_a,
$
where $\epsilon_a$ filters out nearly identical actions.
The resulting pairs are locally matched in state while remaining sufficiently
distinct in action.

\paragraph{Goal-conditioned comparative targets.}
Consider a retained local pair $(i,j)$ with
$a_A=a_i$ and $a_B=a_j$.
We construct comparative targets using the same offline goal-relabeling
mechanism available during representation learning.
Specifically, candidate goals are sampled from states contained in the fixed
offline dataset (e.g., through future-state relabeling) and do not include
held-out evaluation goals.
Thus, the pair construction uses only training-time information and does not
introduce downstream-task or evaluation-goal leakage. For each candidate relabeled goal $k$, let
$r_A^{(k)}$ and $r_B^{(k)}$ denote the goal-conditioned rewards associated
with the corresponding next states.
We select the relabeled goal that maximizes the reward contrast between the
two locally matched transitions,
\begin{equation}
\label{eq:method-k-star}
k^\star
=
\arg\max_k
\left|
r_A^{(k)}-r_B^{(k)}
\right|.
\end{equation}
Because all candidate goals use the same goal-conditioned reward definition
and numerical scale, this maximization compares outcome contrasts on a common
scale rather than favoring goals with intrinsically larger reward ranges.
It therefore selects the most action-discriminative supervision available
from the local pair. Using $k^\star$, we define the reward-gap magnitude
$
\Delta r
=
\left|
r_A^{(k^\star)}
-
r_B^{(k^\star)}
\right|
$
and the preferred-action label
$
y_{\mathrm{pref}}
=
\mathbb{I}
\left[
r_B^{(k^\star)}
>
r_A^{(k^\star)}
\right],
$
where $y_{\mathrm{pref}}=0$ favors $a_A$ and
$y_{\mathrm{pref}}=1$ favors $a_B$.
Pairs with negligible reward contrast,
$\Delta r\leq 10^{-6}$, are removed.
Both targets are obtained entirely from goal-relabeled offline transitions,
requiring neither human preference annotations, evaluation-task information,
nor additional environment interaction.

\paragraph{Local anchor construction.}
For each matched pair, we use the first transition as a deterministic anchor,
$\bar{o}^{\mathrm{anc}}=\bar{o}_i$, and store
$
\left(
\bar{o}^{\mathrm{anc}},
a_A,
a_B,
\Delta r,
y_{\mathrm{pref}}
\right)
\in \mathcal{D}_{\mathrm{pair}}.
$
As the two transitions satisfy the local matching criterion,
$\bar{o}_i$ and $\bar{o}_j$ provide nearby state contexts for comparing
$a_A$ and $a_B$.
Using a single anchor yields a consistent representation of
each local action comparison.
Pair construction continues until the budget $N_{\mathrm{pair}}$ is reached,
after which $\mathcal{D}_{\mathrm{pair}}$ is fixed for representation
learning.

\paragraph{Extension to pixel observations.}
The same comparative supervision extends to pixel-based ExORL experiments.
We construct pairs from cached visual observations using the
modality-specific procedure described in Appendix~\ref{app:rgb-obs}, while retaining the
same reward-gap and preference targets.
This separates the comparative learning objective from the particular
matching representation used to identify local transition pairs.

\subsection{Pairwise Comparative Supervision}
\label{sec:cp-supervision}

Given a sample
$
\left(
\bar{o}^{\mathrm{anc}},
a_A,
a_B,
\Delta r,
y_{\mathrm{pref}}
\right)
\sim
\mathcal{D}_{\mathrm{pair}},
$
the anchor observation is encoded using the same state encoder as the
temporal-prediction backbone,
$
u^{\mathrm{anc}}
=
\phi_{\theta_\phi}
\left(
\bar{o}^{\mathrm{anc}}
\right).
$
TwinJEPA applies two lightweight auxiliary heads to the shared
representation. The reward-gap head is conditioned only on the anchor representation and
predicts the magnitude of the local outcome contrast,
$
\widehat{\Delta r}
=
h_{\mathrm{gap}}
\left(
u^{\mathrm{anc}}
\right).
$
This state-only objective encourages the representation to retain information
predictive of outcome-contrast magnitude under the mined comparison
distribution, rather than directly modeling a particular action pair. In contrast, the preference head explicitly conditions on both candidate
actions together with the anchor representation,
$
\widehat p
=
h_{\mathrm{pref}}
\left(
u^{\mathrm{anc}},
a_A,
a_B
\right),
$
where $\widehat p\in(0,1)$ denotes the predicted probability that $a_B$ is
preferred over $a_A$.
The resulting pairwise objective is
\begin{equation}
\label{eq:cp-objective}
\mathcal{L}_{\mathrm{pair}}
=
\alpha_{\mathrm{gap}}
\left(
\widehat{\Delta r}
-
\Delta r
\right)^2
+
\alpha_{\mathrm{pref}}
\operatorname{BCE}
\left(
\widehat p,
y_{\mathrm{pref}}
\right).
\end{equation}

The two terms capture complementary aspects of the same mined local
comparison.
Reward-gap regression provides magnitude-sensitive supervision by encouraging
the representation to preserve information predictive of local outcome
contrasts, whereas preference classification provides pair-specific ordering
supervision over $(a_A,a_B)$.
We do not assume that either signal, or their combination, dominates across
all environments; their relative utility is evaluated in
Section~\ref{sec:component-analysis}. The task latent $z$ is not supplied to either auxiliary head.
Goal-dependent information enters only through the relabeled targets
$\Delta r$ and $y_{\mathrm{pref}}$.
This prevents trivial conditioning on the task representation: the reward-gap
objective promotes state information predictive of local contrast magnitude,
while the preference objective combines this state context with the candidate
actions to learn their relative ordering.
The original temporal predictors and policy remain conditioned on $z$ as in
TD-JEPA.
\subsection{Joint Representation Learning}
\label{sec:training}

Let
$
\mathcal{L}_{\mathrm{base}}
=
\mathcal{L}_{\mathrm{TD}}
+
\mathcal{L}_{\mathrm{orth}}
$
denote the unchanged TD-JEPA representation objective introduced in
Section~\ref{sec:temporal-predictive}.
TwinJEPA optimizes
$
\mathcal{L}_{\mathrm{repr}}
=
\mathcal{L}_{\mathrm{base}}
+
\lambda
\mathcal{L}_{\mathrm{pair}},
$
where $\lambda$ controls the contribution of comparative supervision.
Setting $\lambda=0$ exactly recovers the matched temporal-prediction baseline
under the same architecture, offline data, optimizer, policy learner, and
training budget. At each training iteration, a transition minibatch is independently sampled
from $\mathcal{D}$ and a comparison minibatch from
$\mathcal{D}_{\mathrm{pair}}$.
The encoders and temporal predictors are updated through
$\mathcal{L}_{\mathrm{base}}$, while the shared encoder additionally receives
gradients from $\mathcal{L}_{\mathrm{pair}}$.
The auxiliary heads are optimized only through their corresponding pairwise
targets.
The Flow Behavior Cloning (FlowBC) actor
~\cite{park2026horizon} is trained using the same policy-learning procedure
as the baseline, with the state representation detached during the actor
update.
Target encoders and predictors are updated using the same exponential moving
average procedure as TD-JEPA. 
The complete state-based training procedure is provided in Appendix~\ref{detail} (Algorithm~\ref{alg:twinjepa}).

\subsection{Inference and Computational Overhead}
\label{sec:deployment}

TwinJEPA changes only the representation-learning objective; downstream
control and test-time inference remain identical to the matched TD-JEPA
baseline.
The learned representation is used by the FlowBC actor as
$
a_t
=
\pi_\eta
\left(
\phi_{\theta_\phi}(\bar{o}_t),
z
\right).
$
For held-out goals, the task latent $z$ is inferred using the same procedure
as in TD-JEPA, enabling zero-shot control without policy optimization or
online fine-tuning.
The mined comparison pairs and auxiliary heads are required only during
offline training and are not used at inference time.
Consequently, TwinJEPA incurs additional offline pair-construction and
training cost, while preserving the baseline inference-time computational
cost.

\section{Experimental Results}
\label{sec:results}

\subsection{Experimental Setup and Statistical Protocol}
\label{sec:exp-setup}

\paragraph{Benchmarks.}
We evaluate TwinJEPA on OGBench AntMaze, D4RL Maze2D, and DMC-based
Walker, Cheetah, and Quadruped, covering navigation and continuous-control
settings with diverse reward structures, embodiments, and task geometries.
All methods are trained from fixed offline replay buffers.
We use TD-JEPA as the matched baseline, reproduced from the authors'
implementation~\cite{bagatella2025td}.
Within each matched comparison, TD-JEPA and TwinJEPA share the architecture,
offline data, optimization schedule, training budget, random seeds,
task-inference procedure, and evaluation protocol, with TwinJEPA adding only
the offline-mined action-comparison objective.
Table~\ref{tab:published-benchmark-context} additionally reports published
results for Laplacian, ICVF, HILP, FB, RLDP, BYOL, and BYOL-$\gamma$ where
available, providing broader comparison with existing methods.

\paragraph{Evaluation and statistical protocol.}
We use benchmark-specific protocols aligned with TD-JEPA evaluations and
retain a shared auxiliary configuration across domains.
AntMaze and Cheetah use ten matched seeds, while Maze2D, Walker, and
Quadruped use five; training budgets, task-inference settings, local matching
parameters, and sensitivity analyses are reported in the appendix.
For matched seeds, we analyze paired differences
$
d_i = R_i^{\mathrm{Twin}} - R_i^{\mathrm{TD}},
$
where $R_i$ denotes either success rate or mean task return.
We report the mean difference $\bar d$, percentile $95\%$ confidence intervals
from $20{,}000$ paired bootstrap resamples, exact two-sided sign-flip
permutation $p$-values, and the number of seeds with $d_i>0$.
For small $n$, the exact permutation distribution is discrete, so confidence
interval exclusion and $p<0.05$ need not coincide.

\begin{table*}[t]
\centering
\caption{
Matched-seed comparison of TwinJEPA and TD-JEPA.
The 95\% CIs are based on $20{,}000$ paired bootstrap resamples, and
$p_{\mathrm{2s}}$ denotes the exact two-sided sign-flip permutation
$p$-value.
Succ. (pp) denotes success rate in percentage points; returns are reported
on their native scales.
}
\label{tab:primary-paired-results}

{\small
\setlength{\tabcolsep}{3.2pt}
\renewcommand{\arraystretch}{1.06}

\resizebox{\textwidth}{!}{%
\begin{tabular}{@{}llccccccc@{}}
\hline
Setting
& Metric
& $n$
& TD-JEPA
& TwinJEPA
& $\bar d$
& 95\% CI
& $p_{\mathrm{2s}}$
& $d_i>0$ \\
\hline

AntMaze medium-navigate
& Succ. (pp) & 10 & 68.8 & 73.2 & $+4.4$
& $[-7.4,+16.4]$ & $0.518$ & $5/10$ \\

AntMaze large-navigate
& Succ. (pp) & 10 & 42.2 & 43.0 & $+0.8$
& $[-7.0,+8.4]$ & $0.883$ & $5/10$ \\

AntMaze medium-stitch
& Succ. (pp) & 10 & 51.4 & 52.0 & $+0.6$
& $[-13.6,+13.6]$ & $0.959$ & $6/10$ \\

AntMaze large-stitch
& Succ. (pp) & 10 & 37.8 & 41.2 & $+3.4$
& $[-3.0,+8.6]$ & $0.336$ & $7/10$ \\
\hline

AntMaze-4 seed mean
& Succ. (pp) & 10 & 50.1 & 52.4 & $+2.3$
& $[-2.8,+7.8]$ & $0.447$ & $5/10$ \\

Maze2D-medium
& Succ. (pp) & 5 & 9.2 & 14.4 & $+5.2$
& $[+1.6,+10.0]$ & $0.125$ & $4/5$ \\

DMC Walker, Fixed-HP
& Return & 5 & 774.3 & 782.7 & $+8.4$
& $[-9.9,+26.6]$ & $0.438$ & $3/5$ \\

Cheetah, state
& Return & 10 & 638.0 & 671.7 & $+33.7$
& $[-1.0,+68.8]$ & $0.104$ & $7/10$ \\

Quadruped, $\tau_{\mathrm{obs}}=1.0$
& Return & 5 & 559.4 & 599.8 & $+40.4$
& $[+20.6,+60.3]$ & $0.0625$ & $5/5$ \\
\hline
\end{tabular}%
}
}

\end{table*}

\begin{figure*}[t]
\centering
\includegraphics[width=\linewidth]
{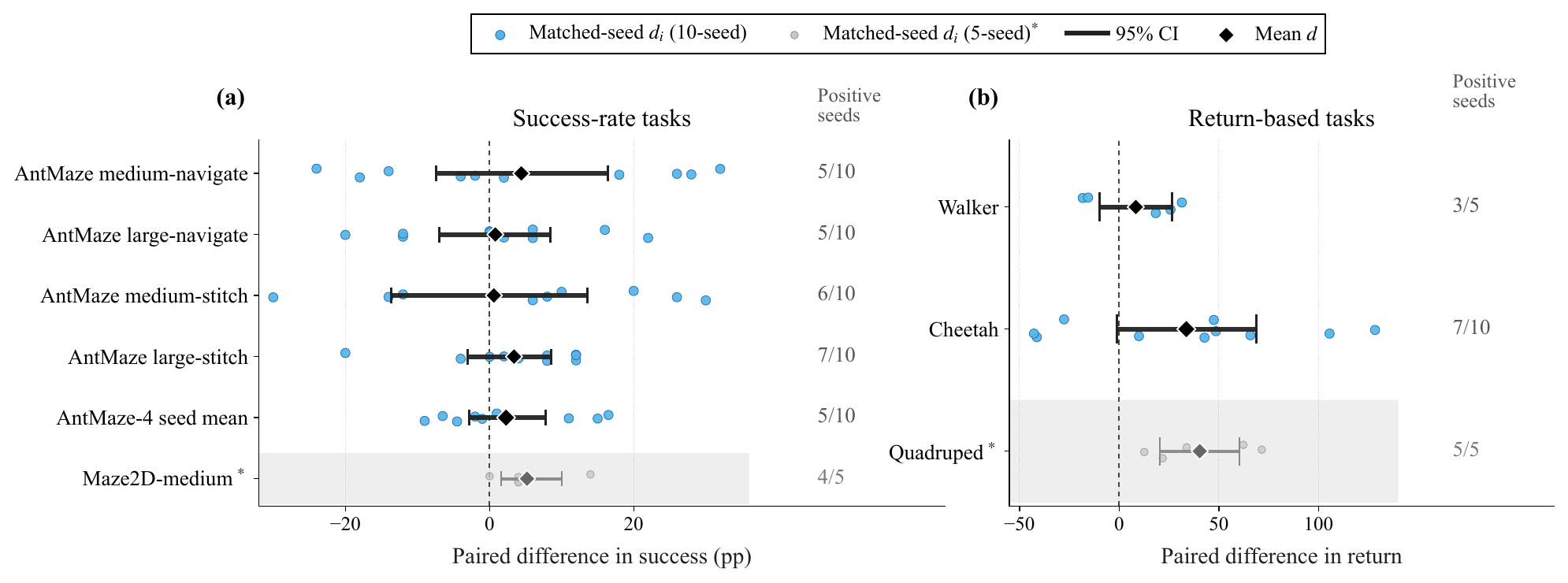}
\caption{
Matched-seed effects across paper-aligned evaluations.
\textbf{(a)} Success-rate differences for the four AntMaze domains,
the AntMaze-4 aggregate, and Maze2D-medium.
\textbf{(b)} Return differences for Walker, Cheetah, and Quadruped.
Small markers show paired differences, diamonds their means, and horizontal
bars paired-bootstrap $95\%$ confidence intervals.
The dashed line marks zero; right-hand annotations give the number of
positive seeds.
}
\label{fig:paired-effect-forest}
\end{figure*}

\paragraph{Evaluation questions.}
Our experiments address three questions.
\textbf{RQ1:} Does TwinJEPA improve the matched TD-JEPA baseline under
controlled state-based evaluations across navigation and continuous-control
domains?
\textbf{RQ2:} How are the resulting effects distributed across domains,
downstream tasks, and random seeds?
\textbf{RQ3:} How do reward-gap regression and preference classification
contribute to these effects across domains?
\subsection{Matched State-Based Evaluation (\textbf{RQ1})}
\label{sec:matched-main}

To address RQ1, we compare TwinJEPA with the matched TD-JEPA baseline under
paper-aligned protocols across AntMaze, Maze2D, Walker, Cheetah, and
Quadruped.
Within each comparison, the two methods share the architecture, offline data,
optimization schedule, training budget, random seeds, task-inference
procedure, and evaluation protocol, differing only in the additional
action-comparison objective. TwinJEPA yields positive mean differences in every state-based evaluation
reported in Table~\ref{tab:primary-paired-results}.
On AntMaze, all four domain-level means increase, producing an AntMaze-4
change from $50.1\%$ to $52.4\%$ ($\bar d=+2.3$ pp).
Maze2D increases from $9.2\%$ to $14.4\%$ ($\bar d=+5.2$ pp).
Among the continuous-control domains, Walker increases from $774.3$ to
$782.7$ ($\bar d=+8.4$), Cheetah from $638.0$ to $671.7$
($\bar d=+33.7$), and Quadruped from $559.4$ to $599.8$
($\bar d=+40.4$).

The magnitude and uncertainty of these effects vary across benchmarks.
Quadruped shows the strongest matched result, with all five paired
differences positive and a paired-bootstrap $95\%$ confidence interval of
$[+20.6,+60.3]$.
Cheetah also exhibits a substantial positive mean difference, while the
AntMaze and Walker effects are smaller and more variable across seeds.
These results establish a consistent positive direction at the level of
benchmark means under the shared matched protocol; RQ2 examines how these
effects decompose across individual tasks and seeds.
\begin{table}[t]
\centering
\caption{
Published benchmark context on evaluations overlapping with TwinJEPA.
Panel~A reports zero-shot success (\%) on the four OGBench AntMaze domains.
Panel~B reports state- and pixel-based returns on Walker, Cheetah, and
Quadruped.
Published rows provide cross-paper context; the final two rows use our
paper-aligned protocols.
}
\label{tab:published-benchmark-context}
\begin{minipage}[t]{0.445\linewidth}
\vspace{0pt}
\centering
{\footnotesize
\setlength{\tabcolsep}{1.35pt}
\renewcommand{\arraystretch}{1.02}

\begin{tabular}{@{}lccccc@{}}
\toprule
\multicolumn{6}{c}{\textit{A. OGBench AntMaze}} \\
\midrule
Method & mn & ln & ms & ls & Avg. \\
\midrule

Laplacian
& 50.0 & 21.6 & 21.4 & 11.8 & 26.2 \\

ICVF$^{\ast}$
& 79.8 & 58.4 & 39.0 & 13.2 & 47.6 \\

HILP
& 83.6 & 52.6 & 50.6 & 12.2 & 49.8 \\

FB
& 73.0 & 36.8 & 70.4 & 49.8 & 57.5 \\

RLDP
& 74.6 & 36.4 & 58.4 & 19.6 & 47.3 \\

BYOL$^{\ast}$
& 58.4 & 26.6 & 60.6 & 25.8 & 42.9 \\

BYOL-$\gamma^{\ast}$
& 51.4 & 21.8 & 45.6 & 20.2 & 34.8 \\

TD-JEPA$_{\mathrm{pub}}$
& 70.4 & 57.2 & 61.6 & 40.6 & 57.4 \\

\midrule

TD-JEPA$_{\mathrm{match}}$
& 68.8 & 42.2 & 51.4 & 37.8 & 50.1 \\

TwinJEPA
& 73.2 & 43.0 & 52.0 & 41.2 & 52.4 \\

\bottomrule
\end{tabular}
}
\end{minipage}
\hfill
\begin{minipage}[t]{0.545\linewidth}
\vspace{0pt}
\centering
{\footnotesize
\setlength{\tabcolsep}{1.15pt}
\renewcommand{\arraystretch}{1.02}

\begin{tabular}{@{}lcccccc@{}}
\toprule
\multicolumn{7}{c}{\textit{B. ExORL / DMC}} \\
\midrule

&
\multicolumn{2}{c}{Walker}
&
\multicolumn{2}{c}{Cheetah}
&
\multicolumn{2}{c}{Quadruped}
\\
\cmidrule(lr){2-3}
\cmidrule(lr){4-5}
\cmidrule(l){6-7}

Method
& State & RGB
& State & RGB
& State & RGB
\\
\midrule

HILP
& -- & 422.8
& -- & 333.0
& -- & 513.9 \\

FB
& -- & 324.4
& -- & 622.4
& -- & 475.4 \\

RLDP
& -- & 576.1
& -- & 605.3
& -- & 551.1 \\

BYOL-$\gamma^{\ast}$
& -- & 648.3
& -- & 679.8
& -- & 570.0 \\

TD-JEPA$_{\mathrm{pub}}$
& 785.2 & 738.9
& 688.7 & 706.0
& 691.4 & 626.7 \\

\midrule

TD-JEPA$_{\mathrm{match}}$
& 774.3 & 716.6
& 638.0 & 694.8
& 559.4 & 645.0 \\

TwinJEPA
& 782.7 & 741.8
& 671.7 & 686.4
& 599.8 & 650.7 \\

\bottomrule
\end{tabular}
}
\end{minipage}

\end{table}


\begin{figure}[t]
\centering
\includegraphics[
    width=0.9\linewidth
]{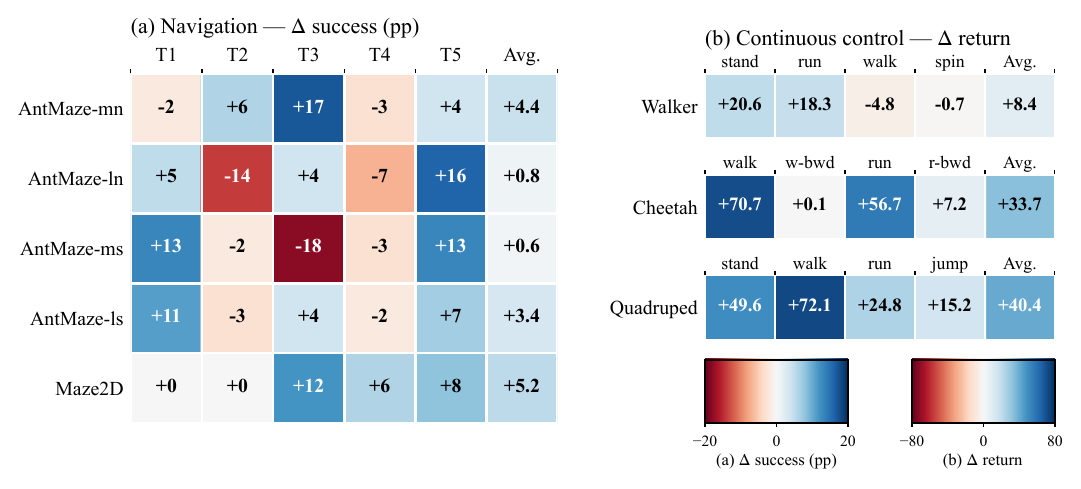}
\caption{
Task-level matched differences across navigation and continuous-control
benchmarks.
Panel~(a) reports TwinJEPA--TD-JEPA zero-shot success differences
(percentage points) for AntMaze and Maze2D.
Panel~(b) reports return differences for Walker, Cheetah, and Quadruped.
The final column gives the corresponding domain mean.
}
\label{fig:cross-domain-task-heatmap}
\end{figure}

\subsection{Effect Structure Across Domains and Tasks (\textbf{RQ2})}
\label{sec:cross-domain-results}

RQ1 establishes positive benchmark-level mean differences under the matched
state-based protocols.
RQ2 examines how these effects vary across tasks and random seeds.
Figure~\ref{fig:cross-domain-task-heatmap} shows that the gains are generally
structured rather than uniform: improvements are often concentrated on a
subset of tasks, while others change little or decrease. The navigation domains exhibit strongly goal-dependent effects.
Across AntMaze, positive and negative task-level differences coexist, with
the largest changes concentrated on particular goals.
Maze2D shows a similar pattern: its aggregate gain comes entirely from goals
on which the baseline already achieves nonzero success, while the remaining
goals are unchanged.
This suggests that local action-comparison supervision is most effective when
the mined comparisons capture distinctions relevant to the evaluated goal. Continuous-control domains show a related but distinct structure.
Walker and Cheetah concentrate their larger gains on selected locomotion
objectives, while other tasks change little.
Figure~\ref{fig:continuous-task-structure} shows that these larger effects
often occur where the matched baseline retains substantial performance
headroom.
Quadruped, however, improves across all four tasks despite substantial
variation in baseline return, indicating that headroom alone does not explain
the observed effects.
The benefit of comparative supervision therefore appears to depend on how
the mined comparison distribution aligns with the downstream objective. Seed-level results show a similar contrast
(Figure~\ref{fig:paired-effect-forest}).

AntMaze and Walker exhibit substantial run-to-run variation, Cheetah shows a
more consistently positive tendency, Maze2D is positive on most matched
seeds, and Quadruped is positive on all five matched runs.
Thus, the observed effects range from localized task-specific gains to broader
patterns that persist across both tasks and seeds. Overall, the effect of local action-comparison supervision is
task- and data-dependent rather than a uniform domain-wide shift.
Reward structure, offline coverage, local action diversity, and the
informativeness of the mined comparisons may all contribute to this
heterogeneity.
This is consistent with TwinJEPA as an auxiliary representation-learning
objective: downstream gains depend on whether the distinctions encoded by
the mined comparison distribution are useful for the evaluated task.
\begin{figure}[t]
\centering
\includegraphics[
    width=0.98\linewidth
]{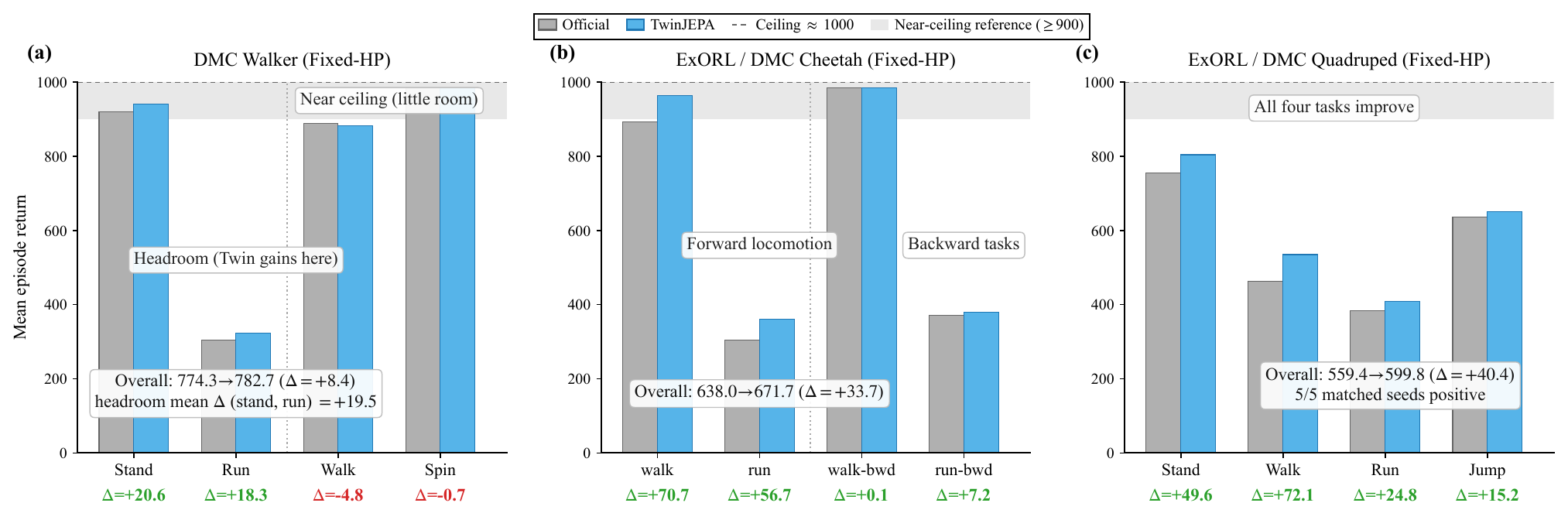}
\caption{
Task-level absolute returns on the three continuous-control domains.
Bars compare Official TD-JEPA and TwinJEPA under the paper-aligned
Fixed-HP protocols, with annotations showing
$\Delta=\mathrm{TwinJEPA}-\mathrm{TD\mbox{-}JEPA}$.
The shaded region marks near-ceiling performance.
}
\label{fig:continuous-task-structure}
\end{figure}

\subsection{Component Contributions Across Domains (\textbf{RQ3})}
\label{sec:component-analysis}

To address RQ3, we ablate reward-gap regression and preference classification
to examine how the two comparative signals contribute across domains.
Reward-gap regression supervises the magnitude of the selected outcome
contrast, whereas preference classification supervises its ordering.
Their empirical contributions are not interchangeable and vary across
evaluation settings. On AntMaze, the component ordering differs across domains.
Preference-only reaches $77.0\%$ on medium-navigate, exceeding the
$73.2\%$ full-objective result, whereas gap-only reaches $57.4\%$ on
medium-stitch, compared with $52.0\%$ for the full objective.
Thus, neither component is consistently dominant across the two AntMaze
settings. Cheetah exhibits a different pattern.
On the matched five-seed ablation subset, the full objective reaches $671.4$,
while gap-only and preference-only reach $654.5$ and $629.8$, respectively.
Here, gap-only retains more of the full-objective improvement, while the
combined objective achieves the highest mean return among the evaluated
variants.

Quadruped shows the largest separation between the full and isolated
objectives.
Full TwinJEPA reaches $599.8$, whereas gap-only and preference-only obtain
$554.4$ and $548.4$, respectively, both below the matched TD-JEPA baseline
of $559.4$.
In this setting, neither isolated objective retains the improvement observed
with the full configuration. The ablations therefore reveal no fixed ordering between reward-gap and
preference supervision across domains.
Their combination is strongest in Quadruped, while isolated components can
be more competitive on AntMaze and Cheetah.
We retain
$(\lambda,\alpha_{\mathrm{gap}},\alpha_{\mathrm{pref}})
=(0.1,1.0,0.5)$
as a shared cross-domain configuration rather than selecting
domain-specific component weights.
Additional component, $\lambda$-sensitivity, uncertainty, and seed-level
results are reported in
Appendix~\ref{app:cross-domain-component-summary} and the corresponding
benchmark-specific sections.


\section{Conclusion}
\label{sec:conclusion}

We introduced \textbf{TwinJEPA}, a JEPA-based framework for learning
\emph{action-preferred predictive representations} for zero-shot
goal-conditioned control.
TwinJEPA augments transition-level temporal prediction with offline-mined
local action-comparison supervision, constructing action-diverse pairs from
fixed offline data and using goal-conditioned reward relabeling to derive
reward-gap magnitude and preference-ordering targets.
These objectives encourage the learned representation to preserve
action-discriminative information not explicitly captured by transition-wise
latent prediction, while introducing no additional inference-time computation.
Through matched-seed evaluations across long-horizon navigation and
continuous-control benchmarks, we find that the effect of comparative
supervision is positive on average in several paper-aligned settings but
varies substantially across domains and tasks.
The strongest improvements appear in environments where the mined comparisons
provide informative distinctions between alternative actions, whereas other
settings exhibit larger uncertainty across seeds.
Further analyses show that reward-gap regression and preference
classification provide distinct and domain-dependent sources of supervision,
rather than universally interchangeable improvements.
These findings suggest that temporal prediction and action comparison capture
complementary aspects of control-relevant representation learning.
While JEPA-style objectives model how behaviors evolve over time,
offline preference supervision provides additional information about which
actions are more suitable under a given goal.

\subsection*{AI use statement}

Generative AI tools were used only for language editing and presentation,
including improving grammar, clarity, concision, and readability.
They were not used to develop the research methodology, formulate
mathematical claims, implement the proposed method, generate or process
experimental data, conduct experiments, or interpret the results.
All AI-assisted edits were reviewed and verified by the authors, who take
full responsibility for the final content of the paper.

\subsection*{Reproducibility statement}

We provide the methodological, implementation, and evaluation details needed
to reproduce the reported results in the main paper and appendix.
The main text specifies the TwinJEPA objective, local action-comparison
mining procedure, goal-conditioned relabeling strategy, auxiliary losses,
joint representation-learning objective, and deployment procedure.
It also defines the matched-seed evaluation protocol, including paired
differences, paired-bootstrap confidence intervals, and exact two-sided
sign-flip permutation tests.

Appendix~B.1 provides the complete state-based training procedure
(Algorithm~1), shared hyperparameters, hardware configurations, training
budgets, and evaluation environments.
Appendix~B.3 reports sensitivity to the local observation-matching tolerance
used during pair construction.
Benchmark-specific sections document the corresponding datasets, matching
parameters, training horizons, seed configurations, task-level evaluations,
and implementation choices.
Appendix~C.2 further verifies the reproduction of the Official TD-JEPA
baseline used for the matched AntMaze comparison.

Appendices~C--H provide the underlying task-level and seed-level results,
component ablations, auxiliary-weight and hyperparameter sensitivity analyses,
and benchmark-specific numerical breakdowns.
Appendix~H additionally specifies the pixel-based ExORL protocol, including
the observation construction, encoder setting, training configuration, and
matched-seed evaluation procedure.
Unless otherwise stated, each matched comparison uses the same backbone
architecture, offline data, optimization schedule, training budget, random
seeds, task-inference procedure, and evaluation protocol for Official TD-JEPA
and TwinJEPA, with the action-comparison objective as the systematic
difference between the two methods.
Finally, Appendix~I states the assumptions and provides the propositions and
proofs underlying the theoretical analysis.

\bibliography{iclr2027_conference}
\bibliographystyle{iclr2027_conference}
\clearpage
\appendix\label{appendix}
\setcounter{secnumdepth}{2}
\newcommand{\appsectionentry}[2]{%
  \makebox[3.2em][l]{\bfseries\ref{#1}}%
  \textbf{#2}
  & Page~\pageref{#1} \\[0.15em]
}

\newcommand{\appsubsectionentry}[2]{%
  \hspace*{1.5em}%
  \makebox[3.2em][l]{\ref{#1}}%
  #2
  & Page~\pageref{#1} \\
}

\begin{center}
    {\LARGE\bfseries Appendix}
\end{center}
\vspace{0.35em}
\hrule
\vspace{0.6em}

\noindent{\bfseries Appendix Overview}\par
\vspace{0.3em}

\begingroup
\normalsize
\setlength{\tabcolsep}{0pt}
\renewcommand{\arraystretch}{1.08}

\begin{tabular*}{\textwidth}
{@{\extracolsep{\fill}}lr@{}}

\appsectionentry
  {sec:related-work}
  {Extended Related Work} 

\appsectionentry
  {sec:additional-details}
  {Additional Experimental Details}

\appsubsectionentry
  {detail}
  {Implementation Details}

\appsubsectionentry
  {app:cross-domain-component-summary}
  {Cross-Domain Auxiliary-Objective Summary}

  \appsubsectionentry
{app:tol-sensitivity}
{Sensitivity to Local Matching Tolerance}

\appsectionentry
  {app:antmaze-extended}
  {AntMaze Extended Evaluation}

\appsubsectionentry
  {app:antmaze-task}
  {Task-Level and Domain-Level Summary}
  
\appsubsectionentry
{app:antmaze-reproduction}
{Official TD-JEPA Reproduction Check}

\appsubsectionentry
  {app:antmaze-ablation}
  {Additional Ablation Results}

\appsubsectionentry
  {app:mn-dynamics}
  {Learning Dynamics on Medium-Navigate}

\appsectionentry
  {app:dmc}
  {DMC Walker Extended Evaluation}

\appsubsectionentry
{app:walker-grid}
  {Hyperparameter-Grid Evaluation}

\appsectionentry
  {sec:cheetah}
  {Cheetah Offline Control on ExORL}

\appsubsectionentry
{app:cheetah-task-results}
{Complete Task-Level Results}

\appsubsectionentry
{app:cheetah-ablation}
{Auxiliary-Objective and Weight Sensitivity}

\appsectionentry
  {sec:quadruped}
  {Quadruped Extended Evaluation}

\appsubsectionentry  
{app:quad-seeds}
{Seed-Level Results}

\appsubsectionentry
{app:quad-task}
{Task-Level Numerical Breakdown}

\appsubsectionentry
{app:quadruped-ablation}
{Additional Auxiliary-Objective Sensitivity}

\appsectionentry
  {sec:maze2d-evaluation}
  {D4RL Maze2D Extended Evaluation}

\appsectionentry
  {app:rgb-obs}
  {Pixel-Based ExORL Evaluation}

\appsectionentry
  {sec:theory-twin}
  {Theoretical Properties of TwinJEPA}

\appsectionentry
  {sec:additional-results-summary}
  {Summary of Additional Results}

\end{tabular*}

\endgroup

\vspace{0.8em}

\section{Extended Related Work}
\label{sec:related-work}

\paragraph{Representation learning for reinforcement learning.}
A broad line of work studies how learned representations can improve
reinforcement learning by preserving task-relevant structure while discarding
nuisance variation.
Existing approaches include contrastive objectives, predictive
representation learning, prototype-based methods, and representation
objectives derived from behavioral or dynamical equivalence.
In offline and zero-shot RL, methods such as HILP~\cite{park2024foundation},
FB~\cite{touati2021learning}, and regularized latent-dynamics approaches
~\cite{jajoo2025regularized} learn reusable representations from fixed
behavioral data.
%
%
TwinJEPA is complementary to these approaches.
Rather than replacing the underlying representation objective, it augments a
temporal predictive representation with supervision obtained by comparing
different actions observed in locally similar states.

\paragraph{JEPA and predictive representations for control.}
Joint-Embedding Predictive Architectures learn representations by predicting
future latent variables instead of reconstructing observations
~\cite{lecun2022path,assran2023self,klindt2026does}.
Recent extensions have explored JEPA-style prediction for multimodal and
decision-making settings~\cite{chen2025vl,destrade2025value,vujinovic2026act}, while TD-JEPA
adapts latent temporal prediction to zero-shot offline control
~\cite{bagatella2025td}.
%
%
These methods primarily supervise representations through predictive
consistency along observed trajectories.
TwinJEPA addresses a different axis of representation learning by explicitly
introducing comparisons between alternative actions in local state
neighborhoods.
The proposed objective therefore complements temporal prediction rather than
replacing the JEPA backbone.

\paragraph{Preference and pairwise supervision.}
Preference-based reinforcement learning typically learns from pairwise
comparisons between trajectories, actions, or outcomes, often obtained from
human feedback or an external preference oracle~\cite{lee2026preference,zhang2024learning,hwang2023sequential}.
%
%
TwinJEPA differs in both the source and the role of the preference signal.
It does not require human-provided preferences or additional environment
interaction.
Instead, pairwise labels and reward-gap targets are generated automatically
from locally matched transitions through goal-conditioned reward relabeling.
Moreover, these comparisons are used as auxiliary supervision for
representation learning rather than as a standalone reward model or policy
optimization objective.
This distinction allows TwinJEPA to inject action-discriminative information
into a temporal predictive representation while retaining the original
offline control pipeline.

\section{Additional Experimental Details}
\label{sec:additional-details}

This appendix provides additional experimental analyses that complement the
main results. Specifically, we report implementation details, environment
descriptions, extended task-level evaluations, auxiliary-objective analyses,
hyperparameter sensitivity studies, and additional cross-domain evaluations.
Unless otherwise stated, official TD-JEPA and TwinJEPA use identical backbone
architectures, optimization hyperparameters, training budgets, random seeds,
and evaluation protocols within each matched comparison.

The primary difference between the two methods is that TwinJEPA augments the
TD-JEPA objective with offline-mined action-preference supervision.
Specifically, locally matched transition pairs are constructed from the fixed
offline dataset, and goal-conditioned reward relabeling provides two auxiliary
signals: reward-gap regression and preference classification.
These auxiliary objectives are applied only during representation learning and
introduce no additional inference-time computation.

\subsection{Implementation Details}
\label{detail}

We report task- and domain-level results on AntMaze, auxiliary-objective
ablations on AntMaze and ExORL Cheetah, Fixed-HP evaluations on DMC Walker,
ExORL Cheetah, ExORL/DMC Quadruped, and D4RL Maze2D, as well as additional
pixel-based evaluations on ExORL domains.

Full TwinJEPA with $\lambda=0.1$ serves as the primary matched configuration.
Unless stated otherwise, all comparisons use identical training and evaluation
protocols, and results are reported as mean $\pm$ standard deviation over
matched seeds. Task-level or seed-level breakdowns are provided when they
reveal information beyond aggregate performance.

The shared value of $\lambda$ is maintained across domains to evaluate the
transferability of the proposed supervision under a fixed protocol rather than
selecting benchmark-specific optimal weights after observing downstream
performance. Additional $\lambda$ sweeps are therefore treated as sensitivity
analyses and are not used to determine the primary reported configurations.

\begin{algorithm}[t]
\caption{TwinJEPA: Offline Pair Mining and Joint Training}
\label{alg:twinjepa}
\begin{algorithmic}[1]

\REQUIRE
Offline dataset $\mathcal{D}$,
matching function $q$,
resolution $\tau_{\mathrm{obs}}$,
action threshold $\epsilon_a$,
pair budget $N_{\mathrm{pair}}$,
training steps $T$

\ENSURE
Encoders $\phi,\psi$ and policy $\pi$

\STATE \textbf{Phase I: construct local action-comparison pairs}
\STATE Assign each transition
$b_i \leftarrow
\operatorname{round}
(q(\bar{o}_i)/\tau_{\mathrm{obs}})$
\STATE
Initialize
$\mathcal{D}_{\mathrm{pair}}\leftarrow\emptyset$

\FOR{candidate pairs $(i,j)$ satisfying $b_i=b_j$}
    \IF{$\|a_i-a_j\|_2\geq\epsilon_a$}
        \STATE Relabel the two successor states under the available goals
        \STATE
        $k^\star
        \leftarrow
        \arg\max_k
        |r_i^{(k)}-r_j^{(k)}|$
        \STATE
        $\Delta r
        \leftarrow
        |r_i^{(k^\star)}-r_j^{(k^\star)}|$
        \IF{$\Delta r>10^{-6}$}
            \STATE
            $y_{\mathrm{pref}}
            \leftarrow
            \mathbb{I}
            [r_j^{(k^\star)}>r_i^{(k^\star)}]$
            \STATE
            Add
            $(\bar{o}_i,a_i,a_j,\Delta r,y_{\mathrm{pref}})$
            to $\mathcal{D}_{\mathrm{pair}}$
        \ENDIF
    \ENDIF
    \IF{$|\mathcal{D}_{\mathrm{pair}}|\geq N_{\mathrm{pair}}$}
        \STATE \textbf{break}
    \ENDIF
\ENDFOR

\STATE \textbf{Phase II: joint representation learning}

\FOR{$t=1$ \TO $T$}
    \STATE
    Sample transition minibatch
    $\mathcal{B}\sim\mathcal{D}$
    and comparison minibatch
    $\mathcal{B}_{\mathrm{pair}}\sim\mathcal{D}_{\mathrm{pair}}$
    \STATE
    Compute
    $\mathcal{L}_{\mathrm{base}}$
    from $\mathcal{B}$
    \STATE
    Compute
    $\mathcal{L}_{\mathrm{pair}}$
    from $\mathcal{B}_{\mathrm{pair}}$
    \STATE
    Update representation parameters using
    $\mathcal{L}_{\mathrm{repr}}$
    \STATE
    Update $\pi$ using the unchanged FlowBC objective
    \STATE
    Update target networks by EMA
\ENDFOR

\RETURN $\phi,\psi,\pi$

\end{algorithmic}
\end{algorithm}

Experiments on DMC Walker and D4RL Maze2D were conducted using NVIDIA
GeForce RTX 4090 GPUs, whereas AntMaze, ExORL Cheetah, and ExORL/DMC
Quadruped experiments were conducted using NVIDIA GeForce RTX 5090 GPUs.
The GPU model affects computational throughput only; within each reported
comparison, official TD-JEPA and TwinJEPA were trained under identical
hardware settings, software environments, optimization schedules, and
evaluation protocols. No benchmark-specific model or loss-weight retuning was performed, except for
explicitly reported domain-level parameters such as the local observation
matching tolerance used during offline preference-pair construction.

\subsubsection{Evaluation environments}

We evaluate TwinJEPA across five environment families that cover complementary
offline control regimes, including long-horizon sparse-reward navigation,
dense continuous locomotion, high-dimensional posture regulation, and
simplified geometric goal reaching.

\begin{itemize}

\item \textbf{OGBench AntMaze.}

OGBench AntMaze, represented by tasks such as
\texttt{antmaze-medium-navigate}, is a long-horizon goal-reaching domain in
which a quadrupedal ant must traverse a continuous maze to reach designated
goals. It combines sparse task feedback, high-dimensional locomotion, and
long-range navigation, making it a demanding evaluation of whether an
offline-learned representation supports zero-shot goal-conditioned control.

We use AntMaze as the primary benchmark because it directly tests the central
motivation of TwinJEPA: whether local action-comparison supervision can
provide additional action-discriminative information beyond transition-wise
temporal prediction.

\item \textbf{DeepMind Control Suite Walker.}

Walker, represented by \texttt{walker\_walk}, is a proprioceptive
continuous-control domain requiring a bipedal agent to maintain balance and
generate coordinated locomotion. Unlike AntMaze, Walker does not involve
explicit maze geometry or sparse goal reaching; instead, performance depends
on stable posture control and dynamic coordination.

Walker therefore evaluates whether the proposed reward-gap and preference
objectives remain useful in dense-reward continuous-control settings beyond
sparse navigation.

\item \textbf{DeepMind Control Suite Cheetah.}

Cheetah, represented by \texttt{cheetah\_run}, is a planar locomotion domain
with dense and continuously varying rewards. It differs from Walker in
embodiment and control dynamics while preserving the need for coordinated
movement.

We include Cheetah to examine whether reward-gap regression and preference
classification provide transferable supervisory signals across different
continuous-control systems rather than benefiting only a particular
locomotion morphology.

\item \textbf{DeepMind Control Suite Quadruped.}

Quadruped, represented by \texttt{quadruped\_stand}, is a high-dimensional
proprioceptive control domain in which a four-legged agent must establish and
maintain stable posture. Compared with Walker and Cheetah, the task emphasizes
joint coordination, body orientation, and local stability rather than sustained
displacement.

Quadruped provides a complementary evaluation setting where the effectiveness
of local action comparisons can be examined under complex embodiment but
relatively localized control objectives.

\item \textbf{D4RL Maze2D.}

Maze2D, represented by \texttt{maze2d-medium}, is a planar point-mass
navigation domain in which an agent moves through a geometric maze toward a
target location. Compared with AntMaze, it removes the complexity of
quadrupedal locomotion and isolates the geometric navigation component of
goal-conditioned control.

Maze2D therefore serves as a lower-complexity counterpart to AntMaze and
provides an additional test of whether TwinJEPA remains compatible with the
TD-JEPA framework when embodiment and dynamics are simplified.

\end{itemize}

These environment families provide complementary evaluation conditions.
AntMaze and Maze2D examine goal-directed navigation under different levels of
embodiment complexity. Walker and Cheetah evaluate dynamic locomotion with
dense reward feedback, whereas Quadruped studies high-dimensional posture
control. Together, these evaluations are intended not only to measure whether TwinJEPA
improves average performance, but also to characterize when locally mined
action comparisons provide useful representation-level supervision. The
contribution of preference supervision may depend on factors including reward
structure, task horizon, action diversity, and the quality of the mined local
comparison distribution.

\subsection{Cross-Domain Auxiliary-Objective Summary}
\label{app:cross-domain-component-summary}

Section~\ref{sec:component-analysis} summarizes how reward-gap regression
and preference classification contribute across environments.
For completeness, Table~\ref{tab:cross-domain-component-summary} collects
the component ablations used in that comparison under their corresponding
matched protocols.
The table is intended as a cross-domain summary; complete within-domain
sensitivity analyses, uncertainty estimates, and seed-level results are
reported in the benchmark-specific appendix sections.

\begin{table}[t]
\centering
\caption{
Cross-domain auxiliary-objective summary.
AntMaze entries report zero-shot success (\%, mean$\pm$SE over 10 matched
seeds), whereas Cheetah and Quadruped report four-task mean returns
(mean$\pm$standard deviation).
Full denotes
$(\lambda,\alpha_{\mathrm{gap}},\alpha_{\mathrm{pref}})
=(0.1,1.0,0.5)$;
Gap-only removes preference classification and Pref-only removes reward-gap
regression.
$^{\dagger}$Cheetah component ablations use the matched five-seed ablation
subset and should not be compared directly with the ten-seed primary
Cheetah result.
}
\label{tab:cross-domain-component-summary}
{\small
\setlength{\tabcolsep}{4.2pt}
\renewcommand{\arraystretch}{1.05}

\begin{tabular}{@{}lcccc@{}}
\toprule
Setting
& TD-JEPA
& Full
& Gap-only
& Pref-only \\
\midrule

AntMaze mn
& $68.8\!\pm\!4.0$
& $73.2\!\pm\!3.5$
& $74.0\!\pm\!3.0$
& $77.0\!\pm\!2.9$ \\

AntMaze ms
& $51.4\!\pm\!3.7$
& $52.0\!\pm\!4.3$
& $57.4\!\pm\!2.4$
& $50.2\!\pm\!3.4$ \\

Cheetah$^{\dagger}$
& $610.5\!\pm\!44.0$
& $671.4\!\pm\!32.4$
& $654.5\!\pm\!34.5$
& $629.8\!\pm\!37.2$ \\

Quadruped
& $559.4\!\pm\!35.4$
& $599.8\!\pm\!28.2$
& $554.4\!\pm\!18.2$
& $548.4\!\pm\!26.8$ \\

\bottomrule
\end{tabular}
}

\end{table}

The component ordering varies across environments.
Preference-only has the highest mean on AntMaze medium-navigate, whereas
gap-only has the highest mean on medium-stitch.
On Cheetah, gap-only retains much of the increase obtained by the full
objective, while preference-only produces a smaller positive difference.
Quadruped exhibits a different pattern, with the full objective exceeding
both isolated components under the evaluated configuration.
These results therefore support a domain-dependent interpretation of the
two auxiliary signals rather than a universal ordering or a claim that
their combination must dominate in every setting.

The corresponding detailed analyses are reported in
Section~\ref{app:antmaze-ablation} for AntMaze,
Section~\ref{app:cheetah-ablation} for Cheetah, and
Section~\ref{app:quadruped-ablation} for Quadruped.

\subsection{Sensitivity to Local Matching Tolerance}
\label{app:tol-sensitivity}

The neighborhood resolution $\tau_{\mathrm{obs}}$ controls the granularity
of the local state matches used to construct comparison pairs.
To examine whether the reported results depend on a narrowly tuned
resolution, we vary $\tau_{\mathrm{obs}}$ around the default operating point
while keeping the remaining training and evaluation settings unchanged.
These sweeps are used only as sensitivity diagnostics and do not determine
the default configurations used in the main experiments.

\begin{table}[t]
\centering
\caption{
Sensitivity to the local observation-matching tolerance
$\tau_{\mathrm{obs}}$.
Entries report zero-shot success (\%, mean$\pm$standard deviation) for the
currently available tolerance-probe runs.
Relative scale is defined with respect to the default tolerance used for
each domain.
$^\dagger$ denotes an incomplete seed set.
The probe results are diagnostic and are not used for model selection or
for the primary performance comparisons.
}
\label{tab:tol-sensitivity}
\small
\setlength{\tabcolsep}{5.0pt}
\renewcommand{\arraystretch}{1.05}

\begin{tabular}{lcccc}
\toprule
Domain & $\tau_{\mathrm{obs}}$ & Relative scale
& Success (\%) & $n$ \\
\midrule

\multicolumn{5}{l}{\textit{AntMaze}} \\
& 0.15 & $0.5\times$ & $76.7 \pm 6.1$ & $3^\dagger$ \\
& 0.30 & $\mathbf{1.0\times}$ & $76.0 \pm 2.5$ & 5 \\
& 0.60 & $2.0\times$ & $78.0 \pm 3.5$ & $3^\dagger$ \\

\midrule

\multicolumn{5}{l}{\textit{Maze2D exploratory probe}} \\
& 0.05 & $0.33\times$ & $8.0 \pm 0.0$ & $2^\dagger$ \\
& 0.10 & $0.67\times$ & $13.0 \pm 4.2$ & $2^\dagger$ \\
& 0.15 & $\mathbf{1.0\times}$ & $8.0 \pm 0.0$ & $2^\dagger$ \\
& 0.30 & $2.0\times$ & $6.0 \pm 2.8$ & $2^\dagger$ \\
& 0.50 & $3.33\times$ & $13.0 \pm 7.1$ & $2^\dagger$ \\

\bottomrule
\end{tabular}
\end{table}

On AntMaze, success remains within a narrow range of
$76.0$--$78.0\%$ when the matching tolerance is varied from
$0.5\times$ to $2\times$ the default value.
In particular, the default $\tau_{\mathrm{obs}}=0.30$ is not an isolated
high-performing operating point, providing no indication that the primary
configuration relies on a sharply tuned neighborhood resolution.

The currently available Maze2D runs show a less stable and non-monotonic
pattern across a substantially wider tolerance range.
However, only two seeds are presently available for each setting, so these
values are treated as an exploratory diagnostic rather than evidence for a
cross-domain robustness claim.
Importantly, the sweep does not reveal a systematic advantage uniquely at
the default tolerance.

Overall, the available sensitivity results are consistent with
$\tau_{\mathrm{obs}}$ acting as a domain-scale-dependent pair-construction
parameter rather than a narrowly optimized downstream-performance
hyperparameter.
We therefore retain the prespecified default tolerances used in the primary
experiments rather than selecting a value post hoc from these sweeps.

\section{AntMaze Extended Evaluation}
\label{app:antmaze-extended}

Unless stated otherwise, all results in this section use the shared Fixed-HP
configuration
$(\psi_{\mathrm{orth}},\mathrm{lr}_{\psi})=(0.1,10^{-4})$,
$10$ matched seeds, and evaluation after $1\mathrm{M}$ training steps.
We provide additional task-level statistics and reproduction details for
the OGBench AntMaze evaluation reported in the main paper.
AntMaze is used as the primary benchmark because it combines long-horizon
goal-reaching, sparse feedback, and high-dimensional locomotion.

\begin{figure*}[ht]
\centering
\includegraphics[width=0.85\linewidth]
{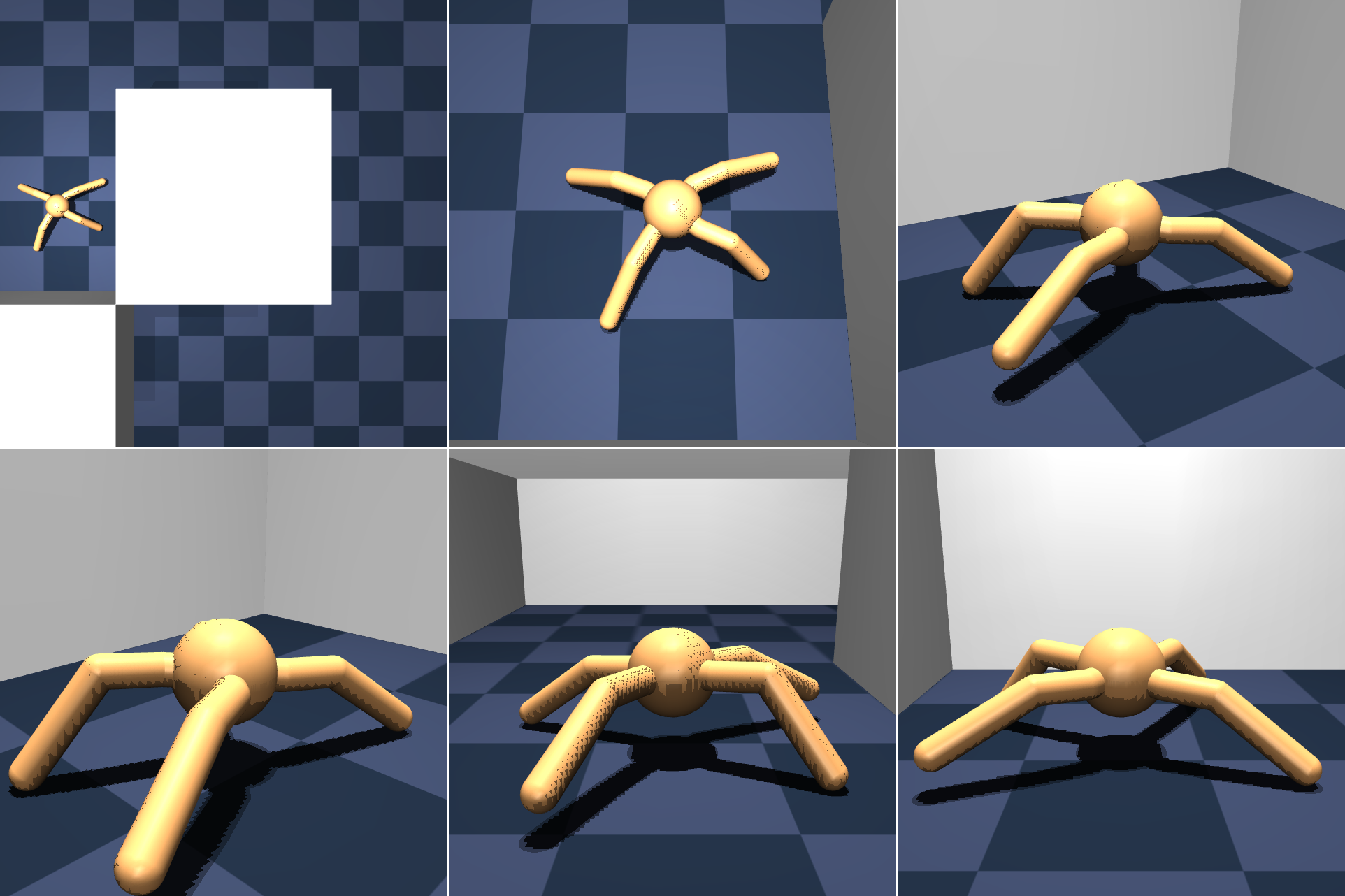}
\caption{
Environment visualization of the OGBench
\texttt{antmaze-medium-navigate-v0} environment from six simulator
viewpoints.
The views illustrate the long-horizon navigation structure evaluated in the
AntMaze experiments.
}
\label{fig:antmaze-views}
\end{figure*}

\subsection{Task-Level AntMaze Statistics}
\label{app:antmaze-task}

The main paper reports domain-level AntMaze results and their matched-seed
uncertainty analysis.
Here we provide the complete task-level decomposition over the four evaluated
domains and five evaluation goals per domain.
Each entry reports the zero-shot success rate averaged over the same matched
seeds used in the main evaluation.

\begin{table}[ht]
\centering
\caption{
Complete task-level zero-shot success rates (\%) on the four AntMaze domains.
The $\Delta$ row denotes the difference
$\mathrm{TwinJEPA}-\mathrm{Official}$.
The Avg column corresponds to the unweighted average over five goals within
each domain, while AntMaze-4 reports the average across the four domain
means.
}
\label{tab:antmaze-pertask-compact}
\setlength{\tabcolsep}{6pt}
\renewcommand{\arraystretch}{1.05}

\begin{tabular}{@{}llcccccc@{}}
\toprule
Domain & Method & T1 & T2 & T3 & T4 & T5 & Avg \\
\midrule

mn & Official
& $81$ & $80$ & $32$ & $57$ & $94$ & $68.8$ \\

& TwinJEPA
& $79$ & $86$ & $49$ & $54$ & $98$ & $73.2$ \\

& $\Delta$
& $-2$ & $+6$ & $+17$ & $-3$ & $+4$ & $+4.4$

\\
\addlinespace

ln & Official
& $76$ & $39$ & $51$ & $16$ & $29$ & $42.2$ \\

& TwinJEPA
& $81$ & $25$ & $55$ & $9$ & $45$ & $43.0$ \\

& $\Delta$
& $+5$ & $-14$ & $+4$ & $-7$ & $+16$ & $+0.8$

\\
\addlinespace

ms & Official
& $52$ & $3$ & $59$ & $81$ & $62$ & $51.4$ \\

& TwinJEPA
& $65$ & $1$ & $41$ & $78$ & $75$ & $52.0$ \\

& $\Delta$
& $+13$ & $-2$ & $-18$ & $-3$ & $+13$ & $+0.6$

\\
\addlinespace

ls & Official
& $44$ & $18$ & $76$ & $47$ & $4$ & $37.8$ \\

& TwinJEPA
& $55$ & $15$ & $80$ & $45$ & $11$ & $41.2$ \\

& $\Delta$
& $+11$ & $-3$ & $+4$ & $-2$ & $+7$ & $+3.4$

\\

\midrule

\multicolumn{2}{@{}l}{AntMaze-4}
& \multicolumn{5}{c}{Official $50.1$~~~TwinJEPA $52.4$}
& $+2.3$

\\

\bottomrule
\end{tabular}

\end{table}

\begin{table}[t]
\centering
\caption{
Complete task-level zero-shot success rates (\%) on the four AntMaze
domains.
The $\Delta$ row denotes
$\mathrm{TwinJEPA}-\mathrm{Official}$.
The Avg column is the unweighted average over the five evaluation goals,
while AntMaze-4 reports the average across the four domain means.
}
\label{tab:antmaze-task-level}
\end{table}

Figure~\ref{fig:antmaze-diagnostics} complements the complete numerical
decomposition with absolute and difference-based views of the same matched
evaluation.
Panel~(a) shows that the positive AntMaze-4 mean difference is composed of
modest domain-level changes rather than a large shift on a single domain.
Panel~(b) resolves medium-navigate into its five evaluation goals, where the
largest positive difference occurs on Task~3.
Panel~(c) extends this view to all 20 AntMaze goals and makes clear that
positive and negative task-level changes coexist within each domain.

\begin{figure}[t]
\centering
\includegraphics[width=\linewidth]
{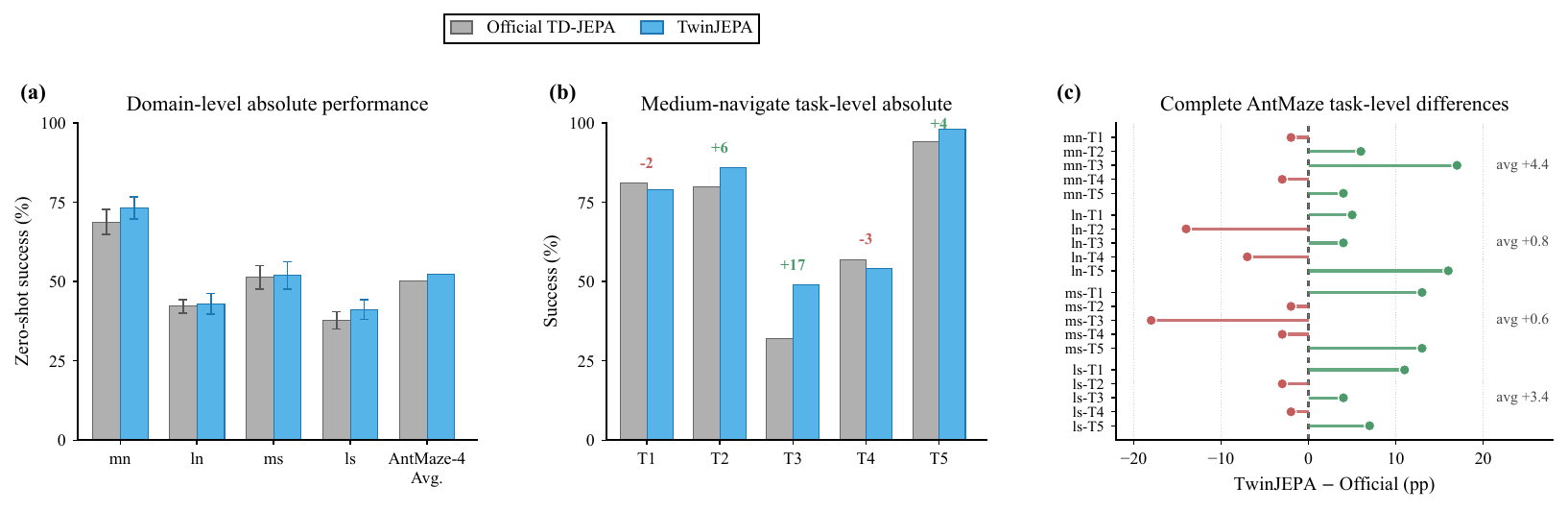}
\caption{
Absolute and task-level performance decomposition on OGBench AntMaze under
the shared Fixed-HP protocol.
\textbf{(a)} Domain-level zero-shot success of Official TD-JEPA and
TwinJEPA; error bars denote SE over 10 matched seeds where applicable.
\textbf{(b)} Absolute per-task success on medium-navigate, with annotations
showing
$\Delta=\mathrm{TwinJEPA}-\mathrm{Official}$
in percentage points.
\textbf{(c)} Complete task-level differences across the four AntMaze
domains; annotations on the right report the corresponding domain means.
Together, the panels expose the absolute performance levels underlying the
difference-based analysis in the main paper and show that the positive
domain-level means arise from heterogeneous goal-level changes.
}
\label{fig:antmaze-diagnostics}
\end{figure}

The decomposition also clarifies why the aggregate AntMaze result should not
be interpreted as a uniform improvement.
For example, medium-navigate combines a $+17$ point difference on Task~3
with small negative differences on Tasks~1 and~4, while large-navigate and
medium-stitch contain both sizable positive and negative changes.
The resulting domain means are nevertheless positive in all four cases.
This pattern is consistent with the main-paper conclusion that local
action-comparison supervision has task-dependent effects rather than shifting
all goals in the same direction.
\subsection{Official TD-JEPA Reproduction Check}
\label{app:antmaze-reproduction}

Before evaluating TwinJEPA, we verify that our matched implementation of
Official TD-JEPA reproduces the expected AntMaze performance regime.
Under the Fixed-HP protocol, our medium-navigate reproduction achieves
$68.8\pm4.0\%$ success, close to the reported TD-JEPA result of $70.4\%$.
This agreement indicates that the reproduced baseline follows the intended
training and evaluation setup despite the expected variation across
independent runs.

For the primary comparison, we use this reproduced implementation rather than
the published point estimate as the matched baseline. Official TD-JEPA and
TwinJEPA are trained with the same offline data, backbone architecture,
optimization schedule, training budget, task-inference procedure, evaluation
protocol, and matched random seeds. The only systematic difference is the
additional offline-mined action-comparison supervision introduced by
TwinJEPA. This controlled setup avoids attributing cross-paper differences in
training or evaluation to the proposed objective.

\begin{table}[t]
\centering
\caption{
Domain-level zero-shot success (\%) on AntMaze under the Fixed-HP protocol.
Results are reported as mean$\pm$SE over 10 matched seeds.
}
\label{tab:paper-baseline}
\setlength{\tabcolsep}{3.5pt}

\begin{tabular}{lcc}
\toprule
Setting & Official TD-JEPA & TwinJEPA \\
\midrule

mn
& $68.8{\pm}4.0$
& $\mathbf{73.2{\pm}3.5}$ \\

ln
& $42.2{\pm}2.1$
& $\mathbf{43.0{\pm}3.3}$ \\

ms
& $51.4{\pm}3.7$
& $\mathbf{52.0{\pm}4.3}$ \\

ls
& $37.8{\pm}2.7$
& $\mathbf{41.2{\pm}3.1}$ \\

\midrule

AntMaze-4 Avg.
& $50.1$
& $\mathbf{52.4}$ ($+2.3$)

\\

\bottomrule
\end{tabular}

\end{table}

Table~\ref{tab:paper-baseline} reports the resulting matched AntMaze
comparison across all four domains. The reproduced TD-JEPA results serve as
the reference for all within-paper effect estimates and uncertainty analyses.
Accordingly, differences between the published and reproduced TD-JEPA values
are treated as protocol- and run-level variation rather than as algorithmic
changes; the empirical effect of TwinJEPA is assessed only against the
matched reproduction.

\subsection{Additional Ablation and Sensitivity Results}
\label{app:antmaze-ablation}

The main paper summarizes the cross-domain contributions of reward-gap
regression and preference classification.
Here we provide the corresponding AntMaze sensitivity results and complete
numerical values under the same Fixed-HP protocol.
All experiments in this subsection use $10$ matched seeds and evaluation
after $1\mathrm{M}$ training steps.

\paragraph{Ablation protocol.}
Full TwinJEPA uses
$
(\lambda,\alpha_{\mathrm{gap}},\alpha_{\mathrm{pref}})
=
(0.1,1.0,0.5).
$
We consider two types of ablations.
First, we vary the auxiliary weight $\lambda$ while retaining both
comparative objectives.
Second, we isolate the reward-gap and preference terms by setting one of
their coefficients to zero.
These experiments are intended to characterize sensitivity and component
behavior rather than to select a benchmark-specific configuration for the
primary evaluation.

\begin{figure}[t]
\centering
\includegraphics[width=\linewidth]{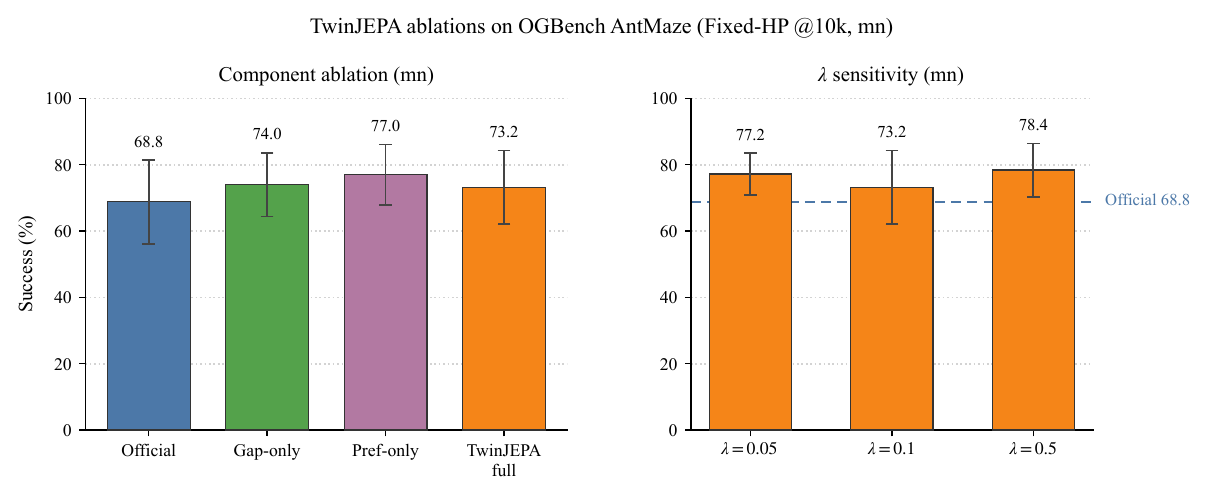}
\caption{
Additional TwinJEPA ablation and sensitivity results on AntMaze
medium-navigate under the Fixed-HP protocol using 10k inference samples.
The left panel isolates reward-gap regression and preference classification,
where Gap-only removes preference classification and Pref-only removes
reward-gap regression.
The right panel varies the auxiliary loss weight $\lambda$ while retaining
both objectives.
The shared setting $\lambda=0.1$ is used for the primary cross-domain
evaluation rather than selected as the optimum for medium-navigate.
Results are averaged over 10 matched seeds.
}
\label{fig:ablation_mn_fixedhp10k}
\end{figure}
%
%
%
%

\begin{table}[t]
\centering
\caption{
Additional AntMaze ablation and sensitivity results under the Fixed-HP
protocol (mean$\pm$SE over 10 matched seeds).
The upper block varies the auxiliary weight $\lambda$ on medium-navigate
while keeping both comparative objectives active.
The lower block isolates reward-gap regression and preference classification
on medium-navigate and medium-stitch.
Gap-only uses
$(\alpha_{\mathrm{gap}},\alpha_{\mathrm{pref}})=(1,0)$,
whereas Pref-only uses $(0,0.5)$.
The shared $\lambda=0.1$ configuration is used in the primary evaluation and
is not selected as the optimum for individual AntMaze domains.
}
\label{tab:ablation-compact}
\setlength{\tabcolsep}{3.2pt}
\renewcommand{\arraystretch}{1.05}

\begin{tabular}{llcc}
\toprule
Domain & Variant & Success (\%) & $\Delta$ vs.\ Off. \\
\midrule

\multicolumn{4}{l}{\textit{$\lambda$ sensitivity on medium-navigate}} \\

mn & Official
& $68.8{\pm}4.0$
& -- \\

mn & $\lambda=0.05$
& $77.2{\pm}2.0$
& $+8.4$ \\

mn & $\lambda=0.1$ (shared)
& $73.2{\pm}3.5$
& $+4.4$ \\

mn & $\lambda=0.5$
& $\mathbf{78.4{\pm}2.5}$
& $\mathbf{+9.6}$ \\

\midrule

\multicolumn{4}{l}{\textit{Component ablations at $\lambda=0.1$}} \\

mn & Gap-only
& $74.0{\pm}3.0$
& $+5.2$ \\

mn & Pref-only
& $\mathbf{77.0{\pm}2.9}$
& $\mathbf{+8.2}$ \\

mn & Full TwinJEPA
& $73.2{\pm}3.5$
& $+4.4$ \\

\addlinespace

ms & Official
& $51.4{\pm}3.7$
& -- \\

ms & Gap-only
& $\mathbf{57.4{\pm}2.4}$
& $\mathbf{+6.0}$ \\

ms & Pref-only
& $50.2{\pm}3.4$
& $-1.2$ \\

ms & Full TwinJEPA
& $52.0{\pm}4.3$
& $+0.6$ \\

\bottomrule
\end{tabular}

\end{table}

\paragraph{Sensitivity to the auxiliary weight.}
The medium-navigate sweep shows that the effect is not confined to a single
choice of $\lambda$.
All three evaluated weights produce higher mean success than the matched
Official TD-JEPA baseline, although their magnitudes differ.
In particular, the shared setting $\lambda=0.1$ is not the
best-performing value on medium-navigate in isolation.
We retain it in the primary experiments because the main evaluation uses a
common auxiliary weighting across domains rather than selecting
benchmark-specific values after observing downstream performance.
The sweep should therefore be interpreted as a sensitivity analysis rather
than as model selection for the reported main result.

\paragraph{Component behavior.}
The component ordering differs between the two AntMaze domains.
Preference-only gives the highest mean success on medium-navigate, whereas
gap-only gives the highest mean success on medium-stitch.
The full objective does not dominate both isolated components in either
case.
These results are consistent with the cross-domain analysis in the main
paper and show that reward-gap regression and preference classification
provide distinct, domain-dependent forms of supervision rather than a
universally synergistic combination.

The corresponding large-navigate and large-stitch variants remain within the
same general performance range as the matched TD-JEPA and full TwinJEPA
configurations and do not change the interpretation above.

\subsection{Learning Dynamics on Medium-Navigate}
\label{app:mn-dynamics}

\begin{figure*}[ht]
\centering
\includegraphics[width=\linewidth]{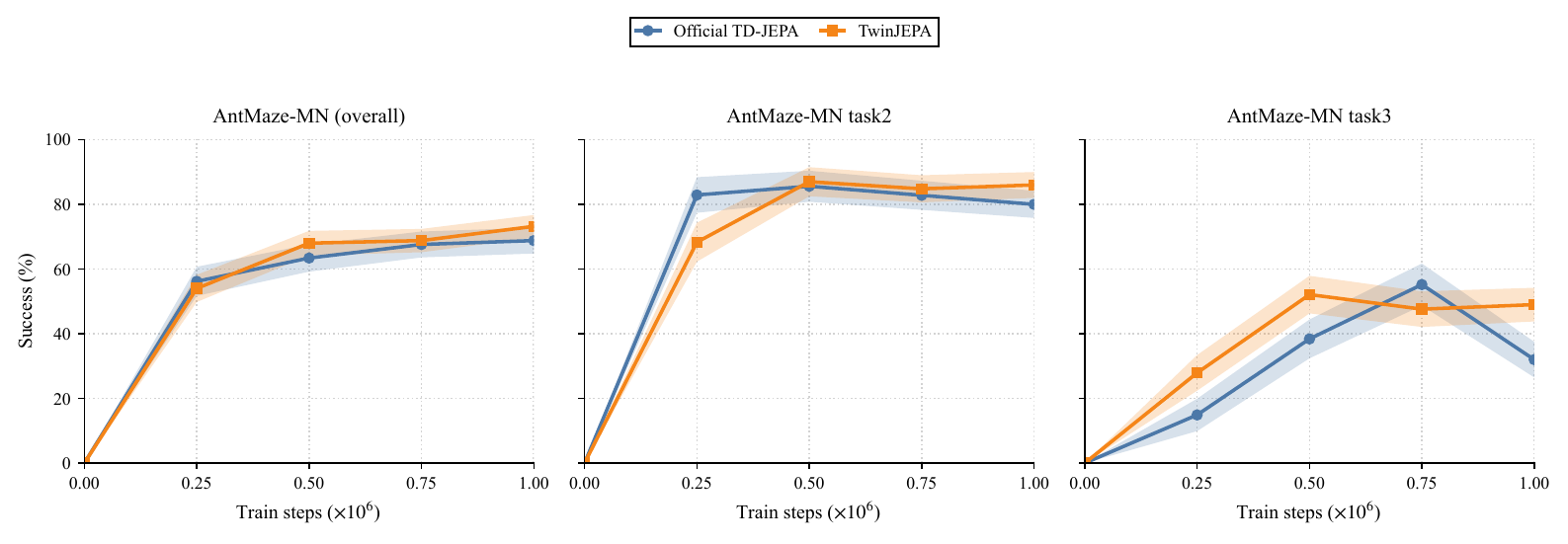}
\caption{
Learning dynamics on \texttt{antmaze-medium-navigate-v0} under the Fixed-HP
configuration
$(\psi_{\mathrm{ortho}},\mathrm{lr}_{\psi})=(0.1,10^{-4})$.
Curves report mean zero-shot success (\%) $\pm$SE over 10 matched seeds at
$\{0,0.25,0.5,0.75,1.0\}\times10^{6}$ training steps.
The left panel shows success averaged over the five evaluation goals
(final $68.8\%\to73.2\%$, $\Delta{+}{4.4}$).
The middle panel shows Task~2, a relatively high-performing goal
(final $80\%\to86\%$, $\Delta{+}{6}$).
The right panel shows Task~3, a lower-performing goal with the largest final
task-level difference
(final $32\%\to49\%$, $\Delta{+}{17}$).
}
\label{fig:mn-learning-curves}
\end{figure*}

\paragraph{Aggregate learning dynamics.}
Figure~\ref{fig:mn-learning-curves} provides a temporal view of the
medium-navigate result reported in the main evaluation.
At $250\mathrm{k}$ steps, Official TD-JEPA has a small advantage in the
five-task mean ($56.2\%$ versus $54.0\%$).
TwinJEPA is higher at $500\mathrm{k}$ steps
($68.0\%$ versus $63.4\%$), after which the difference narrows at
$750\mathrm{k}$ steps
($68.8\%$ versus $67.6\%$).
At the final checkpoint, TwinJEPA reaches $73.2\%$ compared with
$68.8\%$ for Official TD-JEPA.
The separation is therefore not monotonic over training, and the final
difference should not be interpreted as a persistent advantage at every
checkpoint.

\paragraph{Task-specific dynamics.}
The representative task curves illustrate two different temporal patterns.
On Task~2, both methods reach relatively high success early in training and
finish at $80\%$ for Official TD-JEPA and $86\%$ for TwinJEPA.
The difference is therefore modest relative to their overall performance.

Task~3 exhibits greater variation.
TwinJEPA first attains a mean success of at least $50\%$ at the
$500\mathrm{k}$ checkpoint, whereas Official TD-JEPA first reaches this
threshold at $750\mathrm{k}$.
Official TD-JEPA then decreases from approximately $57\%$ at
$750\mathrm{k}$ to $32\%$ at the final checkpoint, while TwinJEPA finishes
at $49\%$.
Thus, the final $17$-point difference reflects both earlier separation during
training and a larger late-stage decline of the baseline.
This decomposition is important because the final task-level gap should not
be attributed solely to a uniformly increasing TwinJEPA advantage. Since only five checkpoints are recorded, these curves do not establish
continuous-time convergence, monotonic improvement, or a formal difference
in optimization speed.
They instead provide a coarse temporal diagnostic showing how the final
medium-navigate differences emerge over the recorded training trajectory. The learning-curve analysis is complementary to the final matched-seed
statistics reported in the main paper.
It provides information about the temporal evolution of the observed
differences but is not used as an additional statistical test of the final
performance effect.

\begin{table}[t]
\centering
\small
\caption{
Checkpoint-level summary for
\texttt{antmaze-medium-navigate-v0} under the Fixed-HP protocol
(10 matched seeds).
The second column reports the first recorded checkpoint at which the mean
zero-shot success reaches at least $50\%$.
Because evaluations are available only at
$\{0,0.25,0.5,0.75,1.0\}\times10^{6}$ training steps, these values describe
checkpoint attainment rather than continuous convergence or sample
efficiency.
The remaining columns report final zero-shot success (\%) at
$1\mathrm{M}$ training steps.
}
\label{tab:mn-learning-efficiency}
\setlength{\tabcolsep}{5pt}
\renewcommand{\arraystretch}{1.05}

\begin{tabular}{lcccc}
\toprule
Setting
& First checkpoint $\geq 50\%$ (Off. / Twin)
& Official @ $1\mathrm{M}$
& TwinJEPA @ $1\mathrm{M}$
& $\Delta$
\\
\midrule

Overall (5-task avg.)
& $250\mathrm{k}$ / $250\mathrm{k}$
& $68.8$
& $\mathbf{73.2}$
& $+4.4$
\\

Task~3
& $750\mathrm{k}$ / $500\mathrm{k}$
& $32$
& $\mathbf{49}$
& $+17.0$
\\

Task~2
& $250\mathrm{k}$ / $250\mathrm{k}$
& $80$
& $\mathbf{86}$
& $+6.0$
\\

\bottomrule
\end{tabular}

\end{table}

\section{DMC Walker Extended Evaluation}
\label{app:dmc}

This section provides the extended Walker results that complement the
Fixed-HP comparison reported in the main paper.
While the main paper focuses on the matched-seed effect and task-level
decomposition, here we report the broader optimization grid used to assess
whether the observed Walker behavior is specific to a single hyperparameter
configuration.

\begin{figure*}[ht]
\centering
\includegraphics[width=0.85\linewidth]
{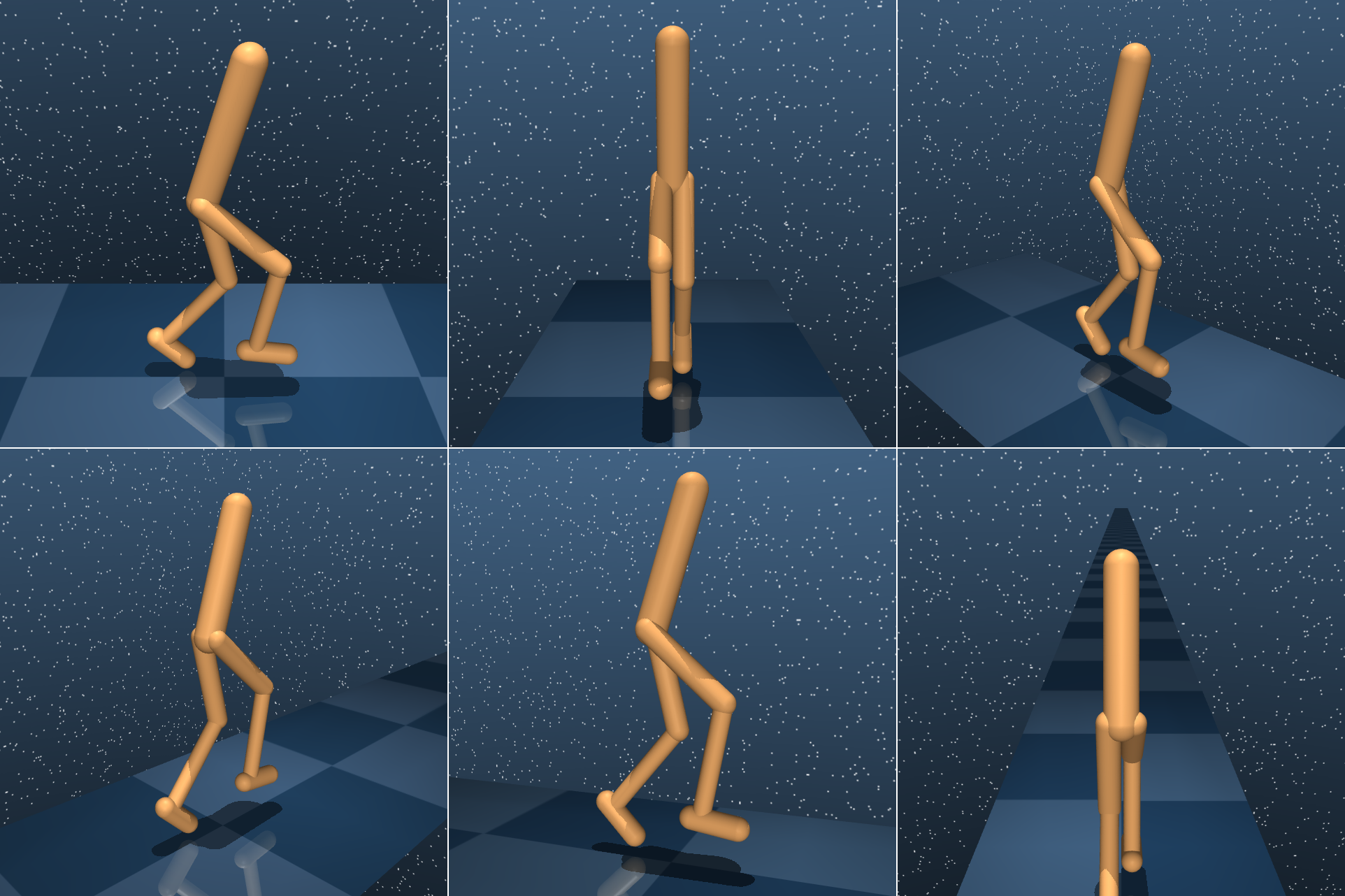}
\caption{
Environment visualizations of the DeepMind Control Suite
\texttt{walker-walk} environment from six camera viewpoints
(2$\times$3).
Top row: side, front, and three-quarter front views.
Bottom row: three-quarter rear, low-angle, and back views.
}
\label{fig:walker-walk-views}
\end{figure*}

\subsection{Hyperparameter-Grid Evaluation}
\label{app:walker-grid}

We evaluate the ExORL/DMC Walker proprioceptive benchmark over the same
six-setting grid used for the TD-JEPA comparison,
$
\psi_{\mathrm{ortho}}\in\{0.01,0.1,1\},
\qquad
\mathrm{lr}_{\psi}\in\{10^{-4},10^{-5}\}.
$
For each setting, performance is averaged over
\texttt{stand}, \texttt{walk}, \texttt{run}, and \texttt{spin}
after $3\,\mathrm{M}$ training steps using five matched seeds.
The paper-aligned Fixed-HP configuration corresponds to
$(\psi_{\mathrm{ortho}},\mathrm{lr}_{\psi})=(0.1,10^{-4})$. Three TwinJEPA runs that exhibited training instability were rerun using the
same seed and hyperparameter configuration and the rerun values are used in
the reported comparison
(seed~$8948$ at $(0.01,10^{-5})$ and $(0.1,10^{-4})$,
and seed~$4729$ at $(1,10^{-5})$).
The matched seed set itself is unchanged.

\begin{table}[t]
\centering
\caption{
Hyperparameter-grid evaluation on ExORL/DMC Walker.
Each row reports the mean episode return averaged over
\texttt{stand}, \texttt{walk}, \texttt{run}, and \texttt{spin}
after $3\,\mathrm{M}$ training steps.
Individual configurations are reported as mean$\pm$SE over five matched seeds.
The final row aggregates the six settings, corresponding to 30 matched
seed--configuration pairs.
$\Delta$ denotes
$\mathrm{TwinJEPA}-\mathrm{Official\ TD\text{-}JEPA}$.
The shared main-paper configuration
$(\psi_{\mathrm{ortho}},\mathrm{lr}_{\psi})=(0.1,10^{-4})$
is shown as one point in this broader grid rather than selected as the
best-performing Walker configuration.
}
\label{tab:dmc-full}
\setlength{\tabcolsep}{3.5pt}
\renewcommand{\arraystretch}{1.05}

\begin{tabular}{ccccc}
\toprule
$\psi_{\mathrm{ortho}}$
& $\mathrm{lr}_{\psi}$
& Official TD-JEPA
& TwinJEPA
& $\Delta$
\\
\midrule

$0.01$ & $10^{-4}$
& $779.1{\pm}8.6$
& $779.6{\pm}8.4$
& $+0.5$
\\

$0.01$ & $10^{-5}$
& $766.2{\pm}16.7$
& $763.7{\pm}9.0$
& $-2.5$
\\

$0.1$ & $10^{-4}$
& $774.3{\pm}10.6$
& $782.7{\pm}4.7$
& $+8.4$
\\

$0.1$ & $10^{-5}$
& $754.9{\pm}9.8$
& $767.5{\pm}7.2$
& $+12.6$
\\

$1$ & $10^{-4}$
& $771.9{\pm}8.9$
& $776.4{\pm}6.0$
& $+4.5$
\\

$1$ & $10^{-5}$
& $746.8{\pm}15.0$
& $757.6{\pm}15.3$
& $+10.8$
\\

\midrule

\multicolumn{2}{l}{Overall (30 pairs)}
& $765.5{\pm}4.9$
& $771.3{\pm}3.8$
& $+5.8$
\\

\bottomrule
\end{tabular}

\end{table}

\paragraph{Grid-level behavior.}
Table~\ref{tab:dmc-full} shows that the Walker effect varies with the
underlying TD-JEPA optimization configuration.
TwinJEPA has a higher mean return in five of the six settings and a lower
mean in one setting, with differences ranging from $-2.5$ to $+12.6$.
At the shared Fixed-HP point used in the main evaluation, the four-task mean
changes from $774.3$ to $782.7$ ($\Delta{+}{8.4}$).
Across all six settings, the aggregate mean changes from
$765.5$ to $771.3$ ($\Delta{+}{5.8}$).

The grid is intended as a robustness diagnostic rather than an additional
model-selection procedure.
In particular, the largest Walker difference occurs at
$(0.1,10^{-5})$, not at the shared Fixed-HP configuration used in the main
paper.
We therefore do not interpret the grid as evidence that a particular
hyperparameter setting is universally preferable.
Instead, it shows that the Walker result is not confined to a single
optimization point, while also confirming that the magnitude of the effect
depends on the underlying representation-learning configuration.

The task-level decomposition of the shared Fixed-HP result is reported in the
main paper and is not repeated here.
That analysis shows that the aggregate Walker difference is concentrated on
\texttt{stand} and \texttt{run}, whereas \texttt{walk} and \texttt{spin}
operate closer to the reward ceiling.

\section{Cheetah (ExORL) Offline Control}
\label{sec:cheetah}

This section provides the complete numerical results and auxiliary-objective
diagnostics for the ExORL/DMC Cheetah evaluation.
The main paper reports the matched-seed effect and task-level structure;
here we provide additional implementation details, task-level values,
ablation settings, and per-seed results.

\paragraph{Experimental setup.}
We evaluate TwinJEPA on the ExORL/DMC \emph{Cheetah} RND dataset.
Cheetah provides a dense-reward locomotion setting whose dynamics differ
substantially from both maze navigation and bipedal Walker control.
As illustrated in Figure~\ref{fig:cheetah-run-views}, the agent is a planar
articulated system whose performance depends on coordinated joint motion and
sustained velocity rather than explicit goal geometry.

All models are trained for $1\,\mathrm{M}$ gradient steps with batch size
$1024$, discount factor $\gamma=0.98$, and
$\psi_{\mathrm{ortho}}=0.1$.
TwinJEPA uses the shared auxiliary configuration
$\lambda=0.1$,
$\alpha_{\mathrm{gap}}=1.0$, and
$\alpha_{\mathrm{pref}}=0.5$.
Local action-comparison pairs are constructed with
$\tau_{\mathrm{obs}}=0.5$.
TD-JEPA uses the same offline dataset, training schedule, and evaluation
pipeline without the pairwise auxiliary objective.

Zero-shot evaluation covers
\texttt{walk},
\texttt{walk\_backward},
\texttt{run}, and
\texttt{run\_backward}.
Each task is evaluated over 10 episodes using $10\,\mathrm{k}$ inference
samples.
The primary comparison uses 10 matched seeds and reports the mean return
across the four tasks after $1\,\mathrm{M}$ training steps.

\begin{figure}[t]
\centering
\includegraphics[width=0.85\linewidth]
{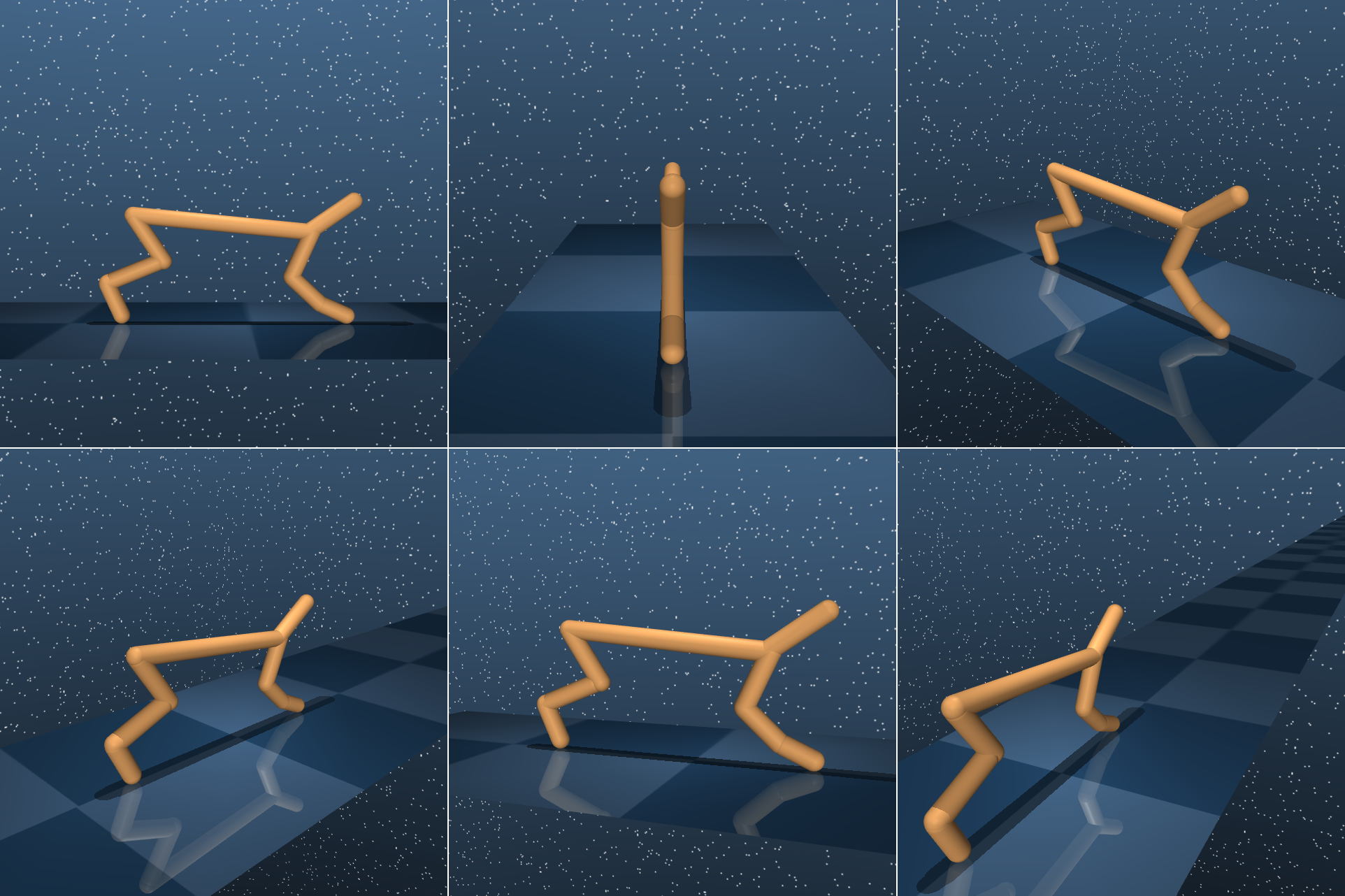}
\caption{
Environment visualizations of the DeepMind Control Suite
\texttt{cheetah-run} environment from six camera viewpoints
(2$\times$3).
Top row: side, front, and three-quarter front views.
Bottom row: three-quarter rear, low-angle, and back views.
}
\label{fig:cheetah-run-views}
\end{figure}

\subsection{Complete Task-Level Results}
\label{app:cheetah-task-results}

Table~\ref{tab:cheetah-per-task} provides the numerical decomposition of the
Cheetah result summarized in the main paper.
The four-task mean increases from $638.0$ to $671.7$
($\Delta{+}{33.7}$), with positive paired differences on 7 of 10 matched
seeds.
The observed across-seed standard deviation is also lower for TwinJEPA
($24.3$ versus $44.2$), although we treat this as a descriptive property of
the evaluated runs rather than as a separate statistical claim.

The task-level differences are heterogeneous.
The largest mean differences occur on the two forward locomotion tasks,
\texttt{walk} and \texttt{run}, whereas
\texttt{walk\_backward} is already close to the apparent return ceiling for
both methods.
This complete numerical breakdown complements the task-level visualization
reported in the main paper.

\begin{table}[t]
\centering
\caption{
Complete task-level zero-shot returns on ExORL/DMC Cheetah under the
Fixed-HP protocol (10 matched seeds, $1\mathrm{M}$ training steps).
Official TD-JEPA and TwinJEPA columns report mean$\pm$standard deviation
across seeds.
The $\Delta$ column reports the corresponding paired seed-wise difference.
The Overall row averages the four evaluation tasks within each seed before
aggregation across seeds.
}
\label{tab:cheetah-per-task}
\setlength{\tabcolsep}{4pt}
\renewcommand{\arraystretch}{1.05}

\begin{tabular}{lccc}
\toprule
Task & Official TD-JEPA & TwinJEPA & $\Delta$ \\
\midrule

\texttt{walk}
& $892.6 \pm 89.7$
& $963.3 \pm 42.9$
& $+70.7 \pm 117.4$
\\

\texttt{walk\_backward}
& $984.3 \pm 0.9$
& $984.5 \pm 0.6$
& $+0.1 \pm 1.0$
\\

\texttt{run}
& $303.7 \pm 88.8$
& $360.4 \pm 58.5$
& $+56.7 \pm 133.6$
\\

\texttt{run\_backward}
& $371.3 \pm 22.8$
& $378.5 \pm 18.3$
& $+7.2 \pm 17.6$
\\

\midrule

Overall
& $638.0 \pm 44.2$
& $\mathbf{671.7 \pm 24.3}$
& $\mathbf{+33.7 \pm 59.1}$
\\

\bottomrule
\end{tabular}

\end{table}

\subsection{Auxiliary-Objective and Weight Sensitivity}
\label{app:cheetah-ablation}

The main paper compares the full, gap-only, and preference-only objectives
across domains.
Here we provide the complete Cheetah ablation results, including additional
values of the auxiliary weight $\lambda$ and the underlying per-seed
returns. The ablation study uses five matched seeds.
Official TD-JEPA and full TwinJEPA are evaluated on this same five-seed
subset, while the remaining variants use identical seeds and evaluation
conditions.
Because this subset differs from the 10-seed primary evaluation, the
corresponding aggregate means should be interpreted within this ablation
experiment and should not be compared directly with the 10-seed values in
Table~\ref{tab:cheetah-per-task}.

\begin{table}[t]
\centering
\caption{
Additional Cheetah auxiliary-objective and loss-weight ablations under the
Fixed-HP protocol (5 matched seeds).
Each entry reports mean$\pm$standard deviation of the four-task return after
$1\mathrm{M}$ training steps.
Gap-only removes preference classification, whereas Pref-only removes
reward-gap regression.
The alternative $\lambda$ settings retain both auxiliary objectives.
}
\label{tab:cheetah-ablation}
\setlength{\tabcolsep}{4pt}
\renewcommand{\arraystretch}{1.05}

\begin{tabular}{lcccc}
\toprule
Variant
& $\lambda$
& $\alpha_{\mathrm{gap}}$
& $\alpha_{\mathrm{pref}}$
& Return @ $1$M
\\
\midrule

Official TD-JEPA
& --
& --
& --
& $610.5 \pm 44.0$
\\

Full TwinJEPA
& $0.1$
& $1.0$
& $0.5$
& $\mathbf{671.4 \pm 32.4}$
\\

\midrule

Gap-only
& $0.1$
& $1.0$
& $0.0$
& $654.5 \pm 34.5$
\\

Pref-only
& $0.1$
& $0.0$
& $0.5$
& $629.8 \pm 37.2$
\\

$\lambda=0.05$
& $0.05$
& $1.0$
& $0.5$
& $634.8 \pm 52.7$
\\

$\lambda=0.5$
& $0.5$
& $1.0$
& $0.5$
& $629.5 \pm 41.7$
\\

\bottomrule
\end{tabular}

\end{table}

Within this five-seed subset, the full objective has the highest mean return
among the evaluated configurations.
Gap-only retains most of the improvement over the matched TD-JEPA baseline,
whereas Pref-only produces a smaller positive difference.
This ordering differs from the AntMaze ablations, where the relative
contributions of the two terms change across domains.
The result therefore supports the interpretation used in the main paper that
reward-gap regression and preference classification provide distinct,
domain-dependent forms of supervision rather than a fixed ranking of
component importance. The auxiliary-weight sweep also shows that the result depends on the strength
of the additional objective.
Both $\lambda=0.05$ and $\lambda=0.5$ produce lower mean returns than the
shared $\lambda=0.1$ configuration on this five-seed subset.
These comparisons do not establish a globally optimal value of $\lambda$;
they instead document the sensitivity of the Cheetah result around the
shared configuration used in the primary experiments.

\begin{table}[t]
\centering
\caption{
Per-seed returns for the Cheetah auxiliary-objective ablations
(5 matched seeds, $1\mathrm{M}$ training steps).
The table exposes the run-to-run variation underlying the aggregate results
in Table~\ref{tab:cheetah-ablation}.
}
\label{tab:cheetah-ablation-seeds}
\setlength{\tabcolsep}{4pt}
\renewcommand{\arraystretch}{1.05}

\begin{tabular}{rcccccc}
\toprule
Seed
& Official
& Full
& Gap-only
& Pref-only
& $\lambda=0.05$
& $\lambda=0.5$
\\
\midrule

3917 & 561.0 & 689.4 & 668.0 & 633.2 & 683.3 & 623.2 \\
3502 & 658.0 & 615.3 & 631.9 & 597.8 & 692.1 & 642.2 \\
8948 & 639.0 & 686.5 & 708.6 & 692.4 & 567.1 & 689.0 \\
9460 & 566.6 & 672.2 & 623.9 & 611.9 & 607.2 & 619.3 \\
4729 & 627.8 & 693.6 & 639.8 & 613.4 & 624.1 & 573.7 \\

\bottomrule
\end{tabular}

\end{table}

%

\begin{figure}[ht]
\centering
\includegraphics[width=\linewidth]
{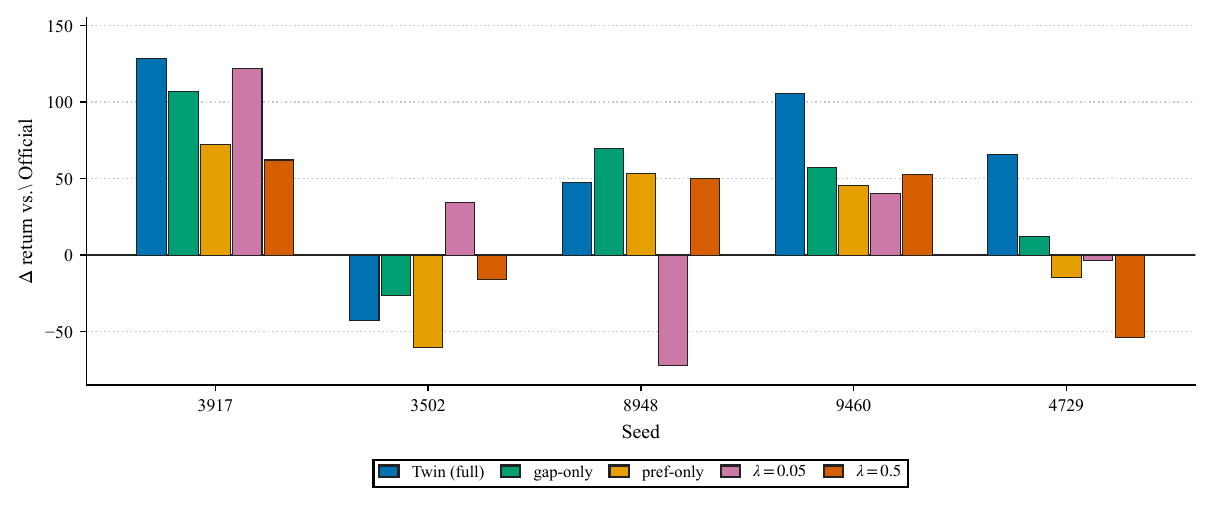}
\caption{
Seed-level return differences for Cheetah auxiliary-objective and
loss-weight ablations under the Fixed-HP protocol.
Each bar reports
$\Delta=\mathrm{Variant}-\mathrm{Official\ TD\text{-}JEPA}$
for one of five matched seeds after $1\mathrm{M}$ training steps.
The figure is intended as a seed-level diagnostic of the aggregate results
in Table~\ref{tab:cheetah-ablation}, rather than as an additional statistical
test.
}
\label{fig:cheetah-ablation-seeds}
\end{figure}

\paragraph{Seed-level behavior.}
Figure~\ref{fig:cheetah-ablation-seeds} and
Table~\ref{tab:cheetah-ablation-seeds} show that the aggregate ordering does
not arise from identical behavior on every seed.
Full TwinJEPA exceeds the matched TD-JEPA baseline on four of the five seeds,
while the component and $\lambda$ variants exhibit different seed-wise
patterns.
These results provide a finer-grained view of the ablation subset and
reinforce the need to interpret the Cheetah ablations in terms of both mean
effects and run-to-run variability.


\section{Quadruped Extended Evaluation}
\label{sec:quadruped}

This section provides an extended analysis of the ExORL/DMC Quadruped
evaluation reported in the main paper.
Among the state-based continuous-control domains considered in our matched
evaluation, Quadruped exhibits the largest domain-level mean increase under
the shared Fixed-HP configuration.
It also differs qualitatively from Walker and Cheetah: the positive effect is
distributed across all four evaluated control objectives rather than being
concentrated on a small subset of tasks.
This makes Quadruped a useful case for examining whether the aggregate gain
reflects a consistent seed-level effect, how it decomposes across individual
tasks, and whether both auxiliary objectives contribute to the observed
improvement.
Accordingly, we report the underlying seed-level returns, exact task-level
values, and an additional auxiliary-weight sensitivity analysis.

\begin{figure}[t]
\centering
\includegraphics[width=0.85\linewidth]
{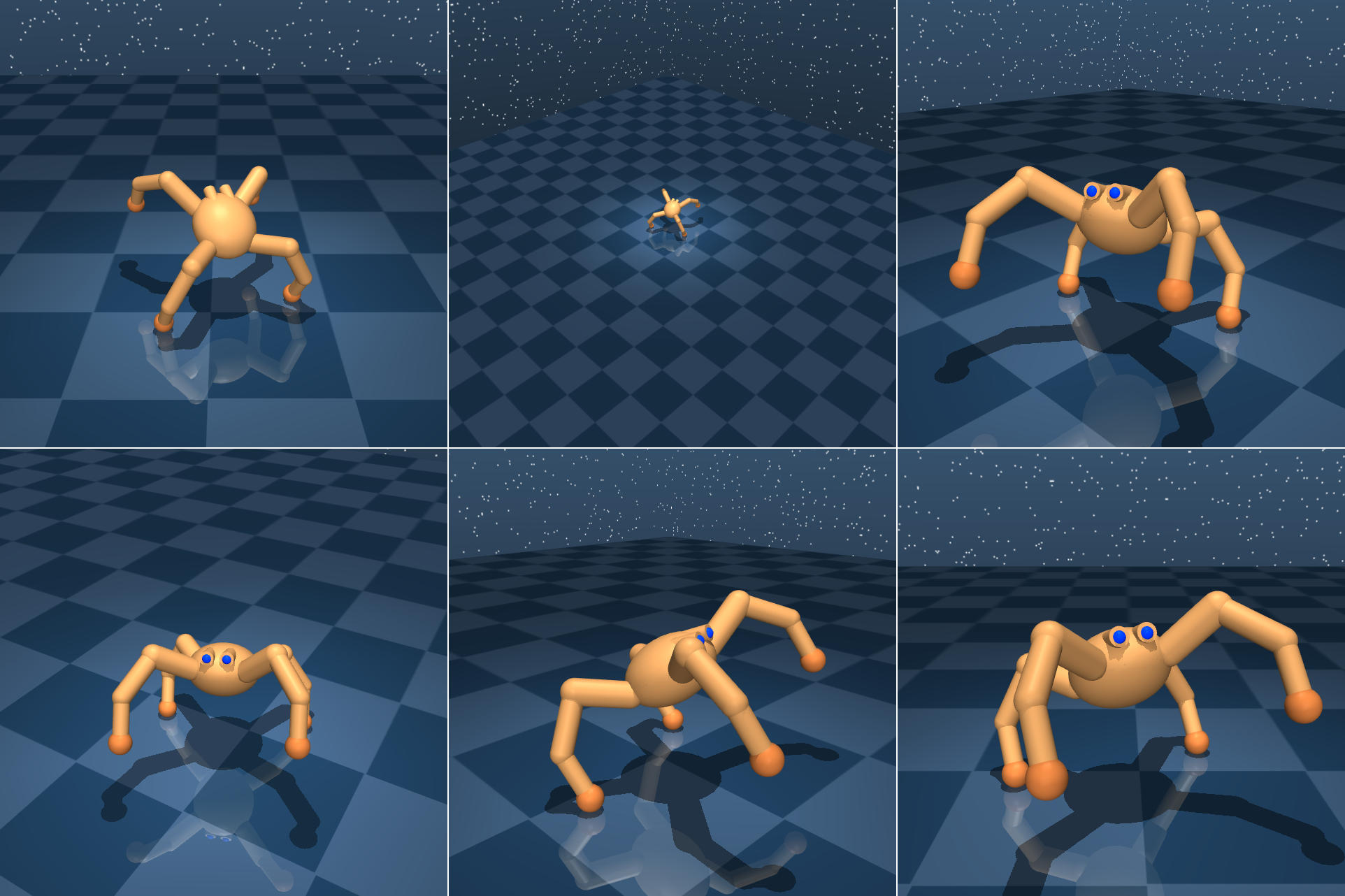}
\caption{
Environment visualizations of the DeepMind Control Suite
\texttt{quadruped-stand} task from six camera viewpoints
(2$\times$3).
The views illustrate the four-legged morphology and spatial configuration
used in the continuous-control evaluation.
}
\label{fig:quadruped-stand-views}
\end{figure}

\paragraph{Experimental setup.}
We evaluate TwinJEPA on the ExORL/DMC Quadruped domain using the
$5\mathrm{k}$-episode RND replay buffer.
Models are trained for $1\mathrm{M}$ gradient steps under the shared Fixed-HP
configuration
$(\psi_{\mathrm{ortho}},\mathrm{lr}_{\psi})=(0.1,10^{-4})$
with five matched seeds.
As in the other state-based comparisons, TD-JEPA and TwinJEPA share the
offline data, backbone, optimization schedule, training budget, task-inference
procedure, random seeds, and evaluation protocol; the additional
action-comparison objective is the systematic difference between the two
methods.

Quadruped observations have a different numerical scale from the other
state-based continuous-control domains.
We therefore use
$\tau_{\mathrm{obs}}=1.0$
for local comparison mining.
This value was selected using a small offline probe intended to avoid an
overly restrictive matching regime rather than by optimizing downstream
evaluation return.
All remaining optimization and evaluation settings follow the shared
protocol.

Zero-shot evaluation averages performance over
\texttt{stand}, \texttt{walk}, \texttt{run}, and \texttt{jump}.
The primary four-task mean increases from
$559.4$ for TD-JEPA to $599.8$ for TwinJEPA
($\Delta{+}{40.4}$).
The paired bootstrap $95\%$ confidence interval reported in the main paper is
$[+20.6,+60.3]$, and all five matched seeds yield positive paired
differences.
Because only five seeds are available, however, the corresponding exact
two-sided sign-flip test attains $p=0.0625$.
We therefore emphasize the effect magnitude, directional consistency, and
task-level structure rather than interpreting the result through a binary
significance threshold.
The analyses below provide the numerical decomposition underlying this
aggregate result.

\subsection{Seed-Level Results}
\label{app:quad-seeds}

To expose the run-to-run variation underlying the aggregate Quadruped result,
Table~\ref{tab:quad-main-seeds} reports the four-task mean return for each of
the five matched seeds.
Because TD-JEPA and TwinJEPA are evaluated with identical seed pairings,
$\Delta$ is computed within each pair.
This paired design reduces variation introduced by comparing independently
sampled seed sets and provides a direct view of how consistently the
TwinJEPA modification changes performance across runs.

\begin{table}[t]
\centering
\caption{
Seed-level zero-shot returns on ExORL/DMC Quadruped under the Fixed-HP
protocol
($\tau_{\mathrm{obs}}=1.0$; five matched seeds; $1\mathrm{M}$ training
steps).
Each row reports the four-task mean for one matched seed, with
$\Delta=\mathrm{TwinJEPA}-\mathrm{Official\ TD\text{-}JEPA}$.
The final row reports mean$\pm$standard deviation across seeds.
}
\label{tab:quad-main-seeds}
\setlength{\tabcolsep}{5pt}
\renewcommand{\arraystretch}{1.05}

\begin{tabular}{rccc}
\toprule
Seed & Official TD-JEPA & TwinJEPA & $\Delta$ \\
\midrule

3502 & 520.8 & 583.1 & $+62.3$ \\
3917 & 612.4 & 634.3 & $+21.9$ \\
4729 & 548.5 & 561.1 & $+12.6$ \\
8948 & 574.5 & 608.4 & $+33.8$ \\
9460 & 540.5 & 612.0 & $+71.6$ \\

\midrule

mean$\pm$std
& $559.4\pm35.4$
& $599.8\pm28.2$
& $+40.4$

\\
\bottomrule
\end{tabular}

\end{table}

TwinJEPA exceeds the matched TD-JEPA baseline on all five seeds, yielding a
mean paired difference of $+40.4$.
The effect is therefore directionally consistent across the available runs,
but its magnitude remains heterogeneous, ranging from $+12.6$ to $+71.6$.
Importantly, the aggregate improvement is not attributable to a single
outlying seed: every matched pair contributes positively, while the spread of
the differences shows that the size of the effect remains sensitive to
stochastic variation. The standard deviation is numerically smaller for TwinJEPA
($28.2$ versus $35.4$), but with only five seeds we do not interpret this
difference as evidence of reduced variance.
Likewise, the $5/5$ positive-seed count should not be treated as a standalone
significance claim.
For five uniformly signed paired differences, the exact two-sided sign-flip
test attains its minimum possible value of $p=0.0625$.
Together with the paired bootstrap interval reported in the main paper, the
seed-level results therefore support a consistent positive direction under
the evaluated protocol while still motivating caution about the precision of
the estimated effect size.

\subsection{Task-Level Numerical Breakdown}
\label{app:quad-task}

The main-paper task-level analysis shows that Quadruped differs from the more
concentrated Walker and Cheetah patterns.
Whereas the larger effects in those domains are concentrated on particular
objectives, all four Quadruped tasks exhibit positive mean differences.
Table~\ref{tab:quad-main-tasks} provides the corresponding numerical
decomposition.

\begin{table}[t]
\centering
\caption{
Task-level zero-shot returns on ExORL/DMC Quadruped under the Fixed-HP
protocol
(five matched seeds; $1\mathrm{M}$ training steps).
Each row reports mean$\pm$standard deviation across seeds for an individual
control objective and the corresponding mean difference
$\Delta=\mathrm{TwinJEPA}-\mathrm{Official\ TD\text{-}JEPA}$.
}
\label{tab:quad-main-tasks}
\setlength{\tabcolsep}{5pt}
\renewcommand{\arraystretch}{1.05}

\begin{tabular}{lccc}
\toprule
Task & Official TD-JEPA & TwinJEPA & $\Delta$ \\
\midrule

\texttt{stand}
& $754.5\pm47.9$
& $804.1\pm35.8$
& $+49.6$
\\

\texttt{walk}
& $462.7\pm77.1$
& $534.8\pm53.1$
& $+72.1$
\\

\texttt{run}
& $384.4\pm23.0$
& $409.2\pm19.1$
& $+24.8$
\\

\texttt{jump}
& $635.9\pm53.2$
& $651.1\pm21.4$
& $+15.2$
\\

\bottomrule
\end{tabular}

\end{table}

The largest mean difference occurs on \texttt{walk} ($+72.1$), followed by
\texttt{stand} ($+49.6$), \texttt{run} ($+24.8$), and \texttt{jump}
($+15.2$).
Thus, the positive domain-level effect is not confined to a single control
objective.
Instead, TwinJEPA exhibits positive mean differences across locomotion,
posture, and jumping behaviors, although the magnitude of the change varies
substantially across tasks.

This broader task-level pattern distinguishes Quadruped from Walker and
Cheetah, where the aggregate gains are concentrated more strongly on selected
objectives.
One possible interpretation is that the locally mined comparisons provide
useful action-discriminative supervision across a wider range of behaviors in
Quadruped.
However, the present experiments do not isolate the cause of this pattern.
Offline coverage, reward structure, baseline performance, control geometry,
and the informativeness of the mined comparison distribution may all
contribute to the observed differences.
We therefore treat the task ordering as descriptive rather than as evidence
for a causal relationship between task difficulty, performance headroom, and
the magnitude of the TwinJEPA effect.

\subsection{Additional Auxiliary-Objective Sensitivity}
\label{app:quadruped-ablation}

The main paper reports the Quadruped Full, Gap-only, and Pref-only
configurations as part of the cross-domain component analysis.
Quadruped is particularly informative for this comparison because the
full objective produces a substantially larger mean return than either
isolated auxiliary term, unlike the component ordering observed on AntMaze
and Cheetah.
For completeness, Table~\ref{tab:quad-ablation} additionally reports the
$\lambda=0.05$ configuration and the differences relative to both TD-JEPA
and the shared full TwinJEPA setting.

\begin{table}[t]
\centering
\caption{
Additional auxiliary-objective results on ExORL/DMC Quadruped under the
Fixed-HP protocol
($\psi_{\mathrm{ortho}}=0.1$,
$\mathrm{lr}_{\psi}=10^{-4}$,
$\tau_{\mathrm{obs}}=1.0$;
five matched seeds;
$1\mathrm{M}$ training steps).
Entries report mean$\pm$standard deviation of the four-task return.
Gap-only removes preference classification, Pref-only removes reward-gap
regression, and the $\lambda=0.05$ variant retains both auxiliary objectives
with a reduced overall auxiliary weight.
}
\label{tab:quad-ablation}
\setlength{\tabcolsep}{4pt}
\renewcommand{\arraystretch}{1.05}

\begin{tabular}{lccc}
\toprule
Variant
& Mean return
& $\Delta$ vs.\ Official
& $\Delta$ vs.\ Full
\\
\midrule

Official TD-JEPA
& $559.4{\pm}35.4$
& --
& $-40.4$
\\

Full TwinJEPA ($\lambda=0.1$)
& $\mathbf{599.8{\pm}28.2}$
& $\mathbf{+40.4}$
& --
\\

Gap-only
& $554.4{\pm}18.2$
& $-5.0$
& $-45.4$
\\

Pref-only
& $548.4{\pm}26.8$
& $-10.9$
& $-51.4$
\\

$\lambda=0.05$
& $569.2{\pm}33.6$
& $+9.8$
& $-30.6$
\\

\bottomrule
\end{tabular}

\end{table}

Within this five-seed Quadruped comparison, the full objective has the highest
mean return among the evaluated variants.
Neither isolated component reproduces the full-objective result:
Gap-only and Pref-only obtain mean returns of $554.4$ and $548.4$,
respectively, compared with $599.8$ for Full TwinJEPA.
Moreover, both isolated components fall slightly below the matched TD-JEPA
baseline.
In this domain, the observed gain therefore appears specifically associated
with the combined comparative objective under the evaluated configuration,
rather than with either reward-gap regression or preference classification
alone. The reduced-weight configuration
$\lambda=0.05$
reaches $569.2$, retaining a positive mean difference of $+9.8$ relative to
TD-JEPA but remaining $30.6$ below the shared $\lambda=0.1$ configuration.
This result indicates that reducing the overall strength of comparative
supervision preserves part, but not all, of the observed Quadruped gain.
Together with the isolated-component results, it suggests that both the
composition and weighting of the auxiliary objective matter in this domain.

These observations should not be generalized into a universal claim of
synergy between reward-gap regression and preference classification.
The cross-domain ablations show that component behavior changes substantially
across AntMaze, Cheetah, and Quadruped: isolated objectives can be competitive
or stronger in other environments, whereas Quadruped is a setting in which
the combined objective is clearly stronger under the evaluated protocol.
The result therefore supports the paper's broader interpretation that the two
signals provide distinct, domain-dependent forms of comparative supervision
rather than a fixed hierarchy of component importance.

Likewise, the $\lambda$ comparison is a local sensitivity analysis rather
than evidence that $\lambda=0.1$ is globally optimal.
Together with the AntMaze and Cheetah sensitivity analyses, it indicates that
the effectiveness of comparative supervision depends not only on whether the
auxiliary signals are present, but also on their relative weighting and their
interaction with the underlying domain.
Overall, the Quadruped results provide the clearest case in our evaluation in
which local action-comparison supervision is positive across matched seeds,
distributed across all evaluated tasks, and strongest when the two auxiliary
signals are used jointly, while the small number of seeds and cross-domain
variation still motivate a deliberately limited interpretation.
\section{D4RL Maze2D Extended Evaluation}
\label{sec:maze2d-evaluation}

\begin{figure}[t]
\centering
\includegraphics[width=0.85\linewidth]
{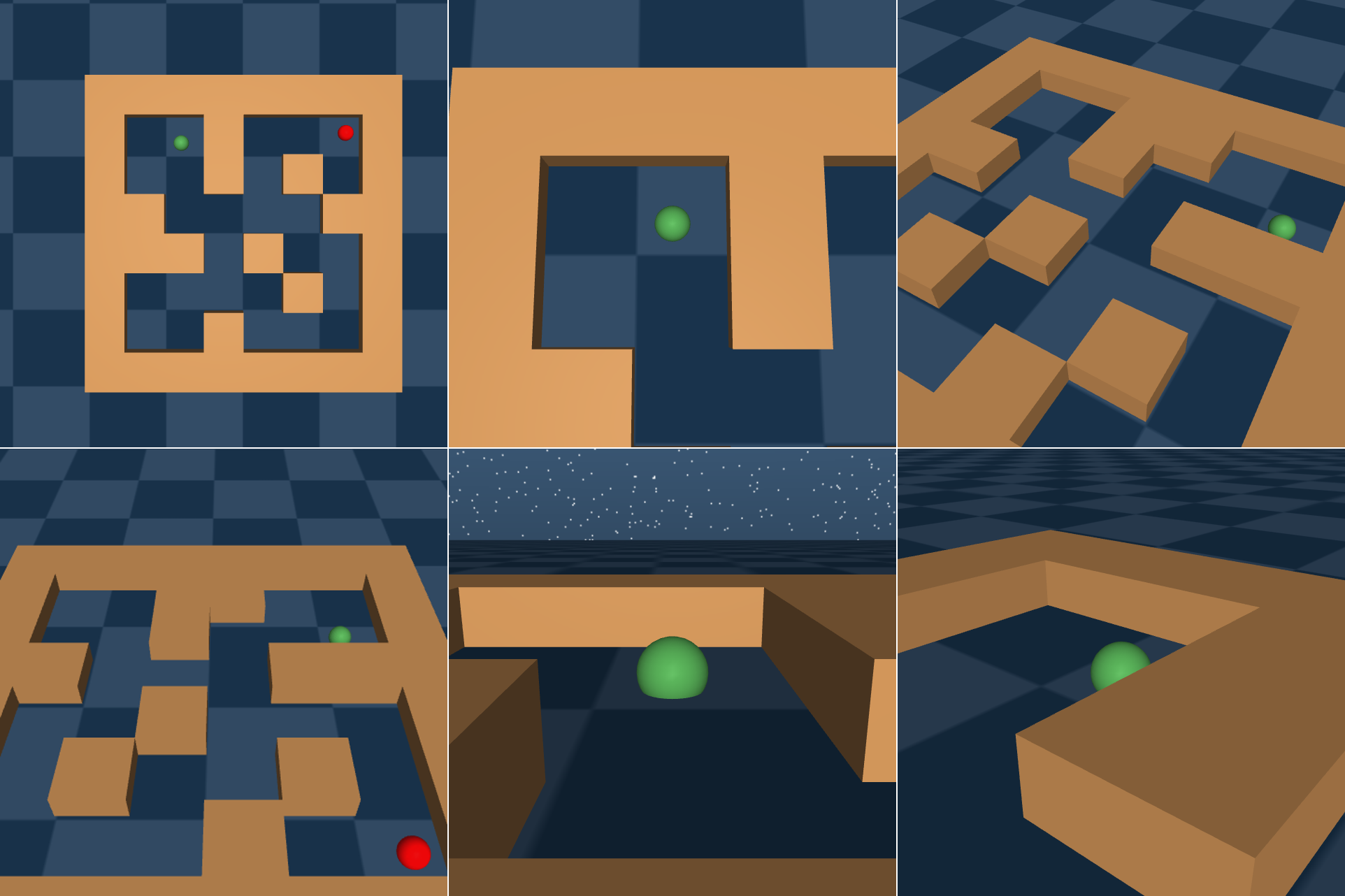}
\caption{
Environment visualizations of the D4RL
\texttt{maze2d-medium-v1} environment from six camera viewpoints
(2$\times$3).
The views illustrate the planar navigation structure used for the
goal-conditioned control evaluation.
}
\label{fig:maze2d-medium-views}
\end{figure}

\paragraph{Maze2D setup.}
We further evaluate TwinJEPA on D4RL
\texttt{maze2d-medium-v1}, a planar point-mass navigation domain that
removes the high-dimensional locomotion dynamics present in AntMaze while
preserving the goal-reaching objective.
This environment provides a complementary evaluation of whether the
comparative supervision introduced by TwinJEPA remains effective when the
control problem is simplified to geometric navigation.

Models are trained for $1\mathrm{M}$ gradient steps using five matched seeds.
The evaluation follows the shared Fixed-HP configuration
$(\psi_{\mathrm{ortho}},\mathrm{lr}_{\psi})=(0.1,10^{-4})$.
Local action-comparison pairs are constructed by bucketing the first two
position coordinates, using
$\tau_{\mathrm{obs}}=0.15$,
$d_b=2$, and
$\epsilon_a=0.01$.
All other training and evaluation procedures follow the common protocol
described in Section~\ref{sec:additional-details}.

\paragraph{Task-level results.}
Figure~\ref{fig:maze2d-fixedhp-pertask} reports the task-level zero-shot
success rates under the shared Fixed-HP configuration.
TwinJEPA improves the five-task mean success rate from
$9.2\%$ to $14.4\%$, corresponding to an absolute difference of
$+5.2$ percentage points.

The improvement is not uniformly distributed across tasks.
T1 and T2 remain at zero success for both methods, whereas TwinJEPA
improves T3, T4, and T5 by $12$, $6$, and $8$ percentage points,
respectively.
Therefore, the aggregate gain at this operating point is mainly contributed
by the subset of tasks where the baseline achieves nonzero performance.
The two unsolved tasks also illustrate that the auxiliary objectives do not
eliminate all difficulties associated with long-horizon goal reaching.

\begin{figure}[ht]
\centering
\includegraphics[width=\linewidth]
{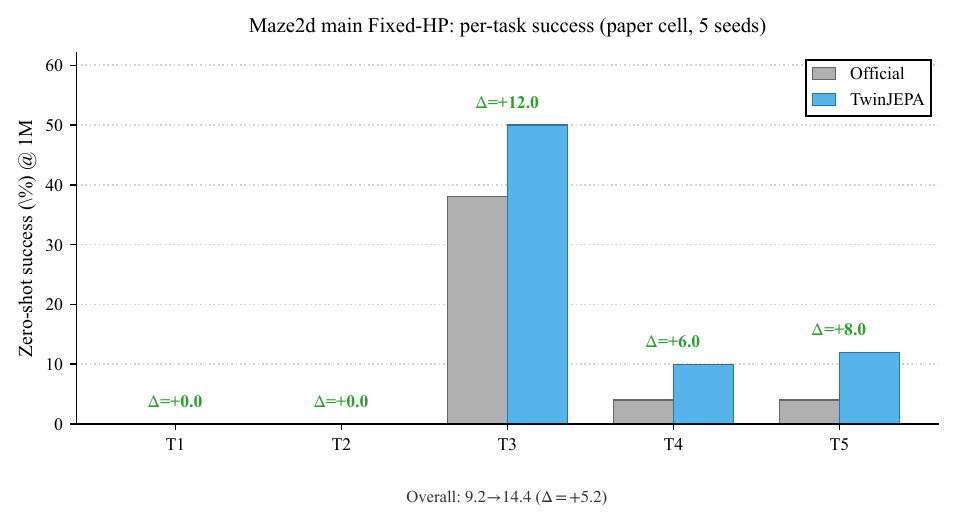}

\caption{
Per-task zero-shot success rates on D4RL
\texttt{maze2d-medium-v1} after $1\mathrm{M}$ training steps under the shared
Fixed-HP configuration
($\psi_{\mathrm{ortho}}=0.1$,
$\mathrm{lr}_{\psi}=10^{-4}$;
five matched seeds).
Bars compare Official TD-JEPA and TwinJEPA, and annotations report
$\Delta=\mathrm{TwinJEPA}-\mathrm{Official\ TD\text{-}JEPA}$
in percentage points.
T1 and T2 remain unsolved by both methods, whereas T3, T4, and T5 show
positive differences of $12$, $6$, and $8$ percentage points.
The five-task mean success rate increases from $9.2\%$ to $14.4\%$
under this shared operating point.
}
\label{fig:maze2d-fixedhp-pertask}

\end{figure}

\paragraph{Interpretation.}
Maze2D provides a complementary but more limited cross-domain evaluation.
Under the shared Fixed-HP configuration, TwinJEPA improves the average
success rate and yields positive differences on the three tasks where
nonzero performance is observed. At the same time, the result should be interpreted together with the
characteristics of this domain.
Unlike AntMaze and continuous-control benchmarks, Maze2D contains simpler
dynamics and a lower-dimensional state space, which changes the type of
information captured by the mined local comparisons.
The current results therefore do not imply that comparative supervision
provides identical benefits across all goal-conditioned navigation settings.

Instead, Maze2D serves as a robustness evaluation showing that the proposed
auxiliary objectives can remain useful beyond high-dimensional locomotion
and complex maze navigation, while also highlighting that their contribution
depends on the interaction between the representation objective and the
specific downstream task.
\section{Pixel-Based ExORL Evaluation}
\label{app:rgb-obs}

The main cross-domain experiments use proprioceptive state observations.
Here we additionally evaluate the same ExORL control domains from rendered
pixel observations to examine whether the TwinJEPA training objective remains
applicable when representations must be learned from visual input.

The state and pixel protocols use the same underlying ExORL RND episodes and
evaluation tasks, but differ in the observations provided to the agent.
Figure~\ref{fig:rgb-vs-state-obs} illustrates this distinction using one
shared trajectory from each domain.
The first three columns show the $64\times64$ RGB frames at
$t{-}2$, $t{-}1$, and $t$, which form the stacked input to the pixel encoder.
The final column shows the proprioceptive state at the same time step, with
dimensions 24, 17, and 78 for Walker, Cheetah, and Quadruped, respectively.

Pixel buffers are generated once by rendering the corresponding offline state
episodes and are then cached for training.
Online rendering is required only during evaluation rollouts.
We therefore treat the pixel experiments as an observation-modality extension
rather than as a new environment or data-collection setting.
State and pixel returns are not compared directly because the two protocols
also differ in encoder architecture, batch size, and training horizon.

\begin{figure}[t]
\centering
\includegraphics[width=\linewidth]
{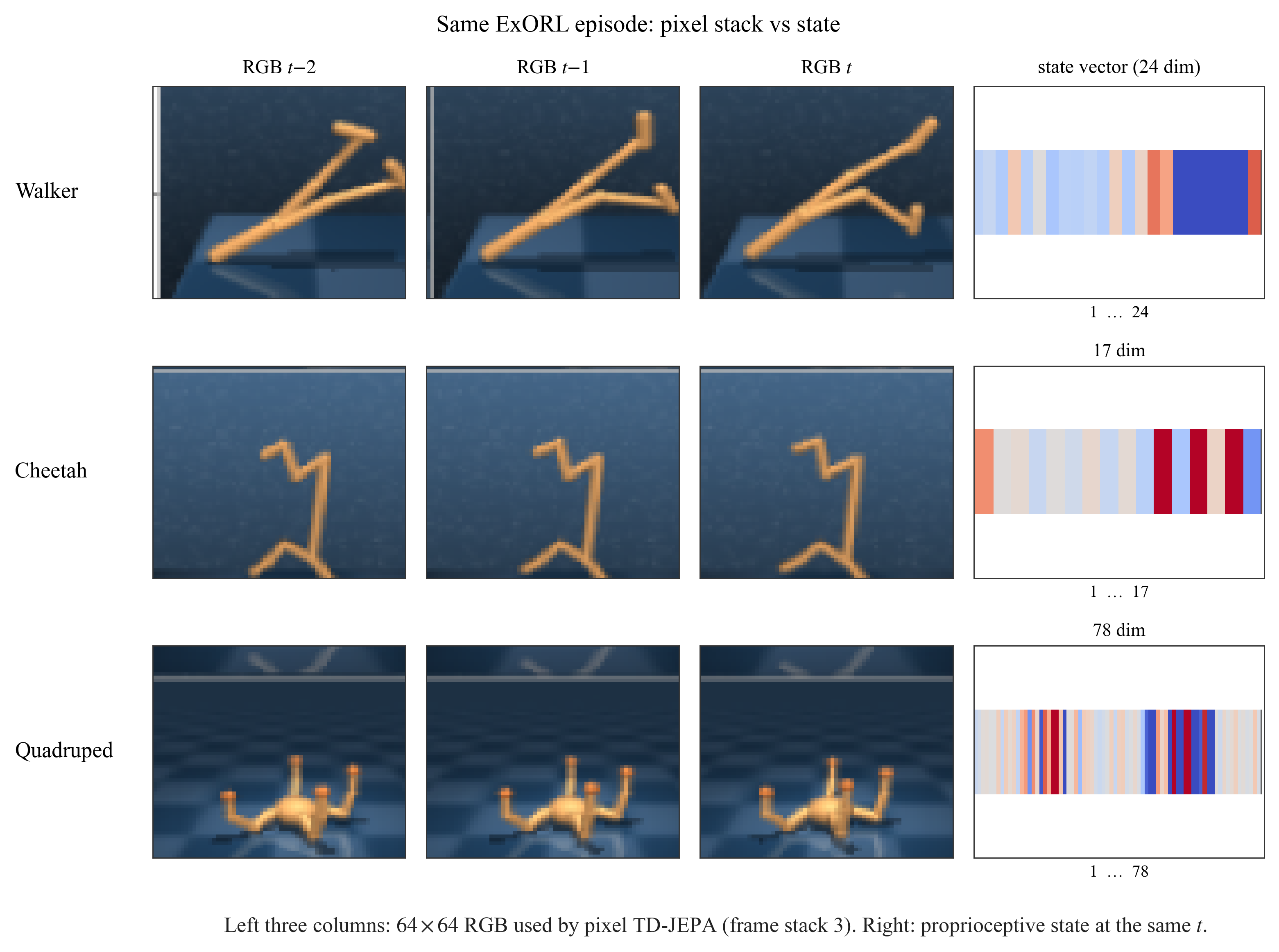}
\caption{
Two observation representations of the same underlying ExORL trajectories.
Each row corresponds to Walker, Cheetah, or Quadruped.
Columns 1--3 show the $64\times64$ RGB frames at
$t{-}2$, $t{-}1$, and $t$ used as the stacked pixel input to the DrQ encoder.
Column 4 shows the proprioceptive state vector at the same time step
(24, 17, and 78 dimensions, respectively).
The figure illustrates the change in observation modality rather than a
direct performance comparison between the state and pixel protocols.
}
\label{fig:rgb-vs-state-obs}
\end{figure}


\paragraph{Pixel ExORL setup.}
We evaluate TD-JEPA and TwinJEPA from pixels on the ExORL Walker, Cheetah,
and Quadruped domains under a shared Fixed-HP protocol.
Both methods use a DrQ encoder with $64\times64$ RGB observations, a frame
stack of three, batch size 512, and
$2\times10^{6}$ gradient steps, with
$\psi_{\mathrm{ortho}}=0.1$ and
$\mathrm{lr}_{\psi}=10^{-4}$.

TwinJEPA uses the same reward-gap regression and preference-classification
objectives as in the state experiments, with $\lambda=0.1$, applied through
the corresponding pixel-specific comparison pipeline.
The auxiliary heads are used only during training and introduce no additional
computation during evaluation.

Evaluation averages 10 episodes per task and then averages task returns
within each domain.
Walker and Quadruped use five matched seeds
($3917,3502,8948,9460,4729$), while Cheetah is available for all ten matched
seeds.
Because the number of seeds differs across domains, the domain-level results
are interpreted separately.
For completeness, we also report an unweighted average over the three domain
means as a descriptive summary rather than as a pooled statistical estimate.

\begin{table}[t]
\centering
\caption{
Pixel-based ExORL performance under the Fixed-HP protocol
($2\times10^{6}$ training steps).
Walker and Quadruped use five matched seeds, while the primary Cheetah result
uses all ten available matched seeds.
The Average row reports the unweighted mean of the three domain-level means.
For Cheetah, we additionally report one five-seed sensitivity subset sampled
with a fixed random seed after excluding $8948$ and $4729$, whose TD-JEPA
\texttt{run} returns are unusually high relative to the remaining runs.
This subset is used only as a seed-composition diagnostic and does not replace
the complete ten-seed comparison.
}
\label{tab:rgb-main}

\begin{tabular}{lcccc}
\toprule
Setting & $n$ & TD-JEPA & TwinJEPA & $\Delta$ \\
\midrule

Walker
& 5
& $716.6 \pm 60.0$
& $\mathbf{741.8 \pm 35.9}$
& $+25.3$
\\

Cheetah
& 10
& $\mathbf{694.8 \pm 23.2}$
& $686.4 \pm 12.1$
& $-8.5$
\\

Quadruped
& 5
& $645.0 \pm 24.6$
& $\mathbf{650.7 \pm 29.0}$
& $+5.6$
\\

\midrule

Unweighted domain avg.
& --
& $685.5$
& $\mathbf{693.0}$
& $+7.5$
\\

\midrule

\multicolumn{5}{l}{\textit{Cheetah seed-set sensitivity}} \\

Random five
& 5
& $685.1 \pm 14.7$
& $\mathbf{689.4 \pm 14.9}$
& $+4.3$
\\

\bottomrule
\end{tabular}
\end{table}

\paragraph{Cheetah seed-set sensitivity.}
The complete ten-seed Cheetah comparison yields a mean difference of
$-8.5$ points.
Inspection of the individual runs shows that TD-JEPA seeds $8948$ and $4729$
have comparatively high \texttt{run} returns of $589$ and $549$,
respectively.

To assess sensitivity to seed composition, we perform an additional
diagnostic after excluding these two seeds.
One five-seed subset is sampled uniformly at random from the remaining eight
seeds using a fixed random seed for reproducibility.
The resulting subset is
$\{2226,3917,4501,6341,9460\}$.
On this subset, TD-JEPA obtains a mean return of $685.1$ and TwinJEPA obtains
$689.4$, corresponding to $\Delta{+}{4.3}$. As this exclusion and resampling analysis is performed after inspecting
the complete runs, we do not treat the random five-seed result as a replacement
for the ten-seed estimate.
Instead, the sign change from $-8.5$ on all ten seeds to $+4.3$ on the
diagnostic subset indicates that the Cheetah pixel result is sensitive to
seed composition.
Accordingly, we do not use Cheetah pixels as evidence for a reliable positive
performance effect.

\paragraph{Task-level diagnostics.}
For Walker and Quadruped, the five-seed domain-level results can be decomposed
directly into their constituent control tasks.
We omit the Cheetah task-level decomposition here because the previously
computed task averages correspond to a different seed subset and are therefore
not consistent with either the complete ten-seed result or the fixed random
five-seed diagnostic reported above.

\begin{table}[t]
\centering
\caption{
Task-level pixel returns for Walker and Quadruped under the same five-seed
pairings used in Table~\ref{tab:rgb-main}.
Cheetah task-level values are omitted because the available decomposition was
computed on a different seed subset from the revised comparisons reported
above.
}
\label{tab:rgb-tasks}

\begin{tabular}{llccc}
\toprule
Domain & Task & TD-JEPA & TwinJEPA & $\Delta$ \\
\midrule

Walker
& stand
& $784.0$
& $\mathbf{887.2}$
& $+103.2$
\\

& walk
& $811.9$
& $\mathbf{815.7}$
& $+3.7$
\\

& spin
& $976.9$
& $\mathbf{978.1}$
& $+1.2$
\\

& run
& $\mathbf{293.4}$
& $286.4$
& $-7.1$
\\

\midrule

Quadruped
& run
& $453.8$
& $\mathbf{472.3}$
& $+18.5$
\\

& walk
& $461.6$
& $\mathbf{470.7}$
& $+9.1$
\\

& stand
& $904.6$
& $\mathbf{917.4}$
& $+12.9$
\\

& jump
& $\mathbf{760.2}$
& $742.3$
& $-17.9$
\\

\bottomrule
\end{tabular}
\end{table}

\paragraph{Task-level behavior.}
On Walker, the positive domain-level difference is concentrated primarily on
\texttt{stand}, which increases from $784.0$ to $887.2$.
The differences on \texttt{walk} and \texttt{spin} are small, while
\texttt{run} decreases by $7.1$ points.
The Walker pixel result therefore reflects a concentrated task-level effect
rather than a uniform improvement across all four control objectives.

Quadruped shows a smaller positive domain-level difference.
TwinJEPA has higher mean returns on \texttt{run}, \texttt{walk}, and
\texttt{stand}, with differences of $+18.5$, $+9.1$, and $+12.9$,
respectively, while \texttt{jump} decreases by $17.9$ points.
The task-level pattern is therefore mixed despite the positive domain mean.

\paragraph{Overall interpretation.}
The pixel experiments are intended primarily as an observation-modality
extension of the state-based evaluation rather than as an additional claim
of uniform performance improvement.
They show that the same comparative representation-learning objective can be
applied when observations are provided as cached RGB frame stacks rather than
proprioceptive state vectors.

Across the three primary domain comparisons, the unweighted domain mean
changes from $685.5$ to $693.0$, corresponding to a descriptive difference
of $+7.5$ points.
This aggregate should be interpreted cautiously because the domains use
different numbers of matched seeds and the observed effects are
heterogeneous.

Walker shows the largest positive mean difference, concentrated primarily on
\texttt{stand}.
Quadruped shows a smaller positive mean difference with mixed task-level
effects.
Cheetah differs from both domains, with the complete ten-seed comparison
favoring TD-JEPA and the fixed random five-seed diagnostic favoring TwinJEPA.
The Cheetah result therefore illustrates sensitivity to seed composition
rather than a stable directional effect.

Taken together, these experiments support the applicability of the TwinJEPA
training objective to visual observations, but they do not establish
modality-independent or uniformly positive performance gains.
The pixel results also remain separate from the state-based comparisons
because the encoder architecture, batch size, and training horizon differ
between the two protocols.

\section{Theoretical Properties of TwinJEPA}
\label{sec:theory-twin}

We analyze TwinJEPA as an additive extension of the TD-JEPA representation
objective and characterize the supervision introduced by reward-gap
regression and preference classification.
The analysis concerns the fixed distribution of local action-comparison
pairs constructed from the offline dataset.
It establishes four main properties and an affine-head specialization.

\begin{enumerate}

\item \textbf{Target consistency.}
The reward-gap and preference targets are derived from the same selected
signed reward contrast.
Reward-gap regression uses its magnitude, whereas preference classification
uses its induced ordering.

\item \textbf{Exact reduction to TD-JEPA.}
When the auxiliary contribution is disabled, the TwinJEPA objective reduces
exactly to the original TD-JEPA objective.
Any theoretical property established for TD-JEPA under its original
assumptions therefore applies to this reduced case.

\item \textbf{Reward-gap decodability.}
For a fixed representation, the minimum achievable squared reward-gap risk is
the expected conditional variance of the reward-gap magnitude given that
representation.
Reducing this risk therefore requires the representation to preserve
information predictive of outcome-contrast magnitude under the mined pair
distribution.

\item \textbf{Local ranking control.}
On the mined pair distribution, the preference cross-entropy upper-bounds the
probability that the learned scorer reverses the reward-induced ordering of a
retained action pair, up to the factor $1/\log 2$.

\end{enumerate}

These statements characterize the representation-level information promoted
by the auxiliary objectives.
They do not imply that TwinJEPA improves every downstream task, that the two
auxiliary terms are universally complementary, or that the theoretical
guarantees of TD-JEPA continue to hold unchanged when the auxiliary weights
are nonzero.

\paragraph{Setup.}
Let
$
\phi:\mathcal{O}\rightarrow\mathbb{R}^{d_{\phi}}
$
denote the observation encoder, and let
$\mathcal{L}_{\mathrm{base}}$ denote the original TD-JEPA representation
objective used in the main method.
TwinJEPA optimizes
\begin{equation}
\label{eq:twin-total}
\mathcal{L}_{\mathrm{Twin}}
=
\mathcal{L}_{\mathrm{base}}
+
\lambda
\left(
\alpha_{\mathrm{gap}}\mathcal{L}_{\mathrm{gap}}
+
\alpha_{\mathrm{pref}}\mathcal{L}_{\mathrm{pref}}
\right),
\end{equation}
where
$\lambda,\alpha_{\mathrm{gap}},\alpha_{\mathrm{pref}}\geq 0$.

A retained local action-comparison sample is written as
$
\zeta
=
(o_{\mathrm{anc}},a_A,a_B,\Delta r,y),
$
where $o_{\mathrm{anc}}$ is the anchor observation used by the auxiliary
heads, $a_A$ and $a_B$ are the compared actions,
$\Delta r\in\mathbb{R}_{\geq 0}$ is the selected reward-gap magnitude, and
$y\in\{0,1\}$ is the associated preference label.
Let $\mathcal{P}_{\mathrm{pair}}$ denote the resulting pair distribution and
define
$
Z=\phi(o_{\mathrm{anc}}).
$

The analysis below treats $\mathcal{P}_{\mathrm{pair}}$ as fixed.
Thus, it characterizes the supervision induced by a given pair-mining
procedure rather than the statistical quality of the mining procedure itself.

Under the implemented auxiliary-head interfaces, the reward-gap predictor
uses $Z$, whereas the preference predictor additionally receives the two
compared actions.
The corresponding population objectives are
\begin{align}
\label{eq:twin-gap}
\mathcal{L}_{\mathrm{gap}}
&=
\mathbb{E}_{\zeta\sim\mathcal{P}_{\mathrm{pair}}}
\left[
\left(
g_{\theta}(Z)-\Delta r
\right)^2
\right],
\\
\label{eq:twin-pref}
\mathcal{L}_{\mathrm{pref}}
&=
\mathbb{E}_{\zeta\sim\mathcal{P}_{\mathrm{pair}}}
\left[
-\log
p_{\theta}
\left(
y\mid Z,a_A,a_B
\right)
\right].
\end{align}

\begin{assumption}[Goal-conditioned reward family]
\label{ass:goal-rewards}

Let $\{r_k\}_{k=1}^{K}$ denote the finite family of goal-conditioned reward
functions used by the relabeling rule that constructs the comparison targets.
Consider two transitions $A$ and $B$ with next physical states
$x'_A$ and $x'_B$ and actions $a_A$ and $a_B$.
For each goal $k$, define the signed reward contrast
\begin{equation}
\label{eq:goal-gap}
\delta r_k
=
r_k(x'_A,a_A)
-
r_k(x'_B,a_B).
\end{equation}

The selected goal index is
\begin{equation}
\label{eq:selected-goal}
k^{\star}
=
\min
\left\{
k\in\{1,\ldots,K\}
:
|\delta r_k|
=
\max_{j\in\{1,\ldots,K\}}
|\delta r_j|
\right\},
\end{equation}
where the minimum index provides a deterministic tie-breaking convention.

Define the selected signed contrast
$
\delta r^\star
=
\delta r_{k^\star}.
$
Consistent with the main method, the pair targets are
\begin{equation}
\label{eq:pair-labels}
\Delta r
=
|\delta r^\star|,
\qquad
y
=
\mathbf{1}
\left\{
\delta r^\star<0
\right\}.
\end{equation}
Thus, $\Delta r$ records the magnitude of the selected outcome contrast,
while $y=0$ denotes that transition $A$ has the larger selected reward and
$y=1$ denotes that transition $B$ has the larger selected reward.

\end{assumption}

\begin{assumption}[Validity of retained local comparisons]
\label{ass:pair-validity}

Every retained pair satisfies the local observation-matching rule used by the
pair miner together with
$
\lVert a_A-a_B\rVert_2
\geq
\varepsilon_a,
\qquad
\Delta r
\geq
\varepsilon_r,
$
for $\varepsilon_a>0$ and $\varepsilon_r>0$.

Let
$\mathcal{M}_{\tau}(o_A,o_B)=1$
denote that the two observations satisfy the corresponding local matching
rule.
The retained pairs therefore satisfy
$
\mathcal{M}_{\tau}(o_A,o_B)=1.
$

For the AntMaze state experiments, this rule is implemented by
$
b(o)
=
\operatorname{round}
\left(
\frac{q(o)}{\tau_{\mathrm{obs}}}
\right),
$
where $q(o)$ contains the first $d_b=2$ positional coordinates and a pair is
eligible only when
$b(o_A)=b(o_B)$.
The reported AntMaze configuration uses
$\tau_{\mathrm{obs}}=0.3$,
$\varepsilon_a=0.01$, and
$\varepsilon_r=10^{-6}$.

\end{assumption}

\begin{proposition}[Consistency of the auxiliary targets]
\label{prop:target-consistency}

Under Assumptions~\ref{ass:goal-rewards} and~\ref{ass:pair-validity},
\begin{equation}
\label{eq:target-consistency}
\begin{aligned}
\Delta r
&=
|\delta r^\star|
>0,
\\
y=0
&\iff
\delta r^\star>0
\iff
r_{k^{\star}}(x'_A,a_A)
>
r_{k^{\star}}(x'_B,a_B),
\\
y=1
&\iff
\delta r^\star<0
\iff
r_{k^{\star}}(x'_A,a_A)
<
r_{k^{\star}}(x'_B,a_B).
\end{aligned}
\end{equation}

Moreover, $k^{\star}$ maximizes the absolute one-step reward contrast over the
candidate reward family.
Thus, reward-gap regression and preference classification are derived from
the same selected signed contrast $\delta r^\star$: $\Delta r$ captures its
magnitude, whereas $y$ captures its ordering.

\end{proposition}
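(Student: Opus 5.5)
The argument is a direct unwinding of the definitions in Assumptions~\ref{ass:goal-rewards} and~\ref{ass:pair-validity}; no earlier result beyond these is needed. I will establish the three displayed lines of Eq.~\ref{eq:target-consistency} in order, then the maximality claim, and finally note the shared-contrast interpretation as a corollary of these identities.

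\textbf{Step 1 (positivity of the gap).} By Eq.~\ref{eq:pair-labels}, $\Delta r=|\delta r^\star|$ holds by definition. Assumption~\ref{ass:pair-validity} gives $\Delta r\geq\varepsilon_r>0$ for every retained pair, hence $|\delta r^\star|>0$. This immediately excludes $\delta r^\star=0$. This exclusion is the only point where the retention filter matters: it makes the sign of $\delta r^\star$ a genuine dichotomy.

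\textbf{Step 2 (label equivalences).} From Eq.~\ref{eq:pair-labels}, $y=1\iff\delta r^\star<0$ by definition of the indicator. Since $\delta r^\star\neq 0$ by Step~1, its negation gives $y=0\iff\delta r^\star\geq 0\iff\delta r^\star>0$. Substituting $\delta r^\star=\delta r_{k^\star}=r_{k^\star}(x'_A,a_A)-r_{k^\star}(x'_B,a_B)$ from Eq.~\ref{eq:goal-gap} turns the sign conditions into the stated strict reward inequalities.

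\textbf{Step 3 (maximality of $k^\star$).} The set in Eq.~\ref{eq:selected-goal} is nonempty, because $K$ is finite and the maximum of $|\delta r_j|$ over $j$ is attained. Its minimum element $k^\star$ therefore exists, is well defined, and satisfies $|\delta r_{k^\star}|=\max_j|\delta r_j|$. The minimum-index tie-break ensures that $\delta r^\star$, and hence both $\Delta r$ and $y$, are single-valued. This matters because two maximizing goals could otherwise carry opposite signs.

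There is no real obstacle. The one point needing care is this tie-breaking and the exclusion of $\delta r^\star=0$. Without the former, the label could be ambiguous when distinct maximizers disagree in sign; without the latter, the claimed equivalence $y=0\iff\delta r^\star>0$ would fail. With both in place, $\Delta r$ and $y$ are respectively the magnitude and the sign-induced ordering of the same quantity $\delta r^\star$, which is the concluding claim of the proposition.
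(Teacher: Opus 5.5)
Your proposal is correct and follows the paper's proof sketch essentially step for step. You derive positivity of $\Delta r$ from the retention threshold $\varepsilon_r>0$, obtain the label equivalences by reading off the sign of $\delta r^\star$ in Eq.~\ref{eq:pair-labels}, and get maximality of $k^\star$ directly from Eq.~\ref{eq:selected-goal}, and you add two harmless refinements the paper leaves implicit: the biconditionals via negation of the indicator, and a check that the minimum-index selection is well defined and makes $y$ single-valued.
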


\begin{proof}[Proof sketch]

By Assumption~\ref{ass:goal-rewards},
$
\Delta r
=
|\delta r^\star|
$
and
$
y
=
\mathbf{1}\{\delta r^\star<0\}.
$
Assumption~\ref{ass:pair-validity} gives
$\Delta r\geq\varepsilon_r>0$, excluding ties.

If $\delta r^\star>0$, transition $A$ has the larger selected reward and
Eq.~\ref{eq:pair-labels} gives $y=0$.
If $\delta r^\star<0$, transition $B$ has the larger selected reward and
Eq.~\ref{eq:pair-labels} gives $y=1$.
The maximal-contrast property follows directly from
Eq.~\ref{eq:selected-goal}.

\end{proof}

\begin{proposition}[Exact reduction to TD-JEPA]
\label{prop:reduction}

If $\lambda=0$, or if
$\alpha_{\mathrm{gap}}=\alpha_{\mathrm{pref}}=0$, then
\begin{equation}
\label{eq:twin-reduction}
\mathcal{L}_{\mathrm{Twin}}
=
\mathcal{L}_{\mathrm{base}}.
\end{equation}
Therefore, under either condition, TwinJEPA reduces exactly to the original
TD-JEPA representation objective.

\end{proposition}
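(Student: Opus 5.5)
The plan is to substitute each condition directly into the definition of $\mathcal{L}_{\mathrm{Twin}}$ in Eq.~\ref{eq:twin-total}. The auxiliary contribution there is the product $\lambda\bigl(\alpha_{\mathrm{gap}}\mathcal{L}_{\mathrm{gap}}+\alpha_{\mathrm{pref}}\mathcal{L}_{\mathrm{pref}}\bigr)$. It therefore vanishes identically if either the outer weight or both inner weights are zero, provided the two auxiliary losses are finite.

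The first step is to confirm that finiteness. For $\mathcal{L}_{\mathrm{gap}}$ in Eq.~\ref{eq:twin-gap}, finiteness holds whenever $\Delta r$ has a finite second moment under $\mathcal{P}_{\mathrm{pair}}$ and the head output $g_\theta(Z)$ is finite. This is automatic here: the pair buffer $\mathcal{D}_{\mathrm{pair}}$ is a finite fixed set, so $\mathcal{P}_{\mathrm{pair}}$ is an empirical distribution over finitely many samples. For $\mathcal{L}_{\mathrm{pref}}$ in Eq.~\ref{eq:twin-pref}, finiteness needs $p_\theta(y\mid Z,a_A,a_B)>0$. This holds because the preference head outputs $\widehat p\in(0,1)$ (a sigmoid), so $-\log p_\theta$ is finite on each of the finitely many samples. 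Once this is settled, $0\cdot\mathcal{L}_{\mathrm{gap}}=0\cdot\mathcal{L}_{\mathrm{pref}}=0$ holds without any $0\cdot\infty$ ambiguity. I expect this step to be the main (mild) obstacle, and I would dispatch it exactly this way.

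With finiteness in hand, the two cases are immediate. If $\lambda=0$, the whole bracket is multiplied by zero. If $\alpha_{\mathrm{gap}}=\alpha_{\mathrm{pref}}=0$, the bracket itself is zero. Either way $\mathcal{L}_{\mathrm{Twin}}=\mathcal{L}_{\mathrm{base}}$ as functions of all parameters, which is Eq.~\ref{eq:twin-reduction}.

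Finally, I would strengthen the equality from objective values to training dynamics, to justify "reduces exactly." The encoder parameters receive gradient $\nabla\mathcal{L}_{\mathrm{base}}+\lambda(\alpha_{\mathrm{gap}}\nabla\mathcal{L}_{\mathrm{gap}}+\alpha_{\mathrm{pref}}\nabla\mathcal{L}_{\mathrm{pref}})$, and the second term is zero under either condition. The auxiliary heads influence nothing else, because the actor uses a detached representation and the EMA targets depend only on encoder and predictor parameters. Hence the encoder, predictor, target, and policy updates coincide with those of TD-JEPA. Any property of TD-JEPA established under its original assumptions then transfers verbatim.
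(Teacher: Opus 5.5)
Your proposal is correct and takes the same route as the paper, which simply substitutes $\lambda=0$ or $\alpha_{\mathrm{gap}}=\alpha_{\mathrm{pref}}=0$ into Eq.~\ref{eq:twin-total} and notes that the auxiliary term vanishes. Your finiteness check (ruling out a $0\cdot\infty$ ambiguity for the finite pair buffer) and your extension to gradient updates go beyond the paper's one-line sketch but are sound, and the latter matches the paper's main-text claim that setting $\lambda=0$ exactly recovers the matched baseline.
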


\begin{proof}[Proof sketch]

If $\lambda=0$, the auxiliary term in
Eq.~\ref{eq:twin-total} vanishes.
If
$\alpha_{\mathrm{gap}}=\alpha_{\mathrm{pref}}=0$,
the weighted auxiliary loss is zero for any $\lambda$.
Eq.~\ref{eq:twin-reduction} follows immediately.

\end{proof}

Consequently, any TD-JEPA result that holds under the assumptions of the
original objective also applies to this exact-reduction case.
In particular, the theoretical results established for TD-JEPA in its
idealized analysis apply when the TwinJEPA auxiliary contribution is
disabled~\citep{bagatella2025td}.
This statement does not extend those guarantees to nonzero auxiliary weights.

\begin{proposition}[Reward-gap magnitude Bayes risk]
\label{prop:gap-information}

Let $Z=\phi(o_{\mathrm{anc}})$ and assume
$\mathbb{E}[(\Delta r)^2]<\infty$.
Among all measurable predictors
$g:\mathbb{R}^{d_{\phi}}\rightarrow\mathbb{R}$,
the population reward-gap risk is minimized by
\begin{equation}
\label{eq:bayes-gap-predictor}
g^{\star}(Z)
=
\mathbb{E}
\left[
\Delta r\mid Z
\right].
\end{equation}

Furthermore,
\begin{equation}
\label{eq:minimum-gap-risk}
\inf_g
\mathcal{L}_{\mathrm{gap}}(g)
=
\mathbb{E}
\left[
\operatorname{Var}
\left(
\Delta r\mid Z
\right)
\right].
\end{equation}

Thus, the minimum reward-gap risk is zero if and only if
$\Delta r$ is almost surely determined by $Z$ under
$\mathcal{P}_{\mathrm{pair}}$.

\end{proposition}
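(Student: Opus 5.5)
The argument is the standard $L^2$ projection (bias--variance) decomposition, applied on the fixed pair distribution $\mathcal{P}_{\mathrm{pair}}$ with $Z=\phi(o_{\mathrm{anc}})$ treated as a fixed random variable, since $\phi$ is held fixed. First note that $\mathbb{E}[(\Delta r)^2]<\infty$ guarantees that $g^\star(Z)=\mathbb{E}[\Delta r\mid Z]$ exists, is $\sigma(Z)$-measurable, and lies in $L^2$ by conditional Jensen. Since $g^\star$ is a measurable function of $Z$, it is an admissible predictor; by the Doob--Dynkin lemma it can be written as $g^\star(Z)$ for a Borel $g^\star:\mathbb{R}^{d_\phi}\to\mathbb{R}$.

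Next, fix any measurable $g$. If $\mathbb{E}[g(Z)^2]=\infty$, the risk is infinite, because $\Delta r\in L^2$, and nothing needs to be shown. Otherwise, expand
\[
\mathbb{E}\bigl[(g(Z)-\Delta r)^2\bigr]
=\mathbb{E}\bigl[(g(Z)-g^\star(Z))^2\bigr]
+2\,\mathbb{E}\bigl[(g(Z)-g^\star(Z))(g^\star(Z)-\Delta r)\bigr]
+\mathbb{E}\bigl[(g^\star(Z)-\Delta r)^2\bigr].
\]
The cross term vanishes by the tower property: conditioning on $Z$, the factor $g(Z)-g^\star(Z)$ is $\sigma(Z)$-measurable and square-integrable, and $\mathbb{E}[g^\star(Z)-\Delta r\mid Z]=0$. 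Cauchy--Schwarz guarantees integrability of the product, which is the only technical point requiring care. The last term equals $\mathbb{E}[\mathbb{E}[(\Delta r-\mathbb{E}[\Delta r\mid Z])^2\mid Z]]=\mathbb{E}[\operatorname{Var}(\Delta r\mid Z)]$. Hence $\mathcal{L}_{\mathrm{gap}}(g)=\mathbb{E}[(g(Z)-g^\star(Z))^2]+\mathbb{E}[\operatorname{Var}(\Delta r\mid Z)]$. Consequently $g^\star$ is a minimizer, unique up to $\mathcal{P}_{\mathrm{pair}}$-a.s.\ equality, and the infimum equals $\mathbb{E}[\operatorname{Var}(\Delta r\mid Z)]$, which establishes Eqs.~\ref{eq:bayes-gap-predictor} and~\ref{eq:minimum-gap-risk}.

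For the final equivalence, the minimum risk is zero iff $\operatorname{Var}(\Delta r\mid Z)=0$ a.s., since it is a nonnegative random variable with zero mean. This in turn holds iff $\mathbb{E}[(\Delta r-g^\star(Z))^2]=0$, i.e.\ $\Delta r=g^\star(Z)$ a.s., which means $\Delta r$ is almost surely a measurable function of $Z$. Conversely, if $\Delta r=h(Z)$ a.s.\ for some measurable $h$, then $\mathbb{E}[\Delta r\mid Z]=h(Z)$ and the conditional variance vanishes. No step is a real obstacle. The only care points are the measurability and integrability bookkeeping in the cross term, and stating that ``determined by $Z$'' means determined by a measurable function of $Z$. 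Note also that the statement concerns the unrestricted measurable class, not the parametric head $g_\theta$; for $g_\theta$ the bound $\mathcal{L}_{\mathrm{gap}}(g_\theta)\ge\mathbb{E}[\operatorname{Var}(\Delta r\mid Z)]$ follows immediately.
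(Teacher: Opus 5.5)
Your proposal is correct and follows the same route as the paper's proof. Both decompose $g(Z)-\Delta r=(g(Z)-m(Z))+(m(Z)-\Delta r)$ with $m(Z)=\mathbb{E}[\Delta r\mid Z]$, kill the cross term by the tower property, and identify the residual term with $\mathbb{E}[\operatorname{Var}(\Delta r\mid Z)]$. Your treatment of the case $\mathbb{E}[g(Z)^2]=\infty$, the Cauchy--Schwarz integrability of the cross term, and the Doob--Dynkin representation of $m$ make explicit the bookkeeping that the paper's proof leaves implicit.
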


\begin{proof}

Define
$
m(Z)
=
\mathbb{E}
\left[
\Delta r\mid Z
\right].
$
For any measurable predictor $g$,
$
g(Z)-\Delta r
=
g(Z)-m(Z)
+
m(Z)-\Delta r.
$
Expanding the squared error and taking expectations gives
\begin{align}
\mathcal{L}_{\mathrm{gap}}(g)
&=
\mathbb{E}
\left[
\left(
g(Z)-m(Z)
\right)^2
\right]
+
\mathbb{E}
\left[
\left(
m(Z)-\Delta r
\right)^2
\right]
\nonumber\\
&\quad
+
2\,
\mathbb{E}
\left[
\left(
g(Z)-m(Z)
\right)
\left(
m(Z)-\Delta r
\right)
\right].
\label{eq:gap-decomposition-expanded}
\end{align}

The cross term is zero because
$
\mathbb{E}
\left[
m(Z)-\Delta r
\mid Z
\right]
=
0.
$
Moreover,
$
\mathbb{E}
\left[
\left(
m(Z)-\Delta r
\right)^2
\right]
=
\mathbb{E}
\left[
\operatorname{Var}
\left(
\Delta r\mid Z
\right)
\right].
$

Therefore,
\begin{equation}
\label{eq:gap-risk-decomposition}
\mathcal{L}_{\mathrm{gap}}(g)
=
\mathbb{E}
\left[
\left(
g(Z)-m(Z)
\right)^2
\right]
+
\mathbb{E}
\left[
\operatorname{Var}
\left(
\Delta r\mid Z
\right)
\right].
\end{equation}

The first term is nonnegative and vanishes when
$g(Z)=m(Z)$ almost surely, proving
Eq.~\ref{eq:bayes-gap-predictor} and
\ref{eq:minimum-gap-risk}.
The minimum risk equals zero precisely when
$\operatorname{Var}(\Delta r\mid Z)=0$ almost surely.

\end{proof}

Proposition~\ref{prop:gap-information} provides a representation-level
interpretation of the implemented state-only reward-gap head.
For a fixed pair distribution, variation in reward-gap magnitude that cannot
be predicted from $Z$ contributes to the irreducible squared-error term.
Lowering this risk therefore requires $Z$ to preserve information predictive
of the local outcome-contrast magnitudes represented in
$\mathcal{P}_{\mathrm{pair}}$.
The proposition does not require all reward-relevant information in the
environment to be encoded.

\begin{proposition}[Affine reward-gap specialization]
\label{prop:linear-gap}

Let
$
\overline{Z}
=
\begin{bmatrix}
Z\\
1
\end{bmatrix},
$
and consider the affine predictor
$
g_w(Z)
=
w^{\top}\overline{Z}.
$
Assume
$\mathbb{E}[\|\overline{Z}\|_2^2]<\infty$
and that
$
\Sigma_{\overline{Z}}
=
\mathbb{E}
\left[
\overline{Z}\,
\overline{Z}^{\top}
\right]
$
is invertible.
Then the population least-squares minimizer is
\begin{equation}
\label{eq:linear-gap-solution}
w^{\star}
=
\Sigma_{\overline{Z}}^{-1}
\mathbb{E}
\left[
\overline{Z}\,\Delta r
\right],
\end{equation}
and the corresponding minimum risk is
\begin{equation}
\label{eq:linear-gap-risk}
\min_w
\mathcal{L}_{\mathrm{gap}}(w)
=
\mathbb{E}
\left[
(\Delta r)^2
\right]
-
\mathbb{E}
\left[
\overline{Z}\,\Delta r
\right]^{\top}
\Sigma_{\overline{Z}}^{-1}
\mathbb{E}
\left[
\overline{Z}\,\Delta r
\right].
\end{equation}

In particular, this affine-head risk is zero if and only if
$\Delta r=w^{\top}\overline{Z}$ almost surely for some $w$.

\end{proposition}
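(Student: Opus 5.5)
The argument is the standard population least-squares computation, carried out in the same spirit as the proof of Proposition~\ref{prop:gap-information}. We first check that all moments are finite. By Cauchy--Schwarz, $\mathbb{E}[\|\overline{Z}\,\Delta r\|_2]\leq(\mathbb{E}\|\overline{Z}\|_2^2)^{1/2}(\mathbb{E}[(\Delta r)^2])^{1/2}<\infty$. Here we use $\mathbb{E}[(\Delta r)^2]<\infty$, which is inherited from Proposition~\ref{prop:gap-information}; we state this explicitly as an assumption of the affine specialization. Write $b=\mathbb{E}[\overline{Z}\,\Delta r]$ and $\Sigma=\Sigma_{\overline{Z}}$. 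Expanding the square gives
\[
\mathcal{L}_{\mathrm{gap}}(w)=\mathbb{E}[(\Delta r)^2]-2w^{\top}b+w^{\top}\Sigma w ,
\]
a quadratic function of $w$.

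\textbf{Minimizer.} $\Sigma$ is a second-moment matrix, so it is positive semidefinite, and it is invertible by assumption. Hence it is positive definite and the quadratic is strictly convex. Setting the gradient $2\Sigma w-2b$ to zero gives the unique minimizer $w^\star=\Sigma^{-1}b$, which is Eq.~\ref{eq:linear-gap-solution}. Alternatively, completing the square gives
\[
\mathcal{L}_{\mathrm{gap}}(w)=(w-w^\star)^{\top}\Sigma(w-w^\star)+\mathbb{E}[(\Delta r)^2]-b^{\top}\Sigma^{-1}b ,
\]
which yields both the minimizer and Eq.~\ref{eq:linear-gap-risk} at once. This mirrors the decomposition in Eq.~\ref{eq:gap-risk-decomposition}.

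\textbf{Zero risk.} If the minimum risk is zero, then $\mathbb{E}[(w^{\star\top}\overline{Z}-\Delta r)^2]=0$, so $\Delta r=w^{\star\top}\overline{Z}$ almost surely. Conversely, if $\Delta r=w^{\top}\overline{Z}$ almost surely for some $w$, then $\mathcal{L}_{\mathrm{gap}}(w)=0$. Since the risk is nonnegative, the minimum is $0$. Substituting into the formula also shows that $b=\Sigma w$ and therefore $w=w^\star$ by invertibility, so the two characterizations agree.

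No step is genuinely hard. The only point that needs care is the bookkeeping of integrability and identifiability: invertibility of $\Sigma$ is what guarantees uniqueness, and the finite second moment of $\Delta r$ must be assumed explicitly so that the risk is finite.
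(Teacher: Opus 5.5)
Your proposal is correct and takes the same route as the paper's sketch. Both expand the population risk into the quadratic $\mathbb{E}[(\Delta r)^2]-2w^{\top}b+w^{\top}\Sigma_{\overline{Z}}w$ with $b=\mathbb{E}[\overline{Z}\,\Delta r]$, solve the normal equations $\Sigma_{\overline{Z}}w^\star=b$ by invertibility, substitute to get the minimum risk, and obtain the zero-risk characterization from the vanishing residual. Your additions make explicit what the sketch leaves implicit: positive definiteness of $\Sigma_{\overline{Z}}$ (so the stationary point is the unique global minimizer), the completed-square identity, and the assumption $\mathbb{E}[(\Delta r)^2]<\infty$ carried over from Proposition~\ref{prop:gap-information}.
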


\begin{proof}[Proof sketch]

Expanding the population objective gives
$
\mathcal{L}_{\mathrm{gap}}(w)
=
\mathbb{E}
\left[
(\Delta r)^2
\right]
-
2w^{\top}
\mathbb{E}
\left[
\overline{Z}\,\Delta r
\right]
+
w^{\top}
\Sigma_{\overline{Z}}
w.
$
Differentiating with respect to $w$ gives
$
\Sigma_{\overline{Z}}w^{\star}
=
\mathbb{E}
\left[
\overline{Z}\,\Delta r
\right].
$
Invertibility of $\Sigma_{\overline{Z}}$ yields
Eq.~\ref{eq:linear-gap-solution}.
Substitution gives Eq.~\ref{eq:linear-gap-risk}.
The minimum risk is zero exactly when the affine least-squares residual
vanishes almost surely.

\end{proof}

\begin{proposition}[Preference classification and local ranking]
\label{prop:preference-ranking}

Suppose the preference classifier is parameterized by a scalar logit
$f_{\theta}$ satisfying
\begin{equation}
\label{eq:preference-logit}
p_{\theta}
\left(
y=0\mid Z,a_A,a_B
\right)
=
\sigma
\left(
f_{\theta}(Z,a_A,a_B)
\right),
\end{equation}
where
$\sigma(t)=(1+\exp(-t))^{-1}$.
Define
$
t=1-2y\in\{-1,1\},
$
so that, by Proposition~\ref{prop:target-consistency},
$
t=\operatorname{sign}(\delta r^\star).
$
The local ranking error on the mined pair distribution is
\begin{equation}
\label{eq:ranking-error}
\mathcal{E}_{\mathrm{rank}}(\theta)
=
\mathbb{P}_{\zeta\sim\mathcal{P}_{\mathrm{pair}}}
\left[
t\,
f_{\theta}(Z,a_A,a_B)
\leq 0
\right].
\end{equation}

Then
\begin{equation}
\label{eq:ranking-error-bound}
\mathcal{E}_{\mathrm{rank}}(\theta)
\leq
\frac{
\mathcal{L}_{\mathrm{pref}}(\theta)
}{
\log 2
}.
\end{equation}

Consequently,
$\mathcal{L}_{\mathrm{pref}}(\theta)\rightarrow0$
implies
$\mathcal{E}_{\mathrm{rank}}(\theta)\rightarrow0$
on $\mathcal{P}_{\mathrm{pair}}$.

\end{proposition}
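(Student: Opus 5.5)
The plan is to prove a pointwise inequality between the zero--one ranking loss and the logistic loss, and then take expectations over $\mathcal{P}_{\mathrm{pair}}$. First I will rewrite the cross-entropy in margin form. By Eq.~\ref{eq:preference-logit}, $p_\theta(y=0\mid Z,a_A,a_B)=\sigma(f_\theta)$, and hence $p_\theta(y=1\mid Z,a_A,a_B)=1-\sigma(f_\theta)=\sigma(-f_\theta)$. With $t=1-2y$, both cases combine into $p_\theta(y\mid Z,a_A,a_B)=\sigma(t f_\theta)$. This gives
\[
-\log p_\theta(y\mid Z,a_A,a_B)=\log\bigl(1+e^{-t f_\theta(Z,a_A,a_B)}\bigr).
\]
The identification $t=\operatorname{sign}(\delta r^\star)$ comes directly from Proposition~\ref{prop:target-consistency}. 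That proposition applies because Assumption~\ref{ass:pair-validity} gives $\Delta r>0$, so $\delta r^\star\neq0$ and $t$ is well defined.

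The key step is the pointwise bound
\[
\mathbf{1}\{u\le 0\}\le \frac{\log(1+e^{-u})}{\log 2}\qquad\text{for all } u\in\mathbb{R}.
\]
If $u>0$, the left side is zero and the right side is positive. If $u\le0$, then $e^{-u}\ge1$, so $\log(1+e^{-u})\ge\log2$, and the right side is at least one. I will apply this bound with $u=t\,f_\theta(Z,a_A,a_B)$.

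Next I take expectations under $\zeta\sim\mathcal{P}_{\mathrm{pair}}$. By Eq.~\ref{eq:ranking-error}, the expectation of the left side is $\mathcal{E}_{\mathrm{rank}}(\theta)$, and by Eq.~\ref{eq:twin-pref}, the expectation of the right side is $\mathcal{L}_{\mathrm{pref}}(\theta)/\log2$. Both integrands are nonnegative and measurable, so monotonicity of expectation applies even if $\mathcal{L}_{\mathrm{pref}}$ is infinite, in which case the bound holds trivially. This yields Eq.~\ref{eq:ranking-error-bound}.

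The convergence statement then follows by squeezing. We have $0\le\mathcal{E}_{\mathrm{rank}}\le\mathcal{L}_{\mathrm{pref}}/\log2\to0$.

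I do not expect any step to be a real obstacle. The one point needing care is the sign convention: I must confirm that $y=0$ corresponds to $t=+1$ and probability $\sigma(f_\theta)$, consistent with Eq.~\ref{eq:pair-labels}. I must also handle the boundary case $f_\theta=0$, which the non-strict inequality in the indicator covers, and note that the argument treats ties in the logit as errors.
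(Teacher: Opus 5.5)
Your proposal is correct and follows the paper's proof essentially step for step. Both rewrite the per-sample cross-entropy as the logistic loss $\log(1+e^{-t f_\theta})$, use $e^{-t f_\theta}\ge 1$ on the error event to obtain $\ell_{\mathrm{pref}}\ge \log 2\,\mathbf{1}\{t f_\theta\le 0\}$, and take expectations over $\mathcal{P}_{\mathrm{pair}}$; your extra details (deriving $p_\theta(y\mid Z,a_A,a_B)=\sigma(t f_\theta)$, handling infinite loss, and the squeeze for the limit) are sound points the paper leaves implicit.
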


\begin{proof}

Under Eq.~\ref{eq:preference-logit}, the per-sample binary
cross-entropy is
\begin{equation}
\label{eq:preference-logistic-loss}
\ell_{\mathrm{pref}}
=
\log
\left(
1+
\exp
\left(
-t f_{\theta}(Z,a_A,a_B)
\right)
\right).
\end{equation}

A ranking error occurs when
$
t f_{\theta}(Z,a_A,a_B)\leq0.
$
On this event,
$
\exp
\left(
-t f_{\theta}(Z,a_A,a_B)
\right)
\geq1,
$
and therefore
$
\ell_{\mathrm{pref}}
\geq
\log2.
$

Hence
$
\ell_{\mathrm{pref}}
\geq
\log2\,
\mathbf{1}
\left\{
t f_{\theta}(Z,a_A,a_B)\leq0
\right\}.
$
Taking expectations over $\mathcal{P}_{\mathrm{pair}}$ gives
$
\mathcal{L}_{\mathrm{pref}}(\theta)
\geq
\log2\,
\mathcal{E}_{\mathrm{rank}}(\theta),
$
which proves Eq.~\ref{eq:ranking-error-bound}.

\end{proof}

Proposition~\ref{prop:preference-ranking} gives a direct interpretation of
the preference-classification objective on the retained comparison
distribution.
Low cross-entropy constrains the probability that the learned scorer reverses
the reward-induced ordering of a sampled action pair.
The result concerns ranking on $\mathcal{P}_{\mathrm{pair}}$ and does not
imply correct ordering for arbitrary states, actions, or goals outside that
distribution.

\begin{remark}[Interpretation and scope]
\label{rem:theory-scope}

Propositions~\ref{prop:target-consistency}--\ref{prop:preference-ranking}
characterize the additional supervision introduced by TwinJEPA.
Reward-gap regression promotes decodability of the magnitudes of local
outcome contrasts in the mined comparison distribution, whereas preference
classification directly penalizes incorrect ordering of the compared
actions.
The two objectives are derived from the same selected signed contrast
$\delta r^\star$ but use complementary information:
$\Delta r=|\delta r^\star|$ captures magnitude, while $y$ captures ordering.
They are therefore related but not mathematically interchangeable.
The theory motivates these distinct roles without implying that their
combination must outperform either component in every domain.
This distinction is consistent with the component ablations reported in the
main paper.

Proposition~\ref{prop:reduction} transfers the theoretical properties of
TD-JEPA only in the exact-reduction case where the auxiliary contribution is
zero.
For nonzero auxiliary weights, TwinJEPA optimizes a different joint
objective, and the analysis above does not establish preservation of
TD-JEPA's original downstream guarantees.

The conclusions are also conditional on the pair distribution
$\mathcal{P}_{\mathrm{pair}}$.
Changing the local matching rule, reward relabeling procedure, action
threshold, or reward-gap threshold changes this distribution and therefore
the supervision characterized by the propositions.
The theory does not establish that any particular pair-mining rule is
optimal.

Accordingly, the practical effect of TwinJEPA can depend on offline-data
coverage, the informativeness of the mined comparisons, their alignment with
the downstream objective, and the relative weighting of the auxiliary
losses.
These factors are examined empirically through the task-level, component,
weight-sensitivity, and cross-domain analyses rather than treated as
consequences of the theoretical results above.

\end{remark}

\section{Summary of Additional Results}
\label{sec:additional-results-summary}

The additional experiments complement the main evaluation along several
dimensions, including task-level heterogeneity, auxiliary-objective
contributions, cross-domain transfer, optimization sensitivity, and
observation modality.
Together with the theoretical analysis in
Section~\ref{sec:theory-twin}, these results support a deliberately limited
interpretation.
TwinJEPA introduces supervision that promotes reward-gap decodability and
reward-consistent local action ranking on the mined comparison distribution,
while the downstream effect of this supervision varies with the task, domain,
optimization configuration, and comparison distribution.

\paragraph{The observed effects are heterogeneous across tasks.}

The AntMaze decomposition shows that TwinJEPA does not produce a uniform
increase across all evaluation goals.
Under the shared Fixed-HP protocol, the AntMaze-4 mean changes from
$50.1\%$ to $52.4\%$ ($\Delta{+}{2.3}$), with non-negative mean
differences on all four evaluated domains.
At the task level, however, positive and negative differences coexist.
For example, medium-navigate Task~3 improves by $17$ percentage points,
whereas several other goals change little or decrease.

The same type of heterogeneity appears in the continuous-control domains.
On Walker, the larger positive differences occur on \texttt{stand} and
\texttt{run}, while \texttt{walk} and \texttt{spin} change little.
On Cheetah, the largest differences occur on the forward
\texttt{walk} and \texttt{run} tasks, while
\texttt{walk\_backward} is nearly unchanged.
Quadruped differs in that all four evaluated objectives have positive mean
differences, although their magnitudes vary substantially. These patterns are descriptively consistent with larger gains occurring on
some tasks that retain substantial performance headroom, but the experiments
do not establish headroom as the causal mechanism.
Offline coverage, reward structure, optimization dynamics, and the
informativeness of the mined comparison pairs may all contribute to the
observed task-level differences.

\paragraph{The auxiliary objectives provide distinct and domain-dependent supervision.}

The component ablations show that reward-gap regression and preference
classification are not interchangeable.
On AntMaze medium-navigate, preference-only has the highest mean among the
evaluated component variants, whereas on medium-stitch gap-only performs
better.
On Cheetah, the five-seed ablation subset gives mean returns of
$671.4$ for Full TwinJEPA, $654.5$ for Gap-only, and $629.8$ for
Pref-only.
On Quadruped, Full TwinJEPA reaches $599.8$, compared with $554.4$ for
Gap-only and $548.4$ for Pref-only.

These results do not imply a universal synergy between the two losses.
Instead, they show that their relative contributions depend on the domain and
the evaluated configuration.
This empirical behavior is consistent with the theoretical distinction in
Section~\ref{sec:theory-twin}.
Reward-gap regression uses the magnitude of the selected reward contrast,
whereas preference classification supervises only its induced ordering.
The two targets are constructed from the same local comparison but emphasize
different information.

The auxiliary weight also affects the observed result.
AntMaze, Cheetah, and Quadruped sensitivity analyses show that changing
$\lambda$ can alter the magnitude of the downstream effect.
We therefore retain $\lambda=0.1$ as a shared evaluation configuration
rather than interpreting it as a per-domain optimum.

\paragraph{The effect transfers beyond the primary AntMaze benchmark.}

The continuous-control evaluations provide evidence that the TwinJEPA
objective is not restricted to the AntMaze setting.
Under the shared Fixed-HP protocol, Walker changes from $774.3$ to $782.7$
($\Delta{+}{8.4}$), Cheetah from $638.0$ to $671.7$
($\Delta{+}{33.7}$), and Quadruped from $559.4$ to $599.8$
($\Delta{+}{40.4}$).

These mean effects should be interpreted together with their matched-seed
uncertainty.
The paired analyses in the main paper show substantial uncertainty for
Walker and Cheetah, while Quadruped has positive differences on all five
matched seeds.
Because the Quadruped comparison contains only five seeds, the exact
two-sided test has limited resolution, and we therefore emphasize effect
magnitude and directional consistency rather than a binary significance
claim.

The cross-domain results are consequently most useful as evidence that the
same comparative objective can be applied across different control
morphologies and reward structures, while the size and consistency of the
result remain domain dependent.

\paragraph{Maze2D provides a boundary-case navigation evaluation.}

Maze2D removes the high-dimensional locomotion dynamics of AntMaze and
provides a simpler planar navigation setting.
Under the shared Fixed-HP configuration, the five-task mean success rate
changes from $9.2\%$ to $14.4\%$ ($\Delta{+}{5.2}$).
T1 and T2 remain at zero success for both methods, while T3, T4, and T5 show
positive differences of $12$, $6$, and $8$ percentage points,
respectively.

We interpret this result conservatively.
It shows that the auxiliary objective can produce a positive mean difference
in a simplified navigation setting, but it does not show that all goals
benefit or that the auxiliary losses resolve failures on tasks where both
methods remain unsuccessful.
Maze2D therefore serves as an additional boundary case rather than as a
central performance claim.

\paragraph{Pixel observations provide an additional modality extension.}

The pixel experiments evaluate the same comparative objective when the agent
receives cached RGB frame stacks instead of proprioceptive state vectors.
The primary domain-level comparisons yield differences of
$+25.3$ on Walker, $-8.5$ on the complete ten-seed Cheetah comparison,
and $+5.6$ on Quadruped.
The unweighted average of the three domain means changes from $685.5$ to
$693.0$, although this aggregate is descriptive because the domains use
different numbers of matched seeds.

Cheetah is particularly sensitive to seed composition.
The complete ten-seed comparison favors TD-JEPA, whereas the fixed random
five-seed diagnostic favors TwinJEPA by $+4.3$.
We therefore do not use the pixel experiments to claim modality-independent
performance improvement.
Their main role is to show that the same training objective can be applied to
visual observations while preserving the domain- and seed-dependent behavior
seen elsewhere in the evaluation.

\paragraph{Overall interpretation.}

The extended evaluation supports a narrower conclusion than uniform
superiority across benchmarks.
TwinJEPA augments temporal predictive representation learning with two forms
of supervision derived from locally matched action comparisons.
The theoretical analysis shows what these objectives encourage on the mined
pair distribution, while the experiments show that their downstream effects
are heterogeneous.

Positive mean differences appear across the primary AntMaze evaluation and
several continuous-control settings, but neither the task-level effects nor
the component contributions are uniform.
Some tasks regress, some auxiliary variants outperform the full objective,
the magnitude of the result changes with $\lambda$, and the pixel Cheetah
comparison is sensitive to seed composition.

Taken together, these results support the claim that local reward-gap and
preference supervision can provide useful information beyond transition-wise
temporal prediction in some offline goal-conditioned control settings.
They do not establish that the supervision is universally beneficial, that
the full objective is always preferable to its individual components, or that
the theoretical properties of TD-JEPA are preserved unchanged at nonzero
auxiliary weights.
The empirical results are therefore best understood as characterizing where
and under what evaluated conditions the additional local-comparison
supervision is useful.

\paragraph{Limitations and future directions.}
The effectiveness of TwinJEPA depends on the quality and coverage of the
mined local comparison distribution.
The current implementation uses fixed pair-construction rules and
benchmark-specific neighborhood parameters, which may not identify the most
informative comparisons across all domains.
Future work will investigate more adaptive pair construction strategies,
automatic selection of informative comparison neighborhoods, and broader
evaluation across diverse offline-control and embodied-learning settings.

\end{document}